%% file: CompressiveSensing-LectureNotes.tex
\documentclass{svproc}
\usepackage{url}
\usepackage{xcolor}
\newcommand{\corr}[1]{#1}
\usepackage{amsmath,amssymb}
\usepackage{tikz}
\usepackage{subcaption}
\usepackage{algpseudocode, algorithm, algorithmicx}

\input{Shortcuts}

\begin{document}
\mainmatter              
\title{Compressive Sensing -- Introduction and Relations to Deep Learning}
\titlerunning{Compressive Sensing}  
%
\author{Hung-Hsu Chou\inst{1}, Johannes Maly\inst{2,3}, Holger Rauhut\inst{2,3}}
%
\authorrunning{Chou, Maly, Rauhut} 
%
%

\institute{
University of Pittsburgh, Department of Mathematics, USA,
\and
Ludwig-Maximilians-Universität München, Department of Mathematics, Germany,
\and
Munich Center for Machine Learning, Germany}




\maketitle              

\begin{abstract}
Compressive sensing predicts that sparse vectors (signals) can be recovered from a \corr{small number} of linear measurements via efficient algorithms. This finding, \corr{which dates back two decades,} has triggered a paradigm shift in signal processing and initiated many developments both \corr{in practical signal-processing applications, such as medical imaging, radar, and astronomy,} and on the theoretical side.
More recently, seminal connections to the field of deep learning have \corr{led} to further advances \corr{in} the field, such as the use of unrolled neural networks for sparse recovery and the discovery that common training algorithms (variants of gradient descent) favor sparsity in overparameterized scenarios -- the so-called implicit bias phenomenon. This article gives an introduction to compressive sensing and outlines connections to deep learning. In particular, we will discuss generalization for neural networks generated by unrolling sparse recovery algorithms and implicit regularization for gradient descent applied to learning simplified linear neural networks.
\keywords{compressive sensing, sparse recovery, \corr{low-rank} matrix recovery, deep learning, gradient descent}
\end{abstract}
\section{Introduction}
\label{sec:Introduction}

The field of compressive sensing was initiated
with seminal articles by Donoho \cite{donoho2006} and Cand{\'e}s, Romberg and Tao \cite{carota06}. While the traditional pathway of signal processing (before the advent of compressive sensing) was to first acquire (measure) a signal (or image) fully and then compress it in order to store it efficiently, the basic idea of compressive sensing is to try to directly acquire a compressed version of the signal using potentially much fewer measurements. In other words, the goal is to exploit the compressibility of a signal in order to be able to measure it more efficiently.
The discovery that this is actually possible has \corr{led} to an explosion of research \corr{activity} in the 2000s and 2010s, \corr{concerning} both practical and theoretical aspects. \corr{Methods} of compressive sensing are \corr{now} used effectively in various applications such as magnetic resonance imaging \cite{lustig2007sparse,hammernik2018}, radar \cite{herman2009high,ender2010compressive,Baidoo2021,dora17}, astronomy \cite{Wiaux2009,starck2010sparse,carillo2014}\corr{,} and many more. The mathematical analysis of compressive sensing methods is challenging and has \corr{led} to significant new developments in a number of mathematical fields\corr{,} including optimization and high-dimensional probability, in particular, nonasymptotic random matrix theory\corr{;} see e.g.\ \cite{foucart2013mathematical} for an overview and introduction.

The main ingredients for compressive sensing are \begin{itemize}
    \item \textbf{sparsity}: the signals to be recovered need to be sparse (within a suitable basis) or obey some other form of low complexity, for instance, \corr{low-rank structure};
    \item \textbf{efficient reconstruction}: fast reconstruction algorithms can be used for recovering a sparse signal from the measurement; most prominently $\ell_1$-minimization;
    \item \textbf{randomness}: provably effective measurement schemes are random.
\end{itemize}
The perhaps surprising ingredient is \corr{the} randomness of the measurements. While this is not necessary per se, so far all measurement schemes for which recovery could be shown with a minimal number of measurements are random. For deterministic measurement constructions, the proof techniques are not as powerful as for random constructions, and \corr{proving} optimal guarantees for deterministic constructions \corr{remains an open problem}.

The compressive sensing problem can be mathematically formulated as solving an underdetermined linear system for a sparse solution, see also the next chapter. The popular reconstruction method of $\ell_1$-minimization searches for the solution of the linear system of minimal $\ell_1$-norm.

Compressive sensing is potentially useful in applications if it is expensive, difficult, \corr{time-consuming,} or even impossible to take a lot of measurements and if it can be expected that the signals to be recovered are sparse in a suitable sense. This is often (but not always!) the case in signal processing tasks.

In recent years deep learning methods have entered the field of signal and image processing. A number of approaches combine these with compressive sensing methods, or rather enhance compressive sensing methods with machine learning techniques. 
Unrolling methods \cite{gregor2010learning,behboodi2022compressive} consider iterative algorithms for variational problems aiming at sparse recovery and interpret each iteration as one layer of a neural network. This point of view allows \corr{sparse reconstruction to be incorporated} into a machine learning pipeline where parameters of the iterative algorithm (such as the sparsity basis) can be learned using training data. Such approaches often considerably improve compressive sensing methods in practical applications.

Another direction where deep learning and compressive sensing meet on a rather theoretical level is the mathematical analysis of the so-called implicit bias phenomenon in deep learning. Modern machine learning models are often overparameterized meaning that the number of parameters exceeds the number of training samples by far. In this situation there are usually many neural networks interpolating the data exactly -- leading to zero training loss. Nevertheless, in practice learned networks often show good generalization to new data \cite{zhang2017understanding}, which is in contrast to classical statistics that would predict overfitting. In the situation of many global minimizer the employed optimization algorithm and its hyperparameters such as initialization have significant influence on the computed solution. Understanding such implicit regularization (also called implicit bias) of the training algorithm -- usually variants of (stochastic) gradient descent -- is key to the understanding of modern machine learning methods \cite{neyshabur2015,gunasekar_characterizing_2020}. While many fundamental questions on implicit bias are open, first theoretical insights could be achieved on simplified models, in particular, on gradient flow and gradient descent applied to linear neural networks \cite{gunasekar2017implicit,arora2019implicit,chou2023more,chou2024gradient,even2023s,pesme2023saddle}
as well as on two-layer networks with nonlinear activation function (such as ReLU), see e.g.\ \cite{boursier2024} and references therein.
One of these simplified models starts with a linear underdetermined inverse problem and replaces the unknown vector by a Hadamard product of several vectors, which can be viewed as a diagonal linear network. The corresponding \corr{squared} loss is then minimized via gradient flow (GF) -- an abstraction of gradient descent. It can be shown that\corr{,} for small initialization\corr{,} the limit of \corr{GF} is close to the $\ell_1$-minimizer of the \corr{linear} system. This reveals a close connection to compressive sensing, and shows\corr{,} in particular, that the implicit bias of gradient flow applied to linear networks is towards sparse solutions (for small initialization).

This observation leads to the conjecture that \corr{gradient-descent-type algorithms} favor \corr{solutions} of low complexity (such as sparsity) also for more realistic (nonlinear) neural network models. \corr{Occam's razor, the principle} that among many possible solutions nature favors the simplest ground-truth model\corr{,} suggests that a bias towards \corr{simple or sparse solutions} leads to good generalization in practice. Therefore, an implicit regularization of training algorithms towards low-complexity (sparse) solutions gives an intuitive explanation of why modern machine learning systems often generalize well. Of course, understanding this in detail still requires \corr{considerably more research}.

\medskip

This article is structured as follows. Section~\ref{sec:Basics-CS} provides an introduction to the basics of compressive sensing. It covers the notion of sparsity, recovery algorithms\corr{,} and random measurement matrices. Section~\ref{sec:Unrolled} is concerned with an iterative algorithm for sparse recovery, its interpretation as \corr{an} unrolled neural network\corr{,} and corresponding generalization bounds for trained unrolled networks. Section~\ref{sec:implicit-reg} introduces \corr{implicit regularization} in deep learning, covers \corr{the} theory \corr{of} training linear diagonal networks via gradient flow\corr{,} and \corr{discusses} the connections to compressive sensing.

\subsection{Notation}
\label{sec:notation}

Let us introduce some notation that is used throughout this article. For $0<p<\infty$, the $\ell_p$-norm of a vector $\bx \in \R^d$ is defined as $\|\bx\|_p = (\sum_{j=1}^d |x_j|^p)^{1/p}$ (note that $\|\cdot\|_p$ is only a quasi-norm for $0<p<1$). The operator norm of a matrix $\bA\in\mathbb{R}^{m\times d}$ from $\ell_p$ to $\ell_q$ is defined as $\|\bA\|_{p \to q} = \sup_{\|\bx\|_p = 1} \|\bA\|_q$. For a matrix $\bA \in \R^{m \times}$ with singular values $\sigma_{1} \geq \sigma_{2} \geq \sigma_{\min\{m,n\}}$, the Schatten $S_p$-norm is defined as $\|\bA\|_{S_p} = \|(\sigma_j)_{j=1}^{\min\{m,n\}}\|_p$. The special case $p=1$ gives the nuclear norm $\|\bA\|_* = \|(\sigma_j)\|_1 = \sum_{j} \sigma_j$. Moreover, $\|\bA\|_{S_\infty} = \|\bA\|_{2 \to 2}$ and $\|\bA\|_{S_2} = \|\bA\|_F = \sqrt{\operatorname{tr}(\bA^T\bA)}$ is the Frobenius norm (where $\operatorname{tr}(\bA)$ denotes the trace of $\bA$). The trace (or Frobenius) inner product is defined as $\langle \bA,\bB\rangle = \operatorname{tr}(\bA \bB^T)$ for matrices $\bA$, $\bB$ of matching dimensions.
For a symmetric matrix $\bA$ we write $\bA \succeq 0$ if it is positive semidefinite.
 
For $d \in \N$ we define $[d]:=\{1,2,\hdots, d\}$. For a vector $\bx \in \R^d$ and a subset $S \subset [d]$, the vector $\bx_S \in \R^d$ is the vector $\bx$ where the entries outside the set $S$ are set to zero, i.e.,
$(\bx_S)_j = \bx_j$ for $j \in S$ and $\bx_j = 0$ for $j \in S^c = [d] \setminus S$. For a set $T$, $|T|$ denotes its cardinality. The notation $a \lesssim b$ means $a \le Cb$, for an absolute constant $C > 0$ and $a \simeq b$ means $a \lesssim b$ and $b \lesssim a$. 
The probability of an event $A$ is denoted by $\Pr(A)$. The Gaussian distribution with mean $\mu$ and variance $\sigma^2$ is denoted $\calN(\mu,\sigma^2)$.

\section{Basics of Compressive Sensing}
\label{sec:Basics-CS}
 
Many practical problems in science can be well described in terms of linear inverse problems. In this formulation, a quantity of interest $\bx_\star \in \R^d$ has to be recovered from $m$ linear observations $\by \in \R^m$ of the form
\begin{align}
    \label{eq:CS}
    \by = \bA\bx_\star,
\end{align}
where $\bA \in \R^{m\times d}$ models the measurement process.\footnote{The content of this section can be generalized to complex-valued vector spaces in a straightforward way. We restrict ourselves here to $\R^d$ for the sake of simplicity.} Due to physical and economical constraints, one often encounters the situation $m \ll d$, i.e., the number of measurements is smaller than the signal length. This means that recovery of $\bx_\star$ from $\by$ is an ill-posed problem. At the same time, most real-world signals exhibit forms of intrinsic structure that is of lower complexity than the mere signal length suggests and that
may be leveraged in the reconstruction. One way to formalize this is the concept of sparsity.

\begin{definition}
\label{def:Sparsity}
The {\em support} of a vector  $\bx \in \R^d$ is 
the index set of its nonzero entries, i.e.,
$$
\supp(\bx) := \{ j \in [d]: x_j \not= 0 \}.
$$
The vector $\bx \in \R^d$ is called {\em $s$-sparse} if at most $s$ of its entries are nonzero,
i.e., if \label{eq:L0}
$$
\|\bx\|_0 := |\supp(\bx)| \le s.
$$ 
We denote the set of $s$-sparse vectors in $\R^d$ by $\Sigma_s^d$.
\end{definition}

The following lemma shows that, under mild assumptions on $\bA$, \eqref{eq:CS} can still be solved when $m < d$ provided that $\bx_\star$ is known to be $s$-sparse a priori.

\begin{lemma}[{\cite[Lemma 3.1]{cohen2009compressed}}] \label{lem:MinimalMeasurements}
	Given $\bA \in \R^{m\times d}$, the following properties are equivalent:
	\begin{enumerate}
		\item[(i)] Every $s$-sparse vector $\bx \in \R^d$ is the unique $s$-sparse solution of $\bA\bz = \bA\bx$, that is, if $\bA\bx = \bA\bz$ and both $\bx$ and $\bz$ are $s$-sparse, then $\bx = \bz$.
		\item[(ii)] The nullspace $\ker(\bA)$ does not contain any $2s$-sparse vector other than the zero vector, i.e., $\ker(\bA) \cap \Sigma_{2s}^d = \{ \0 \}$
		\item[(iii)] Every set of $2s$ columns of $\bA$ is linearly independent.
	\end{enumerate}
\end{lemma}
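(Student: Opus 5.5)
We prove the cycle of implications (i) $\Rightarrow$ (ii) $\Rightarrow$ (iii) $\Rightarrow$ (i). The only structural fact we need is that every $2s$-sparse vector splits as a difference of two $s$-sparse vectors, and conversely every difference of two $s$-sparse vectors is $2s$-sparse. Indeed, $\supp(\bx-\bz)\subset \supp(\bx)\cup\supp(\bz)$ has cardinality at most $2s$.

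For (i) $\Rightarrow$ (ii), take $\bv \in \ker(\bA)\cap\Sigma_{2s}^d$. Let $S$ consist of $s$ indices of $\supp(\bv)$, or all of them if $|\supp(\bv)|\le s$, and set $T=\supp(\bv)\setminus S$. Then $\bv=\bv_S+\bv_T$ with $|S|,|T|\le s$. Put $\bx=\bv_S$ and $\bz=-\bv_T$. Both are $s$-sparse and satisfy $\bA\bx-\bA\bz=\bA\bv=\0$, so (i) forces $\bv_S=-\bv_T$. Since $S$ and $T$ are disjoint, this gives $\bv=\0$.

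For (ii) $\Rightarrow$ (iii), let $T\subset[d]$ with $|T|=2s$ and suppose $\sum_{j\in T} c_j\ba_j=\0$, where $\ba_j$ denote the columns of $\bA$. The vector $\bc\in\R^d$ with entries $c_j$ on $T$ and zeros elsewhere is then a $2s$-sparse element of $\ker(\bA)$. By (ii) it is zero, so the columns indexed by $T$ are linearly independent. If $2s>d$, statement (iii) has to be read as applying to all column subsets of size at most $2s$; the same argument covers that case. Since (iii) is stated for sets of exactly $2s$ columns, one should also note that linear independence passes to subsets, so it holds for all sets of at most $2s$ columns.

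For (iii) $\Rightarrow$ (i), let $\bx,\bz$ be $s$-sparse with $\bA\bx=\bA\bz$. Then $\bv=\bx-\bz$ is supported on some $T$ with $|T|\le 2s$, and $\sum_{j\in T} v_j\ba_j=\0$. Enlarge $T$ to a set of exactly $2s$ indices (possible when $2s\le d$). Linear independence of those columns then gives $v_j=0$ for all $j$, hence $\bx=\bz$.

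There is no real obstacle here. The only point requiring care is the bookkeeping around the boundary case $2s>d$ and the passage between sets of size exactly $2s$ and sets of size at most $2s$, and the argument above addresses both.
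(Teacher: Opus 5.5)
Your proof is correct and follows the same cycle (i) $\Rightarrow$ (ii) $\Rightarrow$ (iii) $\Rightarrow$ (i) as the paper, using the same two ideas: a $2s$-sparse kernel vector splits into two disjointly supported $s$-sparse pieces, and a vanishing combination of columns is the same thing as a sparse kernel vector. Your extra bookkeeping addresses a point the paper's proof passes over silently: you enlarge the support to exactly $2s$ indices, and you observe that for $2s>d$ condition (iii) is vacuous as literally stated and must be read as ``at most $2s$ columns''.
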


\begin{proof}
    $(i) \Rightarrow (ii)$: Assume $(i)$ holds and $\bz \in \ker(\bA) \cap \Sigma_{2s}^d$. Then $\bz$ can be written as $\bz = \bz_1 - \bz_2$ where $\bz_1,\bz_2 \in \Sigma_s^d$ and $\supp(\bz_1) \cap \supp(\bz_2) = \emptyset$. As $\bA\bz_1 - \bA\bz_2 = \bA\bz = \0$, $(i)$ implies $\bz_1 = \bz_2$. But as $\bz_1$ and $\bz_2$ have disjoint supports we obtain $\bz_1 = \bz_2 = \bz = \0$.\\
    $(ii) \Rightarrow (iii)$: Assume $(ii)$ and that $\bz \in \R^d$ encodes a linear combination of $2s$ columns of $\bA$ which yields zero, i.e., $\bz \in \Sigma_{2s}^d$ with $\bA\bz = \0$. Property $(ii)$ implies $\bz = \0$.\\
    $(iii) \Rightarrow (i)$: Assume $(iii)$ and that $\bx,\bz \in \Sigma_s^d$ yield the same measurements $\bA\bx = \bA\bz$. As $\bx - \bz \in \Sigma_{2s}^d$ and $\bA(\bx - \bz) = \0$, property $(iii)$ implies that $\bx = \bz$.
\end{proof}

According to Lemma \ref{lem:MinimalMeasurements} (iii) at least $m=2s$ observations are necessary to allow the unique identification of arbitrary $s$-sparse signals. It turns out that $m=2s$ is also sufficient to construct a suitable $\bA \in \R^{m\times d}$, for any $d \ge 2s$, see \cite[Theorem 2.14]{foucart2013mathematical}. If $\bA$ in \eqref{eq:CS} satisfies the condition in Lemma \ref{lem:MinimalMeasurements} (iii), $\bx_\star$ can thus be recovered by solving
\begin{align}
    \label{eq:L0min}
    \bx_\star = \argmin_{\bz \in \R^d} \| \bz \|_0
    \qquad \text{s.t. } \quad \bA\bz = \by.
\end{align}
This approach has two major shortcomings though. First of all, the program in \eqref{eq:L0min} is NP-hard to solve in general \cite{natarajan1995sparse}. (Intuitively, it requires to solve the restricted linear system for all $\binom{d}{s}$ 
possible support choices.) Second, only assuming the condition in Lemma \ref{lem:MinimalMeasurements} (iii) will not guarantee stable and robust recovery of $\bx_\star$, i.e., consistency of recovery in light of sparsity and measurement defects. Let us make this more precise. We call a vector $\bx$ \emph{compressible} if its error of best $s$-term approximation decays quickly in $s$.

\begin{definition}\label{def:errorbeststerm}
For $p>0$, the {\em $\ell_p$-error of best $s$-term approximation} to a vector $\bx \in \R^d$ is defined by
\label{NotBApprox}
$$
\sigma_s(\bx)_p := \inf
\big\{
\|\bx-\bz\|_p, \; \bz \in \R^d \mbox{ is $s$-sparse}
\big\}.
$$
\end{definition}

In practical applications, we do not expect that $\bx_\star$ is perfectly $s$-sparse, but rather that $\sigma_s(\bx_\star)_p$ is small, for $s \ll d$ and suitably chosen $p$. Furthermore, we hardly ever encounter perfect observations as in \eqref{eq:CS}. A more realistic model is given by
\begin{align}
    \label{eq:CSnoise}
    \by = \bA\bx_\star + \Eta,
\end{align}
where $\Eta \in \R^m$ models unknown noise. A reliable reconstruction method $\calA$ should satisfy that the reconstruction error $\| \calA(\by,\bA) - \bx_\star \|_p$ is proportional to the sparsity defect as measured by $\sigma_s(\bx_\star)_p$ (stability) and the observation noise as measured by $\| \Eta \|_2$ (robustness). 

A central insight of compressive sensing is that the above shortcomings can be mended by relaxing the program in \eqref{eq:L0min} to a convex program (leading to tractable algorithms) and posing stronger assumptions on $\bA$.

\begin{figure}[!t]
    	\centering
    	\begin{tikzpicture}
            \coordinate[label=right:{\large $\bA\bz = \bA\bx$}] (B) at (1,2.2);
            
            \draw[very thick,->] (-3,0) -- (3,0);
            \draw[very thick,->] (0,-3) -- (0,3);
            
            \draw[dashed,thick] (2,0) to[out=180,in=90] (0,-2) to[out=90,in=0] (-2,0) to[out=0,in=-90] (0,2) to[out=-90,in=180] cycle;
            \draw[thick] (2,0) -- (0,-2) -- (-2,0) -- (0,2) -- cycle;
            \draw[thick,dotted] (0,0) circle (52pt);

            \filldraw (0,2) circle (2pt);
            \filldraw (0.6,1.75) circle (2pt);
            
            \draw[thick] (-2.5,3) -- (2.5,1);
            
        \end{tikzpicture}
    	\caption{Interplay between $\ell_p$-balls and sparsity of the respective minimizer \cite{maly2019recovery}.}
    	\label{fig:BP}
\end{figure}

\subsection{Basis pursuit and the null space property}

When revisiting the definition of $\|\bx\|_0$, one notices that, although it is abusively called $\ell_0$-norm, it is neither a norm nor a quasinorm. The convention of using the norm notation 
comes from the observation that 
\begin{align}
    \label{eq:lpnormLimit}
\|\bx\|_p^p := \sum_{j=1}^d |x_j|^p \underset{p \to 0}{\longrightarrow} 
\sum_{j=1}^d {\bf 1}_{ \{x_j \not= 0 \}}
= | \{ j \in [\N]: x_j \not= 0 \} |,
\end{align}
where we define ${\bf 1}_{ \{x_j \not= 0 \}} = 1$ if $x_j \neq 0$, and ${\bf 1}_{ \{x_j \not= 0 \}} = 0$ if $x_j = 0$.
In other words the quantity $\|\bx\|_0$
is the limit of the $p$th power of the $\ell_p$-quasinorm of $\bx$ as $p$ decreases to zero. Due to \eqref{eq:lpnormLimit} we can view $\| \bx \|_p^p$, for $p > 0$, as a natural relaxation of $\| \bx \|_0$. For $p \in (0,1)$, this leads to a non-convex function which is still difficult to optimize. By choosing $p=1$, we obtain a convex relaxation of \eqref{eq:L0min}, 
\begin{align}
\label{eq:BP}
    \min_{\bz \in \R^d} \| \bz \|_1
    \qquad \text{s.t. } \quad \bA\bz = \by,
\end{align}
which is often called \emph{basis pursuit} and can be solved algorithmically in polynomial time. 

Figure \ref{fig:BP} illustrates why minimizing $\ell_p$-norms, for $p \le 1$, yields sparse solutions in contrast to other norms such as $\ell_2$. The figure further suggests that the validity of \eqref{eq:BP} as relaxation of \eqref{eq:L0min} depends on the kernel geometry of $\bA$, since alignment of the solution space $\{\bA\bz = \by\}$ with the facets of the $\ell_1$-ball would lead to non-sparse minimizers. This is confirmed and formalized by the so-called null space property (NSP), \cite{doel03,grni07,codade09}, \cite[Definition 4.1]{foucart2013mathematical}.

\begin{definition}[Null space property] \label{def:NSP}
    A matrix $\bA \in \R^{m\times d}$ is said to satisfy the \emph{null space property} relative to a set $S \subset [d]$ if
	\begin{align*}
		\pnorm{\bz_S}{1} < \pnorm{\bz_{S^c}}{1} \quad \text{for all } \bz \in \ker(\bA) \setminus \{ \0 \}.
	\end{align*}
	It is said to satisfy the null space property of order $s$ if it satisfies the null space
    property relative to any set $S \subset [d]$ with $|S| \le s$.
\end{definition}

\begin{remark}
    The NSP of order $s$ implies property $(ii)$ in Lemma \ref{lem:MinimalMeasurements}.
\end{remark}

It turns out that $\bA$ satisfies the NSP if and only if \eqref{eq:L0min} and \eqref{eq:BP} are equivalent, i.e., the NSP fully characterizes when \eqref{eq:L0min} can be solved by its convex relaxation \cite{codade09,grni07},  \cite[Theorem 4.5]{foucart2013mathematical}

\begin{theorem}
\label{thm:NSP}
	Given a matrix $\bA \in \R^{m\times d}$ , every $s$-sparse vector $\bx \in \R^d$ is the unique minimizer of \eqref{eq:BP} if and only if $\bA$ satisfies the null space property of order $s$.
\end{theorem}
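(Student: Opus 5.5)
The plan is to prove the two implications separately, reducing everything to a statement about fixed supports. In fact I will show the slightly sharper equivalence: for a fixed $S \subset [d]$, every vector $\bx$ with $\supp(\bx) \subset S$ is the unique minimizer of \eqref{eq:BP} with $\by = \bA\bx$ if and only if $\bA$ satisfies the null space property relative to $S$. Taking all $S$ with $|S| \le s$ then gives the theorem, since a vector is $s$-sparse exactly when its support lies in such a set.

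\emph{Sufficiency.} Assume the NSP relative to $S$, and let $\bx$ have $\supp(\bx)\subset S$. Any other feasible point of \eqref{eq:BP} has the form $\bz = \bx - \bv$ with $\bv \in \ker(\bA)\setminus\{\0\}$. Since $\bx_{S^c}=\0$, the triangle inequality gives
\begin{align*}
\pnorm{\bx}{1} \le \pnorm{\bx - \bv_S}{1} + \pnorm{\bv_S}{1} < \pnorm{\bx-\bv_S}{1} + \pnorm{\bv_{S^c}}{1} = \pnorm{\bx-\bv}{1} = \pnorm{\bz}{1}.
\end{align*}
The strict inequality is the NSP. The final equality holds because $\bx - \bv_S$ is supported in $S$ while $-\bv_{S^c}$ is supported in $S^c$, so their $\ell_1$-norms add. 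Hence $\bx$ is the unique minimizer.

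\emph{Necessity.} Assume uniqueness holds for every vector supported on $S$, and take $\bv \in \ker(\bA)\setminus\{\0\}$. The vector $\bv_S$ is supported on $S$, so it is the unique minimizer of \eqref{eq:BP} with data $\bA\bv_S$. Because $\bA\bv = \0$, we have $\bA(-\bv_{S^c}) = \bA\bv_S$, so $-\bv_{S^c}$ is feasible. It also differs from $\bv_S$: otherwise disjointness of supports forces $\bv_S = \bv_{S^c} = \0$, hence $\bv=\0$, a contradiction. Uniqueness then yields the strict inequality $\pnorm{\bv_S}{1} < \pnorm{\bv_{S^c}}{1}$, which is exactly the NSP relative to $S$.

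No step is a real obstacle. The only points needing care are the decomposition $\bx - \bv = (\bx-\bv_S) - \bv_{S^c}$ into disjointly supported parts in the sufficiency step, and the verification in the necessity step that the competitor $-\bv_{S^c}$ is genuinely different from $\bv_S$, which is what makes the inequality strict.
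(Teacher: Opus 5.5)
Your proposal is correct and follows essentially the same route as the paper: reduction to a fixed support $S$, the competitor $-\bv_{S^c}$ against $\bv_S$ for necessity, and the triangle-inequality chain for sufficiency. Your sufficiency inequality is the paper's chain written with $\bv$ instead of $\bz$, since $\bx-\bv_S=\bz_S$.
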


\begin{proof}
    Consider first one fixed support set $S \subset [d]$ with $|S| \le s$. Assume that every $\bx \in \R^d$ with $\supp(\bx) \subset S$ is the unique minimizer of \eqref{eq:BP}. Hence, for any $\bv \in \ker(\bA) \setminus \{ \0 \}$, the vector $\bv_S$ is the unique minimizer of $\pnorm{\bz}{1}$ subject to $\bA\bz = \bA\bv_S$. This implies $\pnorm{\bv_S}{1} < \pnorm{\bv_{S^c}}{1}$ as $-\bv_{S^c} \neq \bv_S$ and by $\bA(\bv_S + \bv_{S^c}) = \bA\bv = \0$ one has $\bA(-\bv_{S^c}) = \bA\bv_S$.
    
    Conversely, let us assume that the NSP relative to $S$ holds. Given $\bx \in \R^d$ with $\supp(\bx) \subset S$ and $\bz \in \R^d$ such that $\bx \neq \bz$ and $\bA\bx = \bA\bz$, define $\bv = \bx - \bz \in \ker(\bA)$. We get that
    \begin{align*}
        \pnorm{\bx}{1} &\le \pnorm{\bx - \bz_S}{1} + \pnorm{\bz_S}{1} = \pnorm{\bv_S}{1} + \pnorm{\bz_S}{1} \\
        &< \pnorm{\bv_{S^c}}{1} + \pnorm{\bz_S}{1} = \pnorm{\bz}{1},
    \end{align*}
    which shows that $\bx$ is the unique minimizer of \eqref{eq:BP}. The claim follows by varying over all possible support sets.
\end{proof}

Theorem \ref{thm:NSP} shows that the NP-hardness of \eqref{eq:L0min} can be circumvented. To guarantee stable and robust recovery of $\bx_\star$ under the noisy observation model \eqref{eq:CSnoise}, the NSP has to be strengthened \cite{grni07,codade09}, \cite[Definition 4.17]{foucart2013mathematical}.

\begin{definition}[{Stable and robust NSP}] \label{def:srNSP}
    A matrix $\bA \in \R^{m\times d}$ is said to satisfy the \emph{stable and robust null space property} with constants $0 < \rho < 1$ and $\tau > 0$ relative to a set $S \subset [d]$ if
	\begin{align*}
		\pnorm{\bz_S}{1} < \rho \pnorm{\bz_{S^c}}{1} + \tau \pnorm{\bA\bz}{2} \quad \text{for all } \bz \in \R^d.
	\end{align*}
	It is said to satisfy the stable and robust null space property of order $s$ if it satisfies the stable and robust null space property with constants $0 < \rho < 1$ and $\tau > 0$ relative to any set $S \subset [d]$ with $|S| \le s$.
\end{definition}

To recover $\bx_\star$ under the noisy observation model in \eqref{eq:CSnoise}, one usually relaxes the constraints in \eqref{eq:BP}. For $\eta \ge 0$, consider
\begin{align} \label{eq:BPD}
    \min_{\bz \in \R^N} \pnorm{\bz}{1}, \quad \text{subject to } \pnorm{\bA\bz - \by}{2} \le \eta,
\end{align}
which is commonly known as \emph{quadratically constrained basis pursuit}. As the following theorem shows, it allows stable and robust recovery in polynomial time if $\eta$ is suitably chosen in dependence on the noise level $\pnorm{\Eta}{2}$, see also \cite{codade09}, \cite[Theorem 4.19]{foucart2013mathematical}.

\begin{theorem}\label{thm:srNSP}
    Suppose that $\bA \in \R^{m\times d}$ satisfies the stable and robust NSP of order $s$ with constants $0 < \rho < 1$ and $\tau > 0$. Then, for any $\bx \in \R^d$ with measurements \eqref{eq:CSnoise}, a solution $\hat{\bx}$ of \eqref{eq:BPD} with $\eta \ge \pnorm{\Eta}{2}$ fulfills
    \begin{align} \label{eq:srNSP}
        \pnorm{\bx - \hat{\bx}}{1} \le \frac{2(1+\rho)}{1-\rho} \sigma_s(\bx)_1 + \frac{4\tau}{1-\rho} \eta. 
    \end{align}
\end{theorem}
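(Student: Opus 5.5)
The plan is to reduce the statement to a deterministic inequality for an arbitrary vector $\bz$, namely
\begin{align*}
\pnorm{\bz - \bx}{1} \le \frac{1+\rho}{1-\rho}\bigl(\pnorm{\bz}{1} - \pnorm{\bx}{1} + 2\pnorm{\bx_{S^c}}{1}\bigr) + \frac{2\tau}{1-\rho}\pnorm{\bA(\bz-\bx)}{2},
\end{align*}
valid for every $\bz,\bx \in \R^d$ and every $S$ with $|S|\le s$. This is then applied to $\bz = \hat\bx$. We choose $S$ as the index set of the $s$ largest entries of $\bx$ in absolute value, so that $\pnorm{\bx_{S^c}}{1} = \sigma_s(\bx)_1$.

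To prove the inequality, set $\bv = \bz - \bx$. The first ingredient is the triangle inequality, used as in the proof of Theorem \ref{thm:NSP}:
\begin{align*}
\pnorm{\bz}{1} = \pnorm{\bx_S + \bv_S}{1} + \pnorm{\bx_{S^c} + \bv_{S^c}}{1} \ge \pnorm{\bx_S}{1} - \pnorm{\bv_S}{1} + \pnorm{\bv_{S^c}}{1} - \pnorm{\bx_{S^c}}{1}.
\end{align*}
Rearranging and writing $\pnorm{\bx_S}{1} = \pnorm{\bx}{1} - \pnorm{\bx_{S^c}}{1}$ gives
\begin{align*}
\pnorm{\bv_{S^c}}{1} \le \pnorm{\bz}{1} - \pnorm{\bx}{1} + \pnorm{\bv_S}{1} + 2\pnorm{\bx_{S^c}}{1}.
\end{align*}
The second ingredient is the stable and robust NSP (Definition \ref{def:srNSP}), applied with the $\le$ version of the inequality: $\pnorm{\bv_S}{1} \le \rho\pnorm{\bv_{S^c}}{1} + \tau\pnorm{\bA\bv}{2}$.

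Substituting the NSP bound into the rearranged inequality gives
\begin{align*}
(1-\rho)\pnorm{\bv_{S^c}}{1} \le \pnorm{\bz}{1} - \pnorm{\bx}{1} + 2\pnorm{\bx_{S^c}}{1} + \tau \pnorm{\bA\bv}{2}.
\end{align*}
Then $\pnorm{\bv}{1} = \pnorm{\bv_S}{1} + \pnorm{\bv_{S^c}}{1} \le (1+\rho)\pnorm{\bv_{S^c}}{1} + \tau\pnorm{\bA\bv}{2}$. Inserting the bound on $\pnorm{\bv_{S^c}}{1}$ and using $\frac{1+\rho}{1-\rho}\tau + \tau = \frac{2\tau}{1-\rho}$ yields the displayed inequality. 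This constant bookkeeping is the only delicate step.

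It remains to specialize to $\bz = \hat\bx$. Since $\eta \ge \pnorm{\Eta}{2}$, the vector $\bx$ itself is feasible for \eqref{eq:BPD}. Minimality of $\hat\bx$ therefore gives $\pnorm{\hat\bx}{1} \le \pnorm{\bx}{1}$, so the term $\pnorm{\bz}{1} - \pnorm{\bx}{1}$ is nonpositive and can be dropped. For the measurement term, the triangle inequality gives
\begin{align*}
\pnorm{\bA(\hat\bx - \bx)}{2} \le \pnorm{\bA\hat\bx - \by}{2} + \pnorm{\by - \bA\bx}{2} \le \eta + \pnorm{\Eta}{2} \le 2\eta.
\end{align*}
Plugging these two facts into the inequality gives
\begin{align*}
\pnorm{\bx - \hat\bx}{1} \le \frac{2(1+\rho)}{1-\rho}\sigma_s(\bx)_1 + \frac{4\tau}{1-\rho}\eta,
\end{align*}
which is exactly \eqref{eq:srNSP}. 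No step is a genuine obstacle.
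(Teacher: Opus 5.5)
Your argument is correct and matches the standard proof in the source the paper cites for this theorem, \cite[Theorem 4.19]{foucart2013mathematical}; the paper itself gives no proof. There, too, the intermediate inequality is the known equivalent reformulation of the stable and robust NSP, specialized to $\hat{\bx}$ using feasibility of $\bx$ and $\pnorm{\bA(\hat{\bx}-\bx)}{2}\le 2\eta$, and reading the NSP with $\le$ is right, since the strict inequality in Definition~\ref{def:srNSP} fails at $\bz=\0$.
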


There remain three fundamental questions. How many observations are required to have the (stable and robust) NSP? Which types of matrices $\bA$ achieve these bounds? And, finally, how do we solve \eqref{eq:BPD} in practice? In the following two sections, we will provide concise answers.

\subsection{Random Measurement Matrices}

Although null space properties allow a rigorous characterization of the solvability of \eqref{eq:BP} and \eqref{eq:BPD}, it is often convenient to work with a stronger recovery condition. The \emph{restricted isometry property} (RIP) of a matrix $\bA \in \R^{m\times d}$ has been introduced in \cite{candes2006near} under the name "uniform uncertainty principle" and implies the stable and robust NSP, see also \cite[Definition 6.1]{foucart2013mathematical}.

\begin{definition}[{Restricted Isometry Property}] \label{def:RIP}
	A matrix $\bA \in \R^{m\times d}$ satisfies the restricted isometry property of order $s$ ($s$-RIP) with constant $0 < \delta_s < 1$ if, for all $\bz \in \Sigma_s^d$,
	 \begin{align} \label{eq:RIP2}
	 (1-\delta_s) \Vert \bz \Vert_2^2 \le \Vert \bA\bz \Vert_2^2 \le (1+\delta_s) \Vert \bz \Vert_2^2.
	 \end{align}
\end{definition}


The RIP implies stable and robust recovery via $\ell_1$-minimization.
Several versions of such theorems have been shown over the years \cite{cata05,carota06-1,ca08,cawaxu10,foucart2013mathematical,cazh14}. 
The constants in these results have been improved subsequently and we state the result with the best available constant $1/\sqrt{2}$ \cite{cazh14}, which is actually optimal \cite{dagr09}.

\begin{theorem}
\label{thm:RIP}
    If $\bA \in \R^{m\times d}$ satisfies the $2s$-RIP with 
    $\delta_{2s} < 1/\sqrt{2} \approx 0.71$ then
    $\bA$ satisfies the stable and robust NSP with constants $\rho = c_\delta$ and $\tau = \sqrt{s} c'_\delta$ where $0 < c_\delta < 1$ and $c'_\delta > 0$ only depend on $\delta=\delta_{2s}$.
\end{theorem}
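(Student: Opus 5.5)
The plan is to prove the stable and robust NSP for the worst-case index set. Fix $\bv \in \R^d$ and let $S_0$ be an index set of the $s$ largest entries of $\bv$ in absolute value. It suffices to show
\[
\pnorm{\bv_{S_0}}{2} \le \frac{\rho}{\sqrt{s}}\,\pnorm{\bv_{S_0^c}}{1} + c'_\delta \pnorm{\bA\bv}{2},
\]
with $\rho = c_\delta < 1$. Multiplying by $\sqrt{s}$ and using $\pnorm{\bv_S}{1} \le \pnorm{\bv_{S_0}}{1} \le \sqrt{s}\,\pnorm{\bv_{S_0}}{2}$ for any $|S| \le s$ then gives the $\ell_1$-version of the stable and robust NSP. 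This holds because $\pnorm{\bv_{S_0^c}}{1} \le \pnorm{\bv_{S^c}}{1}$, as $S_0$ carries the largest entries. It yields $\tau = \sqrt{s}\,c'_\delta$. (The strict inequality in Definition~\ref{def:srNSP} follows after slightly enlarging $\tau$, or trivially when $\bv=\0$.)

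A standard block decomposition with $\delta_{2s}$ only gives $\delta_{2s} < \sqrt 2 - 1$ or similar. To reach the sharp constant $1/\sqrt{2}$, I would follow the Cai--Zhang approach and use the "sparse representation of a polytope" lemma. Any $\bu$ with $\pnorm{\bu}{\infty} \le \alpha$ and $\pnorm{\bu}{1} \le s\alpha$ is a convex combination $\bu = \sum_i \lambda_i \bu_i$ of $s$-sparse vectors $\bu_i$ with $\supp(\bu_i) \subset \supp(\bu)$, $\pnorm{\bu_i}{1} = \pnorm{\bu}{1}$ and $\pnorm{\bu_i}{\infty} \le \alpha$. Hence
\[
\pnorm{\bu_i}{2} \le \sqrt{\pnorm{\bu_i}{1}\pnorm{\bu_i}{\infty}} \le \sqrt{s}\,\alpha.
\]
I would prove this lemma by induction on the support size, peeling off extreme points.

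The main argument applies the lemma to $\bu = \bv_{S_0^c}$. The threshold $\alpha = \max\{\pnorm{\bv_{S_0^c}}{1}/s,\ \pnorm{\bv_{S_0}}{1}/s\}$ works because entries of $\bv_{S_0^c}$ are bounded by the smallest entry on $S_0$, which is at most the average $\pnorm{\bv_{S_0}}{1}/s$. If the first quantity dominates, the desired inequality follows easily. Otherwise we may assume $\pnorm{\bv_{S_0^c}}{1} \le \pnorm{\bv_{S_0}}{1}$, so $\alpha = \pnorm{\bv_{S_0}}{1}/s$.

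Write $\bv_{S_0^c} = \sum_i \lambda_i \bu_i$. Each vector $\bv_{S_0} + \mu \bu_i$ and $\beta\bv_{S_0} + \bu_i$ is then $2s$-sparse. Following Cai--Zhang, choose $\mu,\beta$ suitably (e.g. $\beta_i = \tfrac12$ or $\mu = \sqrt{2}-1$) and expand the identity
\[
\sum_i \lambda_i \Big\| \bA\big(\bv_{S_0} + \mu\bu_i\big) \Big\|_2^2 - \ldots = \text{cross terms involving } \langle \bA\bv_{S_0}, \bA\bv\rangle,
\]
using $\sum_i \lambda_i \bu_i = \bv_{S_0^c}$. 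The lower and upper RIP bounds from Definition~\ref{def:RIP} are applied to the $2s$-sparse vectors, and $|\langle \bA\bv_{S_0},\bA\bv\rangle| \le \sqrt{1+\delta}\,\pnorm{\bv_{S_0}}{2}\pnorm{\bA\bv}{2}$ controls the remaining term. The outcome is a quadratic inequality in $\pnorm{\bv_{S_0}}{2}$ of the form
\[
\big(1/\sqrt{2} - \delta\big)\,\pnorm{\bv_{S_0}}{2}^2 \lesssim \delta\,\pnorm{\bv_{S_0}}{2}\,\frac{\pnorm{\bv_{S_0^c}}{1}}{\sqrt{s}} + \sqrt{1+\delta}\,\pnorm{\bv_{S_0}}{2}\,\pnorm{\bA\bv}{2}.
\]
Dividing by $\pnorm{\bv_{S_0}}{2}$ gives the desired estimate with explicit $c_\delta, c'_\delta$, and $c_\delta<1$ exactly when $\delta<1/\sqrt2$.

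I expect the main obstacle to be the algebra of the Cai--Zhang identity: choosing the weights so that the coefficient on $\pnorm{\bv_{S_0}}{2}^2$ becomes positive precisely for $\delta<1/\sqrt2$, and checking that the resulting $\rho$ is below $1$. The polytope lemma itself is routine by comparison.
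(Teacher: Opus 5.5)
The paper gives no proof of this statement; it only cites Cai--Zhang. Your plan is exactly the argument behind that citation: the sparse representation of the polytope $\{\pnorm{\bu}{\infty}\le\alpha,\ \pnorm{\bu}{1}\le s\alpha\}$, combined with the convex-combination identity with weight $\tfrac12$ and $\mu=\sqrt2-1$. Its core computation is sound, but there is a genuine gap in your case split.

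You claim the case $\pnorm{\bv_{S_0^c}}{1}\ge\pnorm{\bv_{S_0}}{1}$ ``follows easily''. It does not.
\begin{itemize}
\item For the $\ell_1$-form you ultimately need, this inequality only gives $\pnorm{\bv_S}{1}\le\pnorm{\bv_{S_0}}{1}\le\pnorm{\bv_{S^c}}{1}$, i.e.\ $\rho=1$.
\item For your intermediate $\ell_2$-form it gives nothing. $\pnorm{\bv_{S_0}}{2}$ can be comparable to $\pnorm{\bv_{S_0}}{1}$ (one dominant entry on $S_0$), while the right-hand side is only $\rho\,\pnorm{\bv_{S_0^c}}{1}/\sqrt{s}$.
\end{itemize}
The only genuinely trivial regime is $\pnorm{\bv_{S_0^c}}{1}\ge\pnorm{\bv_{S_0}}{1}/\rho$. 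Vectors with $\pnorm{\bv_{S_0}}{1}<\pnorm{\bv_{S_0^c}}{1}<\pnorm{\bv_{S_0}}{1}/\rho$ are covered by neither of your cases, and there the RIP is indispensable to get $\rho<1$.

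The repair is to run the same identity in that case too, now with $\alpha=\pnorm{\bv_{S_0^c}}{1}/s$. This $\alpha$ is admissible because $\pnorm{\bv_{S_0^c}}{\infty}\le\pnorm{\bv_{S_0}}{1}/s\le\alpha$. Write $a=\pnorm{\bv_{S_0}}{2}$, $b=\pnorm{\bv_{S_0^c}}{1}/\sqrt{s}$, $e=\pnorm{\bA\bv}{2}$, $\beta_i=\bv_{S_0}+\mu\bu_i$ and $\bar\beta=\sum_j\lambda_j\beta_j$. The identity
\[
\sum_i\lambda_i\pnorm{\bA(\bar\beta-\tfrac12\beta_i)}{2}^2=\tfrac14\sum_i\lambda_i\pnorm{\bA\beta_i}{2}^2,
\]
together with the RIP, gives
\[
K_0\,a^2\le K_1\sum_i\lambda_i\pnorm{\bu_i}{2}^2+\mu(1-\mu)\sqrt{1+\delta}\,a\,e,
\qquad K_0=(1+\delta)\mu(1-\mu)-\tfrac{\delta}{2},\quad K_1=\tfrac{\mu^2\delta}{2}.
\]
For $\mu=\sqrt2-1$ one has $K_0-K_1=(3\sqrt2-4)(1-\sqrt2\,\delta)$.
\begin{itemize}
\item In your case, $\sum_i\lambda_i\pnorm{\bu_i}{2}^2\le\pnorm{\bv_{S_0^c}}{1}\,\alpha\le ab$. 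Dividing by $a$ gives $a\le (K_1/K_0)\,b+\mu(1-\mu)\sqrt{1+\delta}\,e/K_0$.
\item In the omitted case, the bound is $b^2$. Solving the quadratic gives $a\le\sqrt{K_1/K_0}\,b+\mu(1-\mu)\sqrt{1+\delta}\,e/K_0$.
\end{itemize}
Hence $\rho=\sqrt{K_1/K_0}$ works in both cases, and it is $<1$ exactly when $\delta<1/\sqrt2$. The threshold only comes out of these exact constants; your displayed inequality with $(1/\sqrt2-\delta)$ and $\lesssim$ would not deliver it.

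A minor point: for $\bv=\0$ the strict inequality in the definition reads $0<0$, so it is not ``trivial'' there. The definition should use $\le$.
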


Whereas $\bA$ cannot fully preserve the geometry (distances and angles) of $\R^d$ if $m \ll d$, it acts almost like an isometry when restricted to sparse vectors and thus preserves the geometry of $\Sigma_s^d$. 

In contrast to the noiseless case \eqref{eq:CS} with $s$-sparse ground-truth, $m = 2s$ observations do not suffice in \eqref{eq:CSnoise} to guarantee stable reconstruction as in \eqref{eq:srNSP}. Viewing the problem through the lens of Gelfand widths of $\ell_p$-balls, one can show that
\begin{align} \label{eq:LowerBound}
    m \ge Cs \log \left( \frac{ed}{s} \right)
\end{align}
linear measurements are necessary to obtain uniform error bounds as in \eqref{eq:srNSP} \cite{cohen2009compressed}. Here, $C > 0$ denotes an absolute constant that is independent of $s$ and $d$. When comparing \eqref{eq:LowerBound} with Lemma \ref{lem:MinimalMeasurements}, we see that this worsens the necessary number of observations by a multiplicative factor which is logarithmic in $d$. So far there exists no deterministic method to construct RIP matrices achieving the bound in \eqref{eq:LowerBound}. The best known methods \cite{bodifokoku11} require $m \ge C s^{2-\varepsilon}$, for some very small $\varepsilon >0$, i.e., the number of observations scales (approximately) quadratically. 

In order to overcome this so-called quadratic bottleneck one passes to random measurement matrices. This allows to use very powerful and versatile tools from the field of high-dimensional probability, see e.g. \cite{vershynin2010introduction}. 
Gaussian random matrices represent a very common model, which can be analyzed in comparably simple ways in the context of compressive sensing.
An \corr{$m \times d$ matrix $\bA$} whose entries
$a_{i,j}$ are independent mean-zero and variance one Gaussian random variables ($a_{i,j} \sim \mathcal{N}(0,1)$) is called a Gaussian random matrix. The following result on the restricted isometry property of Gaussian random matrices has been proved in many variants and generalizations, see e.g.\ \cite{candes2006near,Baraniuk2008,mepato09,foucart2013mathematical}. It implies in particular that \eqref{eq:LowerBound} is also a sufficient estimate for the required number of measurements.

\begin{theorem}
\label{thm:RIP_Gaussian}
    Let $\bA \in \R^{m\times d}$ be a random draw of a Gaussian matrix and assume that for $\delta, \varepsilon \in (0,1)$,
    \begin{equation}\label{m:bound}
    m \geq C \delta^{-2} \left(s \log(\corr{ed}/s) + \log(1/\varepsilon)\right),
    \end{equation}
    where $C>0$ is some absolute constant. Then with probability at least $1-\varepsilon$ the matrix $\frac{1}{\sqrt{m}} \bA$ satisfies the $s$-RIP with constant $\delta_s \leq \delta$.
%
%
\end{theorem}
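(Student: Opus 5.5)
The plan is the standard three-step argument: concentration for a single vector, an $\varepsilon$-net on each $s$-dimensional coordinate sphere, and a union bound over all supports.

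\textbf{Step 1 (single vector).} Fix $\bz\in\R^d$ with $\|\bz\|_2=1$. By rotation invariance, $\bA\bz$ is a standard Gaussian vector in $\R^m$, so $\|\bA\bz\|_2^2$ is a $\chi^2_m$ variable, i.e.\ a sum of $m$ i.i.d.\ squares of standard Gaussians. A Bernstein inequality for subexponential variables (equivalently, the Laplace transform of $\chi^2$) gives
\begin{equation*}
\Pr\Bigl(\bigl|\tfrac1m\|\bA\bz\|_2^2-1\bigr|>t\Bigr)\le 2\exp(-c m t^2), \qquad t\in(0,1),
\end{equation*}
for an absolute constant $c>0$.

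\textbf{Step 2 (one support).} Fix $S\subset[d]$ with $|S|=s$ and set $\bB=\frac1{\sqrt m}\bA_S-$restricted to the coordinates in $S$, so the claim on this support is $\|\bB^T\bB-\mathrm{Id}\|_{2\to2}\le\delta$. Take a $\rho$-net $\mathcal N$ of the unit sphere of $\R^S$ with $\rho=1/4$; a volumetric argument gives $|\mathcal N|\le(1+2/\rho)^s=9^s$. Apply Step 1 with $t=\delta/2$ and a union bound over $\mathcal N$. Then use the standard net argument for the symmetric matrix $\bM=\bB^T\bB-\mathrm{Id}$: write $\bz=\bu+\bw$ with $\bu\in\mathcal N$ and $\|\bw\|_2\le\rho$, and expand $\langle \bM\bz,\bz\rangle$. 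This gives
\begin{equation*}
\|\bM\|_{2\to2}\le \max_{\bu\in\mathcal N}|\langle \bM\bu,\bu\rangle|+(2\rho+\rho^2)\|\bM\|_{2\to2},
\end{equation*}
hence $\|\bM\|_{2\to2}\le \frac{\delta/2}{1-9/16}\le\delta$ up to adjusting constants; choosing $t=\delta/4$ if necessary makes this clean. The failure probability for the fixed $S$ is at most $2\cdot9^s\exp(-cm\delta^2/16)$.

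\textbf{Step 3 (all supports).} There are $\binom ds\le(ed/s)^s$ supports, and every $s$-sparse vector lies in one of them. The total failure probability is therefore at most
\begin{equation*}
2\exp\bigl(s\log(ed/s)+s\log 9-c'm\delta^2\bigr).
\end{equation*}
This is at most $\varepsilon$ once $m\ge C\delta^{-2}(s\log(ed/s)+\log(2/\varepsilon))$. Since $\log(ed/s)\ge1$, the term $s\log 9$ is absorbed into $s\log(ed/s)$. Since $\log(2/\varepsilon)\le \log 2+\log(1/\varepsilon)$ and the $\log 2$ is absorbed likewise, we obtain \eqref{m:bound}.

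\textbf{Main obstacle.} The only genuinely probabilistic input is the subexponential concentration in Step 1. I would prove it via the moment generating function $\mathbb E e^{\lambda(g^2-1)}=e^{-\lambda}(1-2\lambda)^{-1/2}\le e^{2\lambda^2}$ for $|\lambda|\le1/4$, followed by Chernoff's bound optimized over $\lambda$. The rest is bookkeeping of constants. The delicate point there is the net-to-sphere step, which I would handle through the quadratic-form bound above rather than bounding $\|\bB\bz\|_2$ directly.
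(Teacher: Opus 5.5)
Your proposal is correct and follows the same three-step scheme as the paper's sketch: concentration for a fixed vector, then a net plus union bound, then extension from the net to all of $\Sigma_s^d\cap\mathbb S^{d-1}$. The only difference is minor: the paper takes a $\delta/4$-net and uses a perturbation argument costing a factor two in $\delta$, whereas you take a fixed $1/4$-net per support and the quadratic-form bound for $\bB^T\bB-\mathrm{Id}$, which avoids an extra $s\log(1/\delta)$ term (harmless anyway, since it is absorbed by $\delta^{-2}s\log(ed/s)$).

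Note that with $t=\delta/2$ your inequality only gives $\|\bB^T\bB-\mathrm{Id}\|_{2\to2}\le 8\delta/7$. The choice $t=\delta/4$, which gives $4\delta/7$ and matches the $\delta^2/16$ in your failure probability, is therefore required rather than optional.
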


\begin{proof}[sketch] Several proofs of this theorem are known. The \corr{probably} simplest one
    consists of three main steps which form a prototypical argument. 
    First, it follows from the Bernstein inequality that the matrix $\frac{1}{\sqrt{m}} \bA$ fulfills
    \begin{align} \label{eq:RIPproof1}
        \Pr\left( \big| \pnorm{\bA\bz}{2}^2 - \pnorm{\bz}{2}^2 \big| \ge \varepsilon \pnorm{\bz}{2}^2 \right) \le 2e^{-cm\varepsilon^2},
    \end{align}
    for some absolute constant $c > 0$ and any fixed $\bz \in \R^d$, see e.g.\ \cite[Proposition 5.16]{vershynin2010introduction}.
    Then one constructs a $\frac{\delta}{4}$-net of $\Sigma_s^d \cap \mathbb S^{d-1}$ and applies a union bound to extend \eqref{eq:RIPproof1} to hold for all points in the net simultaneously with $\varepsilon = \frac{\delta}{2}$.\footnote{As \eqref{eq:RIP2} is invariant under $\ell_2$-norm scaling one can restrict the argument to $\bz \in \mathbb S^{d-1}$.}
    Finally, one uses stability of \eqref{eq:RIP2} against small perturbations to extend the result to $\Sigma_s^d \cap \mathbb S^{d-1}$ by paying a factor two in $\delta$. A concise version of this proof can be found in \cite{Baraniuk2008,foucart2013mathematical}.
\end{proof}

An explicit value of the constant $C>0$ in \eqref{m:bound} can be given, see e.g.\ \cite[Remark 9.28]{foucart2013mathematical}. The theorem generalizes to so-called subgaussian distributions for the entries $a_{i,j}$ including Bernoulli random variables \cite{mepato09,foucart2013mathematical}.

In combination, Theorems \ref{thm:srNSP}, \ref{thm:RIP}, and \ref{thm:RIP_Gaussian} show that robust recovery of approximately $s$-sparse signals via tractable optimization is possible if $m \gtrsim s \log(ed/s)$ and $\bA$ is suitably chosen. 

In most practical applications Gaussian matrices are undesirable because they are dense and unstructured and therefore imply high computational costs in storing and processing. This motivates the study of structured random matrices \cite{ra10}, where the structure is often motivated by physical (or other) constraints on the measurement process and the (limited) randomness allows for an (almost) optimal analysis of the required number of measurements.
In particular, Theorem~\ref{thm:RIP_Gaussian} can be extended to various types of structured random matrices at the cost of additional logarithmic factors in \eqref{eq:LowerBound}, see e.g.\ \cite{cata06,dora17,ra10,krmera14,ro09}.
For the purpose of this exposition, we limit ourselves on one type of structured random matrices, namely sampling matrices associated to bounded orthonormal systems.

Let $\calD \subset \R^n$ be endowed with a probability measure $\nu$ and assume that 
$\Phi = \{\phi_1,\hdots,\phi_d\}$ is an orthonormal system of complex-valued 
functions on $\calD$, that is, for $j,k \in [d]$,
\begin{equation}\label{def:orthonormality}
\int_{\calD} \phi_j(\bt) \overline{\phi_k(\bt)} d\nu(\bt) = \delta_{j,k} = \left\{ \begin{array}{cl} 0 & \mbox{ if } j \neq k,\\
1 & \mbox{ if } j= k.  
\end{array}
\right.
\end{equation}
We consider functions on $\calD$ that can be expanded into the orthonormal system $\Phi$ as
\[
f(\bt) = \sum_{j=1}^d x_j \Phi(\bt).
\]
The function $f$ is called $s$-sparse if $\|\bx\|_0 \leq s$. Given a sequence of points $\bt_1, \hdots, \bt_m \in \calD$ and corresponding samples of $f$,
\[
y_\ell = f(\bt_l) = \sum_{j=1}^d x_j \phi_j(\bt_\ell), 
\]
the task is to reconstruct $f$ (or equivalently $\bx$) from the vector $\by$ of samples. Introducing the sampling matrix 
$\bA \in \C^{m \times d}$ with entries
\begin{equation}\label{A:sample}
    A_{\ell,k} = 
    \phi_k(\bt_\ell), \qquad \ell \in [m], k \in [d]
\end{equation}
we have
\[
\by = \bA \bx
\]
so that the reconstruction of a sparse function $f$ from few samples amounts to solving a compressive sensing problem.
We introduce randomness in the structured matrix $\bA$ by choosing the sampling points $\bt_1,\hdots,\bt_m$ independently at random according to the probability measure $\nu$ on $\calD$. The resulting matrix $\bA \in \C^{m \times d}$ is then called random sampling matrix.

In order to show sparse recovery results, more precisely, bounds on the RIP of the random sampling matrix $\bA$\corr{, we require} a certain boundedness property of the orthonormal function system, which is stated next.

\begin{definition}[{\cite[Definition 12.1]{foucart2013mathematical}}]
    We call $\Phi =  \{\phi_1,\hdots,\phi_d\}$ a {\em bounded orthonormal system} (BOS)\index{bounded orthonormal system}\index{BOS} with constant $K$ 
    if it satisfies \eqref{def:orthonormality} and if
    \begin{equation}\label{def:KK}
    \|\phi_j \|_\infty := \sup_{\bt \in \calD} |\phi_j(\bt)| \leq K \quad \mbox{ for all } j \in [d] .
    \end{equation}
\end{definition}

The most prominent example of a BOS is the Fourier system (trigonometric polynomials), where $K=1$. By choosing $\calD$ to be an equidistant grid on $[0,1]$, this leads to randomly subsampled discrete Fourier matrices as associated random sampling matrices. 

It is important for the RIP bound that $K \lesssim 1$, or at least that $K$ grows only very mildly with $d$.
Many function systems including orthogonal polynomials (after applying some form of preconditioning) \cite{rawa10,rasc17}, spherical harmonics \cite{rawa16} and more satisfy this condition. We refer also to \cite[Chapter 12]{foucart2013mathematical} for more detailed examples.

Bounds on the RIP of random sampling matrices associated to a BOS  have been obtained in \cite{cata06,ru06-1,foucart2013mathematical,Bourgain2014RIP,Haviv15RIPFT,ChkifDextTranWebs16,BDJR21} (with successively improved logarithmic factors).
The currently best available bound \cite{Bourgain2014RIP,Haviv15RIPFT,ChkifDextTranWebs16,BDJR21} is presented below. 

\begin{theorem}
\label{thm:BOS:RIP:simple} 
    There is an absolute constant $C > 0$ such that the following holds. Let $\bA \in \C^{m \times d}$ be the 
    random sampling matrix  
    associated to a BOS 
    with constant $K \geq 1$. 
    If, for $\delta \in (0,1)$, 
    \begin{equation}
    \label{m:BOS:RIP:simple}
        m \geq C K^2 \delta^{-2} s \log^2(Ks/\delta) \log(ed),
    \end{equation}
    then with probability at least $1- d^{-\log^3(d)}$ the restricted isometry constant $\delta_s$ of $\frac{1}{\sqrt{m}} \bA$
    satisfies $\delta_s \leq \delta$.
\end{theorem}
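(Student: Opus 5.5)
We bound the restricted isometry constant via its variational form. Write the rows of $\bA$ as $\ba_\ell = (\overline{\phi_k(\bt_\ell)})_{k\in[d]}$; these are independent, and orthonormality \eqref{def:orthonormality} gives $\mathbb{E}\, \ba_\ell \ba_\ell^* = \mathrm{Id}$. With
\[
D_{s} = \{\bz \in \C^d : \|\bz\|_2 \le 1,\ \|\bz\|_0 \le s\},
\]
we have
\[
\delta_s = \sup_{\bz \in D_{s}} \Big| \frac{1}{m}\sum_{\ell=1}^m |\langle \ba_\ell, \bz\rangle|^2 - \|\bz\|_2^2 \Big|.
\]
So $\delta_s$ is the supremum of an empirical process indexed by $D_s$. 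The plan has two parts: first control $\mathbb{E}\,\delta_s$, then show $\delta_s$ concentrates around its mean.

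\textbf{Expectation.} We first symmetrize, replacing $\frac1m\sum_\ell (|\langle \ba_\ell,\bz\rangle|^2 - \|\bz\|_2^2)$ by $\frac{2}{m}\sum_\ell \epsilon_\ell |\langle \ba_\ell,\bz\rangle|^2$ with independent Rademacher signs $\epsilon_\ell$. Conditionally on the $\ba_\ell$, this is a subgaussian process. Dudley's entropy integral bounds it by
\[
\frac{C}{\sqrt m}\int_0^{\infty} \sqrt{\log N(D_s,\|\cdot\|_X,u)}\,du \cdot \sup_{\bz\in D_s}\Big(\frac1m\sum_\ell |\langle \ba_\ell,\bz\rangle|^2\Big)^{1/2},
\]
where $\|\bz\|_X = \max_\ell |\langle \ba_\ell,\bz\rangle|$. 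The last factor is at most $\sqrt{1+\delta_s}$. This leads to the self-bounding inequality
\[
\mathbb{E}\,\delta_s \le \sqrt{\tfrac{E}{m}}\,\sqrt{1+\mathbb{E}\,\delta_s},
\]
where $E$ is the squared entropy integral. Solving it gives $\mathbb{E}\,\delta_s \lesssim \sqrt{E/m}$ whenever $E/m \lesssim 1$.

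\textbf{Covering numbers (the main obstacle).} Everything hinges on bounding $N(D_s,\|\cdot\|_X,u)$, and I expect this to be the hardest step. The boundedness \eqref{def:KK} gives $\|\bz\|_X \le K\|\bz\|_1 \le K\sqrt{s}$ on $D_s$. I would combine two estimates:
\begin{itemize}
\item \emph{Small $u$:} a volumetric bound over the $\binom{d}{s}$ supports, yielding $\log N \lesssim s\log(ed/s) + s\log(1+K\sqrt{s}/u)$.
\item \emph{Large $u$:} Maurey's empirical method applied to the convex hull of $\pm e_j,\pm i e_j$ (scaled by $\sqrt s$), yielding $\log N \lesssim \frac{sK^2}{u^2}\log(m)\log(d)$.
\end{itemize}
Integrating these gives $E \lesssim K^2 s \log^2(s)\log(m)\log(d)$, which is the classical Rudelson--Vershynin/Rauhut bound. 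Reaching the sharper $\log^2(Ks/\delta)\log(ed)$ in \eqref{m:BOS:RIP:simple}, with no $\log m$, requires the refined chaining of Bourgain and Haviv--Regev. There one replaces Dudley by a careful multiscale decomposition, counting for each vector only the coordinates relevant at its scale, and uses Maurey-type sparsification level by level. I would try to follow \cite{BDJR21}, which streamlines this argument via a generic chaining bound adapted to $D_s$.

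\textbf{Tail bound.} With $\mathbb{E}\,\delta_s \le \delta/2$ secured under \eqref{m:BOS:RIP:simple}, the high-probability statement follows from a Talagrand-type concentration inequality for suprema of empirical processes, either Bousquet's version or a moment/Rosenthal argument. The summands are bounded by $|\langle\ba_\ell,\bz\rangle|^2 \le K^2 s$ and have controlled variance. This yields $\Pr(\delta_s > \delta) \le \exp(-c m\delta^2/(K^2 s))$. Under \eqref{m:BOS:RIP:simple}, the exponent is at least of order $\log^2(Ks/\delta)\log(ed)$. Upgrading this to the stated failure probability $d^{-\log^3 d}$ needs the extra logarithmic slack hidden in the absolute constant $C$, together with the refined argument of \cite{BDJR21}; I would adopt their probability estimate rather than rederive it.
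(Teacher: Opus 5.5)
The paper does not prove this theorem. It only cites \cite{Bourgain2014RIP,Haviv15RIPFT,ChkifDextTranWebs16,BDJR21}, so importing the sharp expectation bound from \cite{BDJR21} puts you on the same footing as the paper. Be clear, though, that this is where the entire content of the statement lies. The Dudley--Maurey computation you actually carry out yields only $E\lesssim K^2s\log^2(s)\log(m)\log(d)$. Nothing in your sketch removes the $\log m$ or produces the factor $\log^2(Ks/\delta)\log(ed)$. What you have is a correct outline of the classical argument plus a literature pointer, not a proof of \eqref{m:BOS:RIP:simple}.

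The step that genuinely fails is the probability upgrade. At the threshold in \eqref{m:BOS:RIP:simple}, your bound $\Pr(\delta_s>\delta)\le\exp(-c\,m\delta^2/(K^2s))$ becomes $(ed)^{-cC\log^2(Ks/\delta)}$. An absolute constant $C$ only rescales this exponent. It cannot turn $\log^2(Ks/\delta)\log(ed)$ into $\log^4 d$ when $K$, $s$, $\delta$ stay bounded and $d\to\infty$.

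No sharper argument can do it either. Take $\calD=\{0,1,\dots,d-1\}$ with the uniform measure and $\phi_k(t)=e^{2\pi i(k-1)t/d}$, which is a BOS with $K=1$. Let $d$ be even, $s=2$, and $\bz=(\boldsymbol{e}_1+\boldsymbol{e}_{d/2+1})/\sqrt{2}$. Then $|(\bA\bz)_\ell|^2=2$ if $t_\ell$ is even and $0$ otherwise. With probability $2^{-m}$ all samples are even, so $\frac1m\|\bA\bz\|_2^2=2$ and $\delta_2\ge1>\delta$. For $m=\lceil 2C\delta^{-2}\log^2(2/\delta)\log(ed)\rceil$, this failure probability is only polynomially small in $d$. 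Hence it is much larger than $d^{-\log^3 d}$ for large $d$.

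So the statement as printed cannot hold; the target itself is misstated. With this sample size, the correct conclusion is failure probability $\exp(-c\,\delta^2m/(K^2s))$. Your Bousquet/Talagrand step does deliver that once $\mathbb{E}\,\delta_s\le\delta/2$ is known. The level $1-d^{-\log^3 d}$ belongs to the older bound $m\ge CK^2\delta^{-2}s\log^4(d)$ of \cite[Theorem~12.32]{foucart2013mathematical}, which is essentially what your Dudley--Maurey--Talagrand route proves. You should either prove that version, or keep the sharp $m$ with the weaker tail, rather than assert the upgrade.
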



Theorem~\ref{thm:BOS:RIP:simple} guarantees in particular that randomly subsampled Fourier matrices satisfy the RIP with high probability if (roughly speaking)
\begin{equation}\label{m:BOS}
m \gtrsim s \log^3(d).
\end{equation}
This bound is only slightly more restrictive than the one for Gaussian matrices. At the same time, discrete Fourier matrices are highly structured, require no storage space, and allow fast matrix-vector products via the fast Fourier transform (FFT). 

We note \corr{that} the bound \eqref{m:BOS} on the number of \corr{required} samples can be slightly improved in terms of the logarithmic factors (leading to the condition $m \gtrsim s \log(d)$) when considering so-called non-uniform recovery guarantees which hold for a fixed sparse vector and a random draw of the random sampling matrix -- rather than uniformly for all sparse \corr{vectors} for one draw of the random sampling matrix. We refer to \cite{carota06-1,foucart2013mathematical,ra10} for details.

\subsection{Recovery Algorithms}

Over the years, various algorithms have been proposed for the efficient reconstruction of $\bx_\star$ from \eqref{eq:CSnoise}. One class of methods such as \emph{orthogonal matching pursuit} (OMP) \cite{pati1993orthogonal} or \emph{compressive sampling matching
pursuit} (CoSaMP) \cite{needell2010cosamp} constructs a sparse solution $\hat\bx$ by iteratively shaping an initially empty support and minimizing a residual error. These are often referred to as \emph{greedy algorithms}, see also \cite{foucart2013mathematical}. Other types of methods aim to approximate solutions to the optimization problems in \eqref{eq:L0min} or \eqref{eq:BPD}. We will only discuss one of these latter methods in detail, the \emph{iterative soft-thresholding algorithm (ISTA)} \cite{dadede04}. 


ISTA alternates between performing gradient descent steps on the least-squares functional $\bx \mapsto \frac{1}{2} \| \bA\bx - \by \|_2^2$ and applying the \emph{soft-thresholding operator} $\St_\lambda \colon \R^d \to \R^d$ defined entry-wise via
\begin{align} \label{eq:St}
    (\St_\lambda (\bx))_i = \begin{cases}
    x_i - \lambda    & x_i > \lambda \\
    0               & |x_i| \le \lambda \\
    x_i + \lambda    & x_i < -\lambda,
    \end{cases}
\end{align}
for $\lambda > 0$. Starting with an initialization $\bx^0$ it performs the iterations
\begin{equation}
\label{eq:ISTA}
\bx^{l+1} = \mathbb S_{\lambda/2}(\bx^l -\bA^T(\bA\bx^l-\by)).
\end{equation}
ISTA is a special instance of proximal gradient descent \cite{parikh2014proximal} (or forward-backward splitting) applied to the optimization problem
\begin{align}
\label{eq:BPDN}
    \min_{\bz \in \R^d} \| \bA\bz - \by \|_2^2 + \lambda \| \bz \|_1.
\end{align}
Note that \eqref{eq:BPDN} goes under the name \emph{basis pursuit denoising (BPDN)}, and is equivalent to \eqref{eq:BPD} \cite[Proposition 3.2]{foucart2013mathematical} (but the corresponding relation between the parameters $\lambda$ and $\eta$ also depends on the optimal solutions). 
ISTA converges under mild conditions on $\bA$.

\begin{theorem}[{\cite{combettes2005signal}}] \label{thm:ISTA}
    Let $\bA \in \R^{m\times d}$ and $\by \in \R^m$. If $\pnorm{\bA}{2\rightarrow 2} < \sqrt{2}$, the sequence of iterates $\bx^l$ in 
    \eqref{eq:ISTA}
    converges to a minimizer of \eqref{eq:BPDN} as $\ell \to \infty$.
\end{theorem}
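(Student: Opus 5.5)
We would prove Theorem~\ref{thm:ISTA} by viewing \eqref{eq:ISTA} as a proximal gradient (forward--backward) iteration with unit step size and then running the standard Fej\'er-monotonicity argument. Write $F(\bz) = f(\bz) + g(\bz)$ with $f(\bz) = \frac12\|\bA\bz-\by\|_2^2$ and $g(\bz) = \frac{\lambda}{2}\|\bz\|_1$. Since $F$ is half the objective in \eqref{eq:BPDN}, it has the same minimizers. The function $f$ is convex and differentiable with $\nabla f(\bz) = \bA^T(\bA\bz-\by)$, and $\nabla f$ is $L$-Lipschitz with $L = \|\bA\|_{2\to2}^2 < 2$.

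\textbf{Step 1: identify the prox.} A direct coordinatewise computation from \eqref{eq:St} shows that
\[
\mathbb S_{\lambda/2}(\bu) = \argmin_{\bz} \tfrac12\|\bz-\bu\|_2^2 + \tfrac{\lambda}{2}\|\bz\|_1 = \operatorname{prox}_g(\bu).
\]
Hence $\bx^{l+1} = \operatorname{prox}_g(\bx^l - \nabla f(\bx^l))$, which is forward--backward splitting with step size $1$. Next we characterize fixed points. By the subgradient optimality condition, $\bx$ is a fixed point of $T := \operatorname{prox}_g \circ (I - \nabla f)$ if and only if $0 \in \nabla f(\bx) + \partial g(\bx)$, i.e., if and only if $\bx$ minimizes $F$. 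The minimizer set is nonempty because $F$ is continuous, convex and coercive: the $\ell_1$ term is coercive.

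\textbf{Step 2: show $T$ is averaged.} The prox of a convex function is firmly nonexpansive, i.e.\ $\tfrac12$-averaged. For the gradient step, the Baillon--Haddad theorem says $\nabla f$ is $\tfrac1L$-cocoercive. From this one gets that $I - \nabla f$ is $\tfrac{L}{2}$-averaged, and $\tfrac{L}{2} < 1$ exactly because $\|\bA\|_{2\to2} < \sqrt2$. For a quadratic this is elementary: $I - \bA^T\bA$ is symmetric with spectrum in $(-1,1]$, so $I - \bA^T\bA = (1-\alpha)I + \alpha N$ with $\|N\|_{2\to2}\le1$ and $\alpha = L/2$. A composition of two averaged operators is again averaged, so $T = (1-\beta)I + \beta R$ with $R$ nonexpansive and $\beta < 1$.

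\textbf{Step 3: Krasnosel'ski\u{\i}--Mann argument.} For any fixed point $\bx^\sharp$, the averagedness identity gives
\[
\|\bx^{l+1}-\bx^\sharp\|_2^2 \le \|\bx^l-\bx^\sharp\|_2^2 - \tfrac{1-\beta}{\beta}\|\bx^{l+1}-\bx^l\|_2^2 .
\]
This yields three consequences.
\begin{itemize}
\item Fej\'er monotonicity: the sequence is bounded and $\|\bx^l - \bx^\sharp\|_2$ converges for every minimizer $\bx^\sharp$.
\item Telescoping the inequality gives $\sum_l \|\bx^{l+1}-\bx^l\|_2^2 < \infty$, so $\|\bx^{l+1}-\bx^l\|_2 \to 0$.
\item Take a cluster point $\bar\bx$, which exists by compactness in $\R^d$. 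Continuity of $T$ and $\|T\bx^l - \bx^l\|_2 \to 0$ give $T\bar\bx = \bar\bx$, so $\bar\bx$ is a minimizer.
\end{itemize}
Finally, apply Fej\'er monotonicity with $\bx^\sharp = \bar\bx$: the limit of $\|\bx^l-\bar\bx\|_2$ exists and equals $0$ along a subsequence, hence the whole sequence converges to $\bar\bx$.

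The main obstacle is Step 2: getting averagedness of the composition under exactly the threshold $\|\bA\|_{2\to2} < \sqrt2$ (step $1 < 2/L$), rather than only under the easier condition $L \le 1$, where both factors are firmly nonexpansive. If the composition lemma for averaged operators is not taken as known, we would instead prove the descent estimate $F(\bx^{l+1}) \le F(\bx^l) - (1 - \tfrac{L}{2})\|\bx^{l+1}-\bx^l\|_2^2$ directly. This follows from the descent lemma for $L$-smooth $f$ combined with the prox optimality inequality. We would then combine it with a three-point inequality, exactly as in the standard ISTA/FISTA analysis, to get Fej\'er monotonicity and conclude as in Step 3.
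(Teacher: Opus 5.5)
Your proposal is correct. It is essentially the argument behind the result the paper cites without proof from Combettes--Wajs: ISTA is forward--backward splitting with step size $1<2/\|\bA\|_{2\to2}^2$, the map $\operatorname{prox}_g\circ(I-\nabla f)$ is averaged as a composition of averaged operators, and the Krasnosel'ski\u{\i}--Mann/Fej\'er argument gives convergence to a minimizer. One small correction to your fallback route: when $1<\|\bA\|_{2\to2}^2<2$, the three-point inequality with unit step only gives $\|\bx^{l+1}-\bx^\sharp\|_2^2\le\|\bx^l-\bx^\sharp\|_2^2+(\|\bA\|_{2\to2}^2-1)\|\bx^{l+1}-\bx^l\|_2^2$, i.e.\ quasi-Fej\'er rather than Fej\'er monotonicity, which still suffices because your descent estimate makes these error terms summable.
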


The assumption on $\pnorm{\bA}{2\rightarrow 2}$ is always fulfilled by a proper rescaling of \eqref{eq:BPDN}. If one replaces $\bA$, $\by$, and $\lambda$ by $\bA/\pnorm{\bA}{2\rightarrow 2}$, $\by/\pnorm{\bA}{2\rightarrow 2}$, and $\lambda/\pnorm{\bA}{2\rightarrow 2}^2$, the minimizers of \eqref{eq:BPDN} do not change but Theorem \ref{thm:ISTA} applies.
The convergence of ISTA can be rather slow in general, but there exist accelerated versions such as \emph{fast ISTA (FISTA)} \cite{beck2009fast}.

Due to the equivalence of \eqref{eq:BPDN} and \eqref{eq:BPD}, one can easily derive reconstruction guarantees for $\bx_{\text{ISTA}}$ computed by ISTA from Theorem~\ref{thm:srNSP}.

\begin{theorem} \label{thm:BPDN}
    Suppose that the $2s$-th RIP constant of $\bA \in \R^{m \times d}$ satisfies $\delta < 4/\sqrt{41} \approx 0.6$. Then, for any $\bx \in \R^d$ and $\by \in \R^m$ with $\|\bA\bx -\by\|_2 \leq \eta$ the following holds. Denote by $\bx_\lambda$ a minimizer of \eqref{eq:BPDN}. If $\lambda > 0$ is sufficiently large such that $\eta_\lambda \colon= \| \by - \bA\bx_\lambda \|_2 \ge \eta$, one has that
    \begin{align*}
        \|\bx - \bx_\lambda \|_1 \leq C \sigma_s(\bx)_1 + D\sqrt{s} \eta_\lambda,
    \end{align*}
    for constants $C$ and $D$ depending on the RIP constant $\delta$ of $\bA$.
\end{theorem}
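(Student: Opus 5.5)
The argument rests on a standard fact: a BPDN minimizer is also a solution of the quadratically constrained program \eqref{eq:BPD} with $\eta$ set to its own residual. Combining this with Theorem~\ref{thm:srNSP} and Theorem~\ref{thm:RIP} gives the claim.

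\textbf{Step 1 (BPDN minimizers solve \eqref{eq:BPD}).} Set $\eta_\lambda = \|\by - \bA\bx_\lambda\|_2$. We show that $\bx_\lambda$ minimizes $\|\bz\|_1$ subject to $\|\bA\bz-\by\|_2 \le \eta_\lambda$. Suppose some feasible $\bz$ had $\|\bz\|_1 < \|\bx_\lambda\|_1$. Then
\begin{align*}
\|\bA\bz-\by\|_2^2 + \lambda\|\bz\|_1 < \eta_\lambda^2 + \lambda\|\bx_\lambda\|_1 ,
\end{align*}
which contradicts the minimality of $\bx_\lambda$ in \eqref{eq:BPDN}. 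This is the direction of \cite[Proposition 3.2]{foucart2013mathematical} that we need. In the degenerate case $\eta_\lambda = 0$, the same argument shows that $\bx_\lambda$ solves \eqref{eq:BP}.

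\textbf{Step 2 (stable and robust NSP).} The hypothesis $\delta_{2s} < 4/\sqrt{41}$ is stronger than $\delta_{2s} < 1/\sqrt{2}$. Theorem~\ref{thm:RIP} therefore gives the stable and robust NSP of order $s$ with $\rho = c_\delta < 1$ and $\tau = \sqrt{s}\,c'_\delta$. The threshold $4/\sqrt{41}$ is what the direct RIP-to-NSP argument in \cite[Theorem 6.12]{foucart2013mathematical} yields, with explicit constants. Either route works, and only the form $\tau \propto \sqrt{s}$ matters here.

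\textbf{Step 3 (apply Theorem~\ref{thm:srNSP}).} We read $\by = \bA\bx + \Eta$ with $\Eta := \by - \bA\bx$, so $\|\Eta\|_2 \le \eta \le \eta_\lambda$ by assumption. Theorem~\ref{thm:srNSP} with constraint level $\eta_\lambda$ applies to the solution $\bx_\lambda$ from Step 1 and gives
\begin{align*}
\|\bx - \bx_\lambda\|_1 \le \frac{2(1+\rho)}{1-\rho}\,\sigma_s(\bx)_1 + \frac{4 c'_\delta}{1-\rho}\,\sqrt{s}\,\eta_\lambda .
\end{align*}
This is the claimed bound with $C = 2(1+\rho)/(1-\rho)$ and $D = 4c'_\delta/(1-\rho)$.

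\textbf{Main obstacle.} The argument is short once Step 1 is settled. The only real subtlety is that $\eta_\lambda$ depends on the (possibly non-unique) minimizer. Step 1 holds for each fixed minimizer with its own residual, so this causes no trouble.
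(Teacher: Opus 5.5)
Your proposal is correct and is essentially the paper's proof. Like the paper, you invoke Theorem~\ref{thm:RIP} to obtain the stable and robust NSP, note that $\bx_\lambda$ solves \eqref{eq:BPD} with constraint level $\eta_\lambda$, and then apply Theorem~\ref{thm:srNSP}; the only difference is that you verify the \eqref{eq:BPD} step directly instead of citing \cite[Proposition 3.2]{foucart2013mathematical}.
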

\begin{proof}
    Theorem \ref{thm:BPDN} is a re-statement of Theorem \ref{thm:srNSP}. We just use Theorem \ref{thm:RIP} and the fact that the minimizer $\bx_\lambda$ is a solution to \eqref{eq:BPD} with $\eta$ replaced by $\eta_\lambda$ \cite[Proposition 3.2]{foucart2013mathematical}.
\end{proof}

\subsection{Low Rank Matrix Recovery}
\label{sec:LowRank}

The techniques of compressive sensing can be extended to the recovery of a matrix $\bX \in \R^{d_1 \times d_2}$ of low rank from linear measurements
\begin{equation}\label{matrix-measurements}
\by = \mathcal{A}(\bX) \in \R^m,
\end{equation}
where $\mathcal{A} : \corr{\R^{d_1 \times d_2}} \to \R^m$ is a linear map with $m < d_1 d_2$ that is usually described via matrices \corr{$\bA_1,\hdots,\bA_m$} as
\begin{equation}\label{matrix-meas}
y_j = \langle \bX, \bA_j \rangle_F = \operatorname{tr}(\bX^T \bA_j), \quad j=1,\hdots, m.
\end{equation}
Note that for a matrix $\bX \in \R^{d_1 \times d_2}$ with singular values $\sigma_1\geq \sigma_2 \geq \sigma_d \geq 0$, where $d = \min\{d_1,d_2\}$, the rank of $\bX$ equals the $\ell_0$-norm of the vector $\boldsymbol{\sigma} = (\sigma_j)_{j=1}^d$. 

A special case of this low rank matrix recovery problem is the matrix completion problem where the measurements are entries of $\bX$, see e.g.\ \cite{candes2009exact,recht2011simpler,gross2011recovering}.

For recovery of $\bX$ from $\by$, the naive approach of minimizing the rank subject to the linear constraint \eqref{matrix-measurements} is NP-hard. This actually follows from NP-hardness of $\ell_0$-minimization for sparse recovery.
Considering the success of the $\ell_1$-minimization approach for the latter, it suggests itself to consider 
the nuclear norm minimization approach
\[
\min_{\bZ} \|\bZ\|_* \quad \mbox{ subject to } \mathcal{A}(\bZ) = \by,
\]
where the nuclear norm is actually the $\ell_1$-norm of the vector of singular values, $\|\bX\|_* = \sum_{j} \sigma_j$. This is a convex optimization problem which can be solved via semidefinite programming and other tractable approaches.

The theory for \corr{low-rank} matrix recovery is developed analogously to the theory of sparse vector recovery. In particular, versions of the null space property and the restricted isometry property for the measurement map $\mathcal{A}$ have been introduced which imply exact recovery of matrices $\bX \in \R^{d_1 \times d_2}$ of rank at most $r$ from \corr{$\mathcal{A}(\bX)$}. For instance, $\mathcal{A}$ is said to have the restricted isometry property (RIP) of order $r$ with constant \corr{$\delta_r$} if
\begin{equation}\label{matrix-RIP}
(1-\delta_r) \|\bX\|_F^2 \leq \|\mathcal{A}(\bX)\|_2^2 \leq (1+\delta_r) \|\bX\|_F^2 \quad 
\forall \bX \in \R^{d_1 \times d_2} \mbox{ with } \operatorname{rank}(\corr{\bX}) \leq r,
\end{equation}
where $\|\cdot\|_F$ denotes the Frobenius norm. 
A random Gaussian measurement map $\mathcal{A} : \R^{d_1 \times d_2} \to \R^m$ is generated by $m$ independent standard Gaussian random matrices $\bA_i \in \R^{d_1 \times d_2}$ in \eqref{matrix-meas}. 
The map $\frac{1}{\sqrt{m}} \mathcal{A}$ satisfies the RIP of order $r$, i.e., $\delta_r \leq \delta \in (0,1)$ with probability at least $1-\epsilon$ if \cite{candes2011tight,recht2010guaranteed}
\begin{equation}\label{cond:m:RIP-matrix}
m \geq C \delta^{-2}(r(d_1+d_2) + \log(2/\varepsilon)),
\end{equation}
where $C>0$ is a universal constant.
This implies that $d_1 \times d_2$ matrices of rank $r$ can be recovered successfully from $m$ measurements via nuclear norm minimization if
\begin{equation}\label{m-lowrank-Gaussian}
m \geq C r(d_1+d_2).
\end{equation}
For small $r$, the right hand side is in fact smaller than the dimension $d_1d_2$ of the space of matrices, so that we can solve an underdetermined system of equations.

For more information on low rank matrix recovery, we refer to \cite{davenport2016overview,candes2009exact,gross2011recovering,KuengRauhutTerstiege2017,recht2010guaranteed}.

\begin{remark}
    Whereas low rank matrix recovery generalizes compressive sensing from sparse vectors to low rank matrices in a rather straightforward way, aiming for optimal sample efficiency under mixed priors, e.g., recovering low rank matrices that are simultaneously sparse in their entries, is far more challenging. 
    
    For instance, it has been shown that, in general, linear combinations of convex regularizers for different sparsity structures do not allow to
    outperform recovery guarantees of the “best” one of them alone \cite{oymak2015simultaneously}. In other words, if $\mathbf X \in \R^{d_1\times d_2}$ is of rank $r$ and has only $s_1$ non-zero columns and $s_2$ non-zero rows, a linear combination of $\ell_1$- and nuclear norm can only achieve recovery for $m \gtrsim \min\{ r (d_1+d_2), s_1s_2 \log(e(d_1d_2)/(s_1s_2)) \}$ instead of $m \gtrsim r (s_1+s_2)$ which would be the information theoretic minimum in terms of encoded information; for $r \ll s_1,s_2 \ll d_1,d_2$, this can make a notable difference in the number of required measurements.

    There are only few approaches that come with rigorous performance guarantees. 
    For strongly restricted measurement setups, where the measurement map is constructed in a nested way \cite{bahmani2016near,foucart2020jointly}, (near)-optimal bounds on the number of required measurement can be shown.
     For general measurements, guarantees for 
     certain iterative reconstruction methods are provided in \cite{lee2017near,fornasier2021robust,foucart2020jointly,maly2023robust}, which however, require a suboptimal number of measurements. It is currently unclear whether there is a statistical to computational gap in the sense that tractable algorithms cannot achieve successful recovery with a number of measurements of the same order as the one provably achievable with intractable methods \cite{foucart2020jointly}. Note that recovering matrices that are simultaneously sparse and low rank is inherently related to \emph{Sparse Principal Component Analysis (SPCA)} \cite{zou2006sparse,d2004direct} in statistics, where a similar statistical to computational gap occurs \cite{berthet2013,brennan18a}.

     A further extension of low rank matrix recovery considers the recovery of low rank tensors $\bX \in \R^{n_1\times n_2 \times \cdots \times n_d}$ from linear measurements $y = \mathcal{A}(\bX) \in \R^m$, where $m < n_1n_2 \cdots n_d$. While preliminary results are available, guarantees for the number of required random measurements $m$ to successfully recover $\bX$ are considerably worse for tractable algorithms than for computationally infeasible approaches, see for instance \cite{barak2016noisy,ghadermarzy2019near,mu2014square,rauhut2017low,rauhut2021tensor} and the references therein. Also here, it is currently open whether there is a provable statistical to computational gap.
\end{remark}

\section{Generalization for Learned Unrolled Recovery Algorithms}
\label{sec:Unrolled}

While compressive sensing is part of signal processing, the mathematical insights gathered therein prove valuable for exploring the theoretical foundations of modern machine learning. Vice versa, sparse recovery algorithms
can be enhanced by using machine learning techniques.
To illustrate this, let us first introduce the generic formulation of a \emph{supervised} learning problem, see also \cite{mohri2018foundations,shalev2014understanding}. \\

\textbf{Supervised learning.} Supervised learning assumes access to labeled \emph{training data} $(\bx_i,\by_i)_{i=1}^n$, for $n \in \N$, and seeks to learn a functional relation $f \colon \calX \to \calY$ between \emph{input data} $\bx_i \in \calX$ and \emph{output data} $\by_i \in \calY$ that extends to new, previously not observed data.\footnote{One commonly refers to the entries of $\bx_i$ as \emph{features} and to the vectors $\by_i$ as \emph{labels}.} This is formalized by assuming that $(\bx_i,\by_i)_{i=1}^n$ are independent and identically distributed (iid) samples from an unknown data distribution $\mu$. One says that $f$ generalizes well if $f(\bx) \approx \by$, for $(\bx,\by) \sim \mu$. 

To quantify generalization of $f$, one introduces a loss function $\ell \colon \calY \times \calY \to \R$ measuring the discrepancy between different elements of $\calY$, and defines the corresponding \emph{risk} of $f$ under $\ell$ as
\begin{align*}
    \calR(f) = \E_{(\bx,\by)\sim \mu} \; \ell(f(\bx),\by).
\end{align*}
Good generalization of $f$ compared to an alternative explanation $\hat f$ thus corresponds to a small value of $\calR(f)$ compared to $\calR(\hat f)$.

Since $\calR$ depends on $\mu$, one has in general no access to the risk. A popular substitute is thus the \emph{empirical risk}
\begin{align}
\label{emp:risk}
    \hat \calR_n (f) = \frac{1}{n} \sum_{i=1}^n \ell(f(\bx_i),\by_i),
\end{align}
which can be evaluated on the accessible training data and converges to the risk for $n \to \infty$.

The empirical risk can also be used to identify good explanations $f$ among all possible solutions. To this end, one fixes a \emph{hypothesis class} $\calH \subset \{ f \colon \calX \to \calY \}$ of suitable functions, e.g., polynomials of a certain degree, and then determines the best explanation $h_\star \in \calH$ by solving
\begin{align}
\label{eq:ERM}
    h_\star \in \argmin_{h \in \calH} \hat R_n(h).
\end{align}
To allow efficient numerical implementation of \eqref{eq:ERM}, it is common to parametrize the functions in $\calH$ by real-valued vectors, i.e., to write $\calH = \{ h_{\btheta} \colon \btheta \in \Theta \}$, for a suitable choice $\Theta \subset \R^p$. In the case of polynomials of degree at most $d$, $\btheta$ could encode the $d+1$ necessary coefficients so that $\Theta = \R^{d+1}$. In this case, we would have $h_\star = h_{\btheta_\star}$ with
\begin{align*}
     \btheta_\star \in \argmin_{\btheta \in \Theta} \hat R_n(h_{\btheta}).
\end{align*}
Identifying an optimal $h_\star$ from the $(\bx_i,\by_i)_{i=1}^n$ is called \emph{training}.

To characterize the generalization of a computed solution $h_\star$, it is now crucial to understand the difference $|\hat \calR_n(h_\star) - \calR (h_\star) |$. Since one does not know a priori which function in $\calH$ is picked during training, a natural approach is to control  
\begin{align}
\label{eq:GE}
    \sup_{h\in \calH} |\hat \calR_n(h) - \calR (h) |
\end{align}
for the hypothesis class of interest.\\

\textbf{Neural networks.} In the past decade different variants of deep neural networks have given rise to state-of-the-art hypothesis classes due to their impressive empirical performance. For conciseness, we only discuss the most basic type of neural networks here, namely \emph{fully connected feed-forward networks (FNNs)}. Other popular architectures include residual networks, convolutional, recurrent networks, and transformers. 

An $L$-layer FNN $h \colon \R^{d_0} \to \R^{\dout}$ is defined as
\begin{align}
\label{eq:FNN}
    h(\bx) = \bW^{(L)} \cdot h_{L-1} \circ \cdots \circ h_1(\bx),
    \quad \text{where} \quad
    h_\ell(\bz) = \sigma( \bW^{(\ell)}\bz + \bb^{(\ell)}),
\end{align}
for \emph{weight matrices} $\bW^{(\ell)} \in \R^{d_\ell \times d_{\ell-1}}$ with $d_L = \dout$, \emph{biases} $\bb^{(\ell)} \in \R^{d_\ell}$, and an \emph{activation function} $\sigma \colon \R \to \R$ which is applied entry-wise. Weights and biases are trainable parameters of $h$, the activation function and the dimensions $d_1,\dots,d_{L-1}$ are commonly fixed when defining the hypothesis space $\calH$. Common choices of $\sigma$ include sigmoidal functions and the ReLU $\sigma(z) = \max\{0,z\}$. 
We refer to \cite{petersen2026} for an introduction and more information.



\subsection{
Interpreting recovery algorithms as neural networks}



As many other variational approaches for solving inverse problems, compressive sensing hinges on the idea that prior knowledge on the signal to be recovered (sparsity) allows to make recovery from few measurements possible. In the age of machine learning it is natural to try to refine such prior knowledge by taking into account training samples of possible signals to be recovered. 
One natural way to use deep learning concepts in this context is to interpret iterative recovery algorithms as neural networks \cite{gregor2010learning}. Making certain parameters in these networks learnable allows to put them
into standard machine learning frameworks and to improve performance via adapting to training data. Let us note in passing that a notable alternative is to learn a regularizer (replacing the $\ell_1$-norm in \eqref{eq:BPDN}) that is represented via a deep neural network \cite{maass2019}. We will only focus on the former approach and study corresponding generalization error bounds \cite{schnoor2023generalization} in the sense of the previous chapter.

The authors of \cite{gregor2010learning} 
propose \emph{Learning ISTA (LISTA)} and interpret the first $L$ iterations \eqref{eq:ISTA} of ISTA  
\begin{align*}
    \bx^\ell = \St_{\frac{\lambda^\ell}{2}} \Big( (\bI - \bA^T\bA) \bx^{\ell-1} + \bA^T\by \Big)
\end{align*}
as layers $h_\ell$ of a neural network with activation function $\St_{\frac{\lambda}{2}}$ and initialization $\bx^0 = 0$
\begin{align*}
    h_\ell(\bx,\by) = \St_{\frac{\lambda^\ell}{2}} \Big( \bW_x \bx + \bW_y \by \Big),
\end{align*}
replacing $(\bI - \bA^T\bA)$ and $\bA^T$ by parameter matrices $\bW_x \in \R^{d\times d}$ and $\bW_y \in \R^{d\times m}$, respectively. The resulting $L$-layer FNN $h = h_{L} \circ \cdots \circ h_1 \colon \R^m \to \R^d$,
\[
h(\by) = \corr{h_{L}(h_{L-1}( \cdots h_1(0,\by) \cdots,\by),\by)},
\]
with trainable parameters $(\bW_x,\bW_y,\lambda^1,\hdots,\lambda^L)$ exhibits weight sharing and skip connections from the input $\by$ to each layer. Since $\St_\lambda$ can be decomposed into the sum of two ReLU functions, LISTA is naturally related to ReLU FNNs.

If one is interested in preserving the observation matrix $\bA$ and only seeks to learn an orthonormal basis $\bPhi \in O(d)$ in which the 
signals to be recovered
are sparse, one can alternatively consider
\begin{align}
\label{eq:LISTA_orthogonal}
    h_\ell(\bx,\by) = \St_{\frac{\lambda^\ell}{2}} \Big( (\bI - \bPhi^T\bA^T\bA\bPhi) \bx + \bPhi^T\bA^T\by \Big)
\end{align}
with trainable parameters $(\bPhi,\lambda^1,\hdots,\lambda^L)$ and $\bPhi \in O(d)$ ranging over the orthogonal group \cite{behboodi2022compressive}. 

Following LISTA, various works explored the idea of unfolding iterative thresholding algorithms \cite{wu2020sparse,sun2016deep,liu2019alista,maass2019,behboodi2022compressive,schnoor2023generalization}. Also other iterative optimization algorithms have been re-interpreted as unfolded neural networks, see e.g.\ \cite{ha20,kouni25}.


\subsection{Generalization error bounds}

There are different complexity measures that can be used to quantify the error in \eqref{eq:GE} depending on $\calH$. Notable examples include \corr{Vapnik--Chervonenkis} (VC) dimension \cite{vapnik2015uniform}, Rademacher complexity \cite{koltchinskii2002rademacher,bartlett2002rademacher}, local Rademacher complexity \cite{babome05}, stability \cite{shalev2010learnability,rakhlin2005stability}\corr{,} and robustness \cite{xu2012robustness}. We focus here on the (empirical) Rademacher complexity.

\begin{definition}
    Given a data set $S = (\bx_i,\by_i)_{i=1}^n$ and a hypothesis space $\calH$ of real-valued functions, the \emph{empirical Rademacher complexity} of $\calH$ on $S$ is defined as
    \begin{align*}
        R_S(\calH) = \E_\varepsilon \sup_{h \in \calH} \frac{1}{n} \sum_{i=1}^n \varepsilon_i h(\bx_i),
    \end{align*}
    where $\varepsilon_1,\dots,\varepsilon_n$ are iid Rademacher variables, i.e., $$\Pr(\varepsilon_i = 1)~=~\Pr(\varepsilon_i = -1)~=~\frac{1}{2}.$$
\end{definition}

The following result provides a general bound of \eqref{eq:GE} in terms of the empirical Rademacher complexity of $\ell \circ \calH$, i.e., the class of functions which are a composition of the loss $\ell$ with functions in $\calH$.

\begin{theorem}[{\cite[Theorem 26.5]{shalev2014understanding}}]
\label{thm:Generalization_General}
    Let $\calH$ be a set of functions, let $S = (\bx_i,\by_i)_{i=1}^n$ be a training set drawn from a probability measure $\mu$, and let $\ell$ be a real-valued loss function satisfying $|\ell(h(\bx_i),\by_i)| \le c$, for some $c > 0$ and all $h \in \calH, (\bx_i,\by_i) \in S$. Then, for any $\delta \in (0,1)$, we have with probability at least $1-\delta$ that
    \begin{align*}
        \calR(h) \le \hat{\calR}_n(h) + 2 R_S(\ell \circ \calH) + 4c \sqrt{\frac{2 \log(\frac{4}{\delta})}{n}},
    \end{align*}
    for all $h \in \calH$.
\end{theorem}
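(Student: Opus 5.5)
The proof has three steps. First, McDiarmid's bounded differences inequality controls the supremum of the empirical process around its mean. Second, a symmetrization (ghost sample) argument bounds that mean by twice a Rademacher average. Third, McDiarmid is applied once more to replace the expected Rademacher complexity by the empirical one $R_S$.

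Write $g_h(\bx,\by)=\ell(h(\bx),\by)$, so $|g_h|\le c$. Consider
\[
\Phi(S)=\sup_{h\in\calH}\big(\calR(h)-\hat\calR_n(h)\big).
\]
Replacing a single sample $(\bx_i,\by_i)$ changes $\hat\calR_n(h)$ by at most $2c/n$, uniformly in $h$. Hence $\Phi$ has bounded differences $2c/n$, and McDiarmid gives
\[
\Phi(S)\le \E\Phi(S)+2c\sqrt{\tfrac{\log(2/\delta)}{2n}}
\]
with probability at least $1-\delta/2$.

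For the expectation, introduce an independent copy $S'=(\bx_i',\by_i')_{i=1}^n$ of the sample. Since $\calR(h)=\E_{S'}\hat\calR'_n(h)$, Jensen's inequality (pulling the supremum inside the expectation) gives
\[
\E\Phi\le \E_{S,S'}\sup_h \frac1n\sum_{i=1}^n\big(g_h(\bx_i',\by_i')-g_h(\bx_i,\by_i)\big).
\]
Each summand is symmetric in distribution under swapping $(\bx_i,\by_i)\leftrightarrow(\bx_i',\by_i')$. We may therefore insert Rademacher signs $\varepsilon_i$ without changing the expectation. Splitting the supremum then yields
\[
\E\Phi\le 2\,\E_S R_S(\ell\circ\calH).
\]

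Finally, $S\mapsto R_S(\ell\circ\calH)$ also has bounded differences of order $2c/n$: changing one point changes each $\frac1n\varepsilon_i g_h$ term by at most $2c/n$. A second application of McDiarmid, with probability at least $1-\delta/2$, gives
\[
\E_S R_S\le R_S+2c\sqrt{\tfrac{\log(2/\delta)}{2n}}.
\]
A union bound over the two events combines the three estimates with total probability at least $1-\delta$, giving
\[
\calR(h)\le\hat\calR_n(h)+2R_S(\ell\circ\calH)+O\Big(c\sqrt{\log(1/\delta)/n}\Big)
\]
for all $h\in\calH$. The constants are crude: roughly $3\cdot 2c\sqrt{\log(2/\delta)/(2n)}$ against the stated $4c\sqrt{2\log(4/\delta)/n}$, so there is room to spare. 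If the statement's $\log(4/\delta)$ comes from splitting $\delta$ differently, I would match it by allocating $\delta/2$ to each event and accepting the slightly larger constant.

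The main obstacle is the symmetrization step. Inserting the signs must be justified carefully, since it relies on exchangeability of each pair $(\bx_i,\by_i),(\bx_i',\by_i')$ conditional on the signs. The measurability of the supremum over a possibly uncountable $\calH$ also has to be handled; I would assume $\calH$ is countable or otherwise separable so the supremum is measurable. The rest is bookkeeping of constants.
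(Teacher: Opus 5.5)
Your proposal is correct and is essentially the standard argument behind the cited result \cite[Theorem 26.5]{shalev2014understanding}; the paper itself gives no proof. That argument is McDiarmid for $\sup_{h}(\calR(h)-\hat{\calR}_n(h))$, symmetrization to $2\,\E_S R_S(\ell\circ\calH)$, McDiarmid again for $S\mapsto R_S(\ell\circ\calH)$, and a union bound. Your bookkeeping in fact yields $3c\sqrt{2\log(2/\delta)/n}\le 4c\sqrt{2\log(4/\delta)/n}$, so you obtain the stated inequality exactly rather than only up to $O(\cdot)$, provided the hypothesis $|\ell(h(\bx),\by)|\le c$ is read as holding for $\mu$-almost every $(\bx,\by)$ (as you implicitly do) and not merely on the realized sample.
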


Let us consider the specific class of $L$-layer LISTA networks with orthogonal shared weights and functions $h_\ell$ as in \eqref{eq:LISTA_orthogonal} 
\begin{align}
\label{eq:H_LISTA}
    \calH_{\text{LISTA}} = \{ h_\Phi = \sigma \circ \bPhi \circ h_L \circ \cdots \circ h_1 : \Phi \in O(d)\},
\end{align}
i.e.\ the hypothesis \corr{class} of unrolled neural networks parameterized by the sparsifying transform $\Phi \in O(d)$. Here $\bA \in \R^{m\times d}$ is fixed in advance and $\sigma \colon \R \to \R$ with
\begin{align*}
    \sigma(\bz) = \min\{ B, \| \bz \|_2 \} \cdot \frac{\bz}{\| \bz \|_2}
\end{align*}
is a $1$-Lipschitz and bounded non-linearity which is introduced for technical reasons. We consider a fixed parameter $\lambda$ here, but note that we can extend the result below also to learnable thresholding parameters $\lambda_1,\hdots,\lambda_L$ \cite{schnoor2023generalization}.

As a loss function to measure risk and empirical risk we use
the (non-squared) $\ell_2$-loss $\ell(\bz,\bz') = \frac{1}{2} \| \bz - \bz' \|_2$. Given data \corr{$(\bx_i,\by_i=\bA\bx_i)$}, $i=1,\hdots, n$,
one can learn $\Phi \in \corr{O(d)}$ via empirical risk minimization, and in practice via variants of (stochastic) gradient descent using standard toolboxes. Empirically learned unrolled neural \corr{networks} often outperform standard (unlearned) compressive sensing approaches in many applications, see e.g. \cite{behboodi2022compressive,schnoor2023generalization,kouni23}.

The following generalization bound can be derived \cite{behboodi2022compressive}, which provides information on how many training samples are required to achieve \corr{a} small generalization error.

\begin{theorem}
\label{thm:GeneralizationLISTA_simple}
    Let $\bA \in \R^{m\times d}$. Assume that $(\bx_i)_{i=1}^n$ are i.i.d.\ samples of an unknown distribution $\mu$ such that $\|\bx\|_2 \leq B$ almost surely for $\bx \sim \mu$ and set
    $y_i = A x_i$. 
    Consider $\calH_{\text{LISTA}}$ in \eqref{eq:H_LISTA} with $\lambda \in (0,\|\bA \|_{2\to 2}^{-2})$. Then, we have for any $\delta \in (0,1)$ that with probability at least $1-\delta$
    \begin{align*}
        \calR(h) \lesssim \hat{\calR}_n(h) + B \sqrt{\frac{md \log(L^2)}{n}} + B \frac{d \sqrt{\log(L)}}{n} + \sqrt{\frac{\log(\frac{1}{\delta})}{n}},
    \end{align*}
    for all $h \in \calH_{\text{LISTA}}$.
\end{theorem}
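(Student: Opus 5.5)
We derive the bound from Theorem~\ref{thm:Generalization_General}. Because $\sigma$ projects onto the ball of radius $B$ and $\|\bx\|_2\le B$, the loss satisfies $\ell(h(\by),\bx)\le \frac12(B+B)=B$. This gives $c=B$ and produces the confidence term $\sqrt{\log(1/\delta)/n}$ (up to the factor $B$, which we absorb under the normalization implicit in $\lesssim$). What remains is to bound $R_S(\ell\circ\calH_{\text{LISTA}})$.

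\textbf{Step 1: peel off the loss.} The map $\bz\mapsto\frac12\|\bz-\bx_i\|_2$ is $\frac12$-Lipschitz in $\ell_2$. We apply a vector-valued contraction inequality (Maurer's), which gives
\[
R_S(\ell\circ\calH)\lesssim \frac1n\,\E\sup_{\Phi}\sum_{i=1}^n\sum_{k=1}^d \varepsilon_{ik}\, h_\Phi(\by_i)_k .
\]
This is a Rademacher process indexed by $\Phi\in O(d)$.

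\textbf{Step 2: Dudley's entropy integral.} We bound the process by Dudley's integral with respect to the metric $\rho(\Phi,\Psi)=\big(\sum_i\|h_\Phi(\by_i)-h_\Psi(\by_i)\|_2^2\big)^{1/2}$. Its diameter is at most $2B\sqrt n$. The key estimate is a Lipschitz bound of $\Phi\mapsto h_\Phi(\by)$ with respect to $\|\cdot\|_{2\to2}$.

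We argue by induction over the layers. The soft-thresholding operator is $1$-Lipschitz. The condition on $\lambda$ (read as a step-size normalization, so that $\|\bI-\bPhi^T\bA^T\bA\bPhi\|_{2\to2}\le1$) makes each layer nonexpansive in $\bx$. A perturbation of $\Phi$ changes the $\ell$-th layer by at most a constant times $\|\bA\|\,\|\by\|\,\|\Phi-\Psi\|$ plus the propagated error from earlier layers. Summing over the layers then gives
\[
\|h_\Phi(\by)-h_\Psi(\by)\|_2\lesssim L\,\|\bA\|_{2\to2}\|\by\|_2\,\|\Phi-\Psi\|_{2\to2},
\]
and hence $\rho\lesssim L\sqrt n\,B'\|\Phi-\Psi\|$. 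Here we use $\|\by_i\|\le\|\bA\|B$ and the normalization of $\bA$ to keep the constants in $B$.

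\textbf{Step 3: covering numbers.} Since $\Phi$ enters only through $\bA\Phi$, an $m\times d$ matrix of bounded norm, we cover the set $\{\bA\Phi\}$ rather than all of $O(d)$. A volumetric bound gives $\log N(\epsilon)\le md\log(1+cLB\sqrt n/\epsilon)$, with a factor $\Phi^T$ also handled on the output side. Inserting this into Dudley's integral, with the lower cutoff chosen at scale $\sim B/\sqrt n$, yields
\[
\frac1n\cdot B\sqrt n\sqrt{md\log(L^2)}=B\sqrt{md\log(L^2)/n}.
\]
The cutoff term contributes the smaller $B d\sqrt{\log L}/n$ piece. Combining this with Theorem~\ref{thm:Generalization_General} gives the claim.

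\textbf{Main obstacle.} The hard step is Step 2. The naive recursion multiplies the Lipschitz constants of the layers, which leads to exponential growth in $L$. We need to prove nonexpansiveness of the layers, which holds because $\bPhi$ is orthogonal and $\lambda$ is small, so that the errors accumulate only additively (linearly in $L$) and thus contribute only $\log L$ after the entropy integral. We would first try a direct telescoping argument over the layers. If that gives a factor exceeding $L$, we would instead cover the layer inputs separately.
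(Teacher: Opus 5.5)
Your skeleton (Theorem~\ref{thm:Generalization_General} with $c=B$, vector contraction, Dudley's integral, a Lipschitz bound in $\bPhi$, a volumetric cover) is the route the paper sketches. However, Step~3 rests on a false claim: $\bPhi$ does \emph{not} enter only through $\bA\bPhi$. The layers $h_\ell$ depend on $\bPhi$ only via $\bW=\bA\bPhi$, but the hypothesis is $h_\Phi=\sigma\circ\bPhi\circ h_L\circ\cdots\circ h_1$, and the outer synthesis map $\bPhi$ is not a function of $\bW$. From $\bW^T=\bPhi^T\bA^T$ you only learn $\bPhi\bW^T\bu=\bA^T\bu$, i.e.\ $\bPhi$ on the at most $m$-dimensional space $\operatorname{range}(\bW^T)$. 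On its orthogonal complement, $\bPhi$ may be any isometry onto $\ker\bA$, and all of these choices produce identical layers. Since $\sigma$ commutes with orthogonal maps, $h_\Phi(\by)-h_{\Phi'}(\by)=(\bPhi-\bPhi')\,\sigma(\bz)$ with $\bz=h_L\circ\cdots\circ h_1(\by)$, and soft thresholding generically pushes $\bz$ out of $\operatorname{range}(\bW^T)$. So distinct hypotheses share the same $\bA\bPhi$, a net for $\{\bA\bPhi\}$ is not a net for $\calH_{\text{LISTA}}$, and ``handled on the output side'' is exactly the missing step. If you repair it the way the paper indicates (Lipschitz constant in $\|\bPhi-\bPhi'\|_{2\to2}$ and a cover of $O(d)$ itself), you get $\log N(\epsilon)\lesssim d^2\log(1+K\sqrt n/\epsilon)$. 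Dudley then returns a term of order $Bd\sqrt{\log(\cdot)/n}$, which is not dominated by the right-hand side of the statement when $m\ll d$ and $n$ is large. As written, your argument supports $d^2$ in place of $md$; obtaining $md$ needs an additional idea for the outer factor $\bPhi$, which the proposal does not contain.

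There are also smaller issues. In Step~2 your recursion omits the term $\|(\bPhi^T\bA^T\bA\bPhi-(\bPhi')^T\bA^T\bA\bPhi')\bx^{\ell-1}\|_2\le 2\|\bA\|_{2\to2}^2\|\bPhi-\bPhi'\|_{2\to2}\|\bx^{\ell-1}\|_2$. Nonexpansiveness only gives $\|\bx^{\ell-1}\|_2\le(\ell-1)\|\bA\|_{2\to2}\|\by\|_2$, so the increments grow with $\ell$ and the Lipschitz constant is of order $L^2$, not $L$. This is consistent with the $\log(L^2)$ in the statement and harmless, since it enters only logarithmically. Nonexpansiveness of $\bI-\bPhi^T\bA^T\bA\bPhi$ comes from $\|\bA\|_{2\to2}\le\sqrt2$ (or a step-size normalization), not from the threshold being small. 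With $c=B$, Theorem~\ref{thm:Generalization_General} yields $B\sqrt{\log(4/\delta)/n}$. A factor $B$ cannot be absorbed into an absolute constant, so flag that discrepancy with the printed statement rather than hiding it in $\lesssim$. Finally, your account of the $Bd\sqrt{\log L}/n$ term as coming from a truncation level does not follow from Dudley's inequality, though you do not need that term for an upper bound.
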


A nice feature of the generalization bound in Theorem \ref{thm:GeneralizationLISTA_simple} is that the depth $L$ only enters logarithmically into the error. To deduce Theorem \ref{thm:GeneralizationLISTA_simple} from Theorem \ref{thm:Generalization_General}, one controls the empirical Rademacher complexity of $\ell \circ \calH_{\text{LISTA}}$ via 
Dudley's inequality, see e.g.\ \cite{foucart2013mathematical}. 
In order to bound the appearing covering numbers, one essentially needs to bound the Lipschitz constants of the functions $h_\Phi$
with respect to the parameters $\Phi$ and for fixed input.

Theorem \ref{thm:GeneralizationLISTA_simple} can be extended to more general deep network architectures \cite{schnoor2023generalization} and to other types of unrolled neural networks \cite{kouni25}.
For further discussion on generalization bounds, we refer the reader to \cite{jiang2019fantastic,behboodi2022compressive,schnoor2023generalization,shalev2014understanding,mohri2018foundations,kouni25}.

\section{Implicit Regularization}
\label{sec:implicit-reg}

In many scenarios of deep learning, the number of network parameters exceeds the number of available training samples by far. In such overparameterized regimes, there are usually infinitely many networks interpolating the training data exactly so that the empirical risk has infinitely many global minimizers (and generalization bounds like the ones of the previous section become vacuous).
Many of these interpolating networks will make \corr{incorrect} predictions on new data.
Nevertheless, learned neural networks often still generalize well to unseen data \cite{zhang2017understanding}. This finding challenges the \corr{conventional} wisdom that complicated models \corr{cause} overfitting \cite{hastie2001elements} and \corr{that} model design should reflect a tradeoff between bias and variance \cite{geman1992neural,kohavi1996bias}. In fact, the double descent curve discovered by \cite{belkin2020double} demonstrates that overparameterization does not necessarily lead to \corr{a} large generalization error, as long as the level of overparameterization is sufficiently high, as depicted in Figure~\ref{fig:double-descent}.



\begin{figure}
\centering
\begin{subfigure}[c]{0.48\textwidth}
\includegraphics[width=\textwidth]{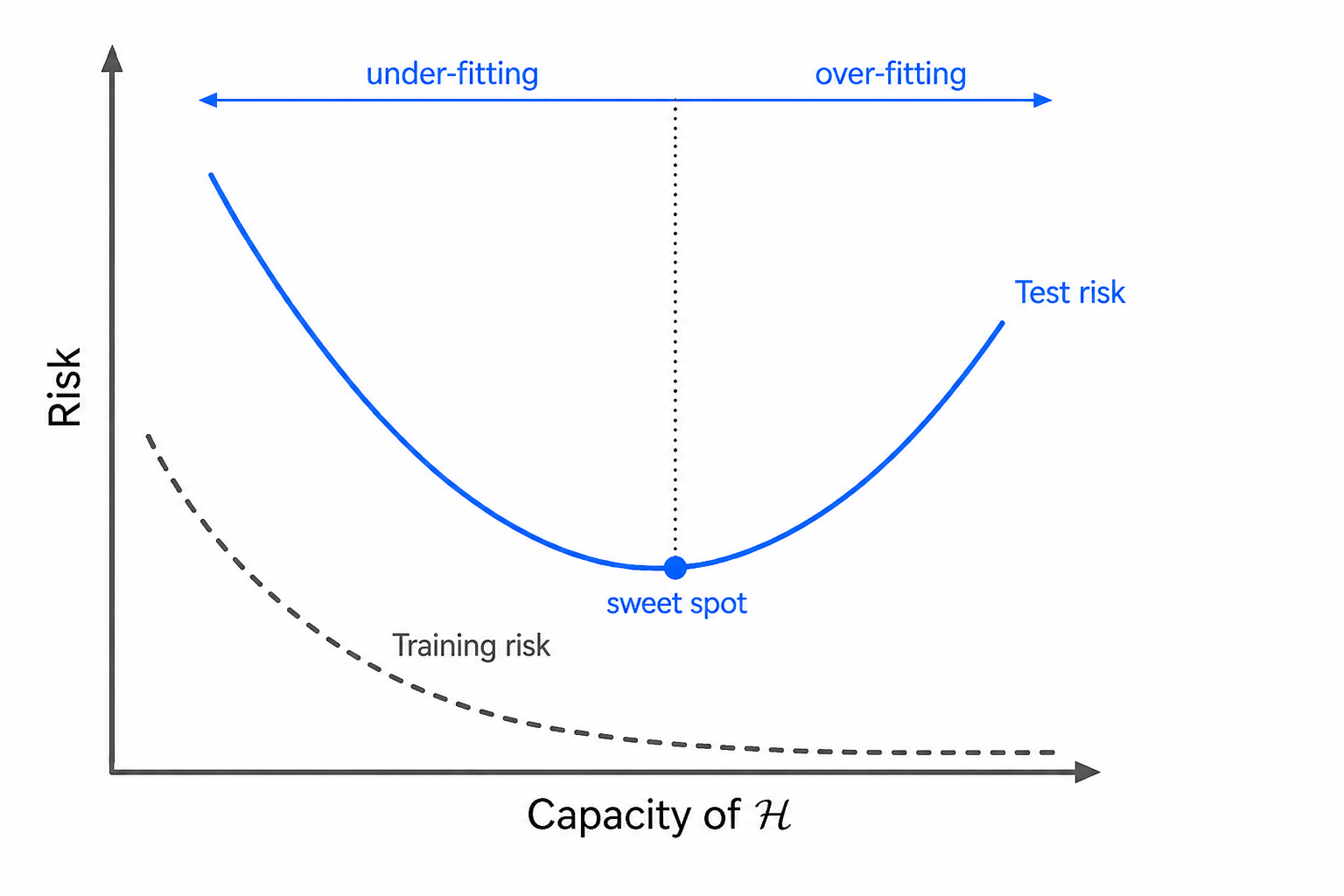}
\subcaption{Classical picture: neither too few nor too many parameters are desirable.}
\end{subfigure}
\hfill
\begin{subfigure}[c]{0.48\textwidth}
\includegraphics[width=\textwidth]{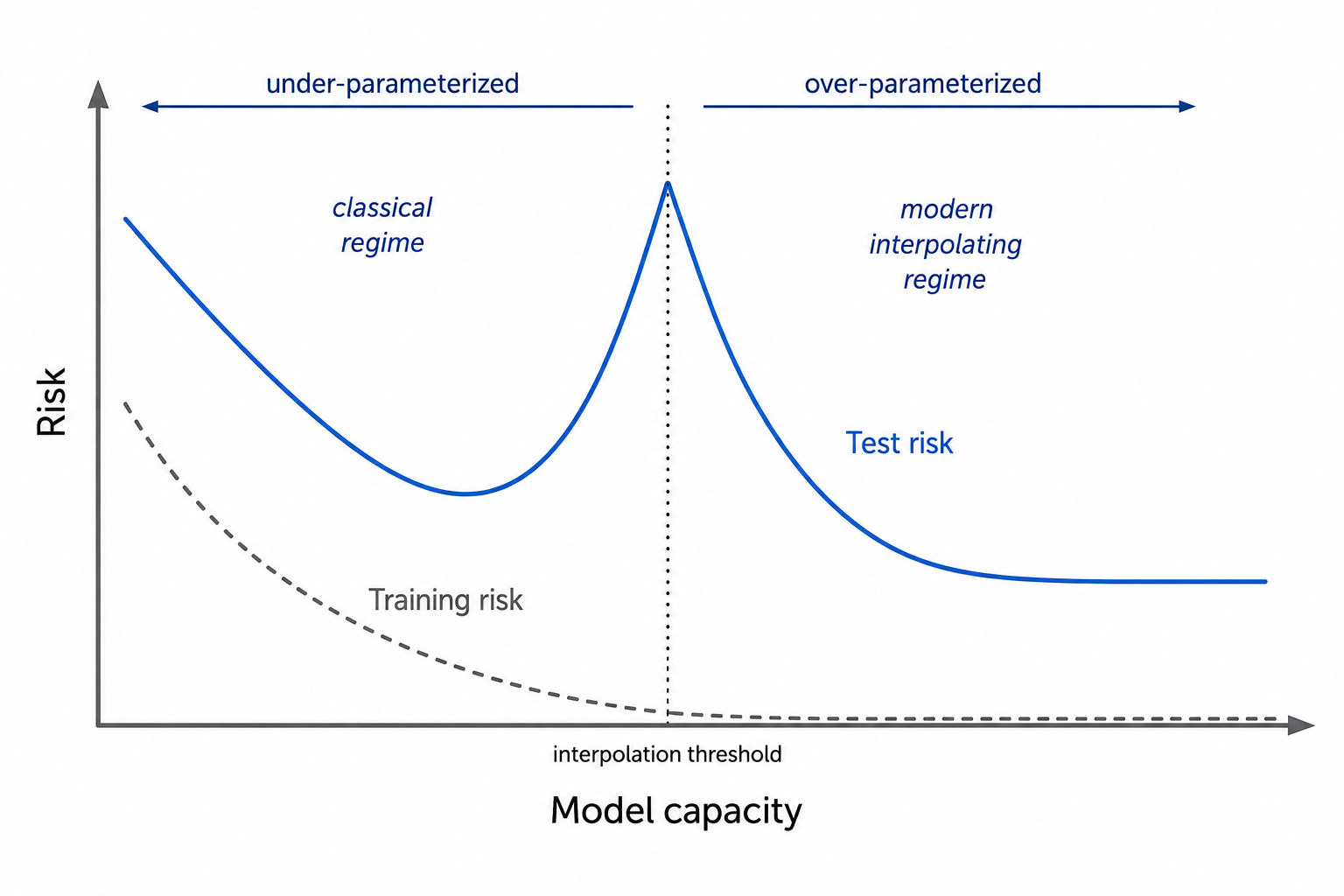}
\subcaption{Double descent curve: Overparameterization beyond the number of training samples can lead to good generalization.}
\end{subfigure}
\caption{Curves for training risk and test risk.}
\label{fig:double-descent}
\end{figure}

In the overparameterized setting, the employed optimization algorithm and its hyperparameters\corr{,} such as initialization\corr{,} have
\corr{a} significant influence on the computed solution.
This is termed implicit regularization (or implicit bias). Apparently, commonly used algorithms such as (stochastic) gradient descent 
have implicit bias towards solutions that generalize well. This phenomenon is at the core of understanding why deep learning works so well, but so far only partial results for simplified models are available on which we report in the following. 
They hint towards the hypothesis that the implicit regularization of (stochastic) gradient descent (with small initialization) leads to models of low complexity such as sparsity of the weight matrices.
This also reveals a perhaps unexpected connection to compressive sensing. \corr{Considerably more research is needed} to understand this phenomenon \corr{for} more realistic deep learning models.

\subsection{Overparameterization and implicit regularization}


For the purpose of this exposition, we will focus on learning neural networks $h=h_{\theta}$ of the form \eqref{eq:FNN}, where $\btheta = (\bW_1,\hdots,\bW_N,\bb_1,\hdots,\bb_N)$ collects the learnable parameters, via gradient descent and its continuous analog gradient flow. Given training data $(\bx_i,\by_i)$, $i=1,\hdots,n$, and a suitable loss function $\ell$, we aim at minimizing the associated empirical risk $\mathcal L(\btheta) := \calR_n (h_\btheta)$ defined in \eqref{emp:risk}. For simplicity, we will restrict ourselves to losses with $\min_\btheta \mathcal L(\btheta) = 0$ here.

Gradient flow starts from a suitable initialization $\btheta(0)$ and considers the solution $\btheta(t)$, $t \in [0,\infty)$, of the ODE
\begin{equation}
\label{def:grad-flow}
\frac{d}{dt} \btheta(t) = - \nabla\mathcal{L}(\btheta(t)).
\end{equation}
Gradient descent can be considered as Euler discretization of this ODE. Starting from $\btheta(0)$ it computes the iterations
\begin{equation}
\label{def:grad-descent}
\btheta(\tdisc+1) = \btheta(\tdisc) - \eta_\tdisc\nabla\mathcal{L}(\btheta(\tdisc)),
\end{equation}
where $\eta_\tdisc>0$ are the step sizes (learning rates).
Gradient flow is often easier to analyze, but in practice only (variants of) gradient descent can be computed. 
In both cases, we are interested in the limit $\btheta_\infty = \lim_{t \to \infty} \btheta(t)$ and $\btheta_\infty = \lim_{\tdisc \to \infty} \btheta(\tdisc)$. Ideally, $\bx_\infty$ reaches a global minimizer of $\mathcal{L}$. In the overparameterized scenario however, a global minimizer will usually not be unique and we would like to characterize which minimizer is selected as $\btheta_\infty$.


The hope for a theoretical analysis of implicit regularization is that a variational principle for the limit $\btheta_\infty$ can be identified of the form
\[
\btheta_\infty \in \argmin_{\btheta} R(\btheta) \quad \mbox{ subject to } L(\btheta) = 0.
\]
The regularizer $R$ may depend on hyperparameter choices such as the initialization $\btheta(0)$ and the step sizes. It describes the implicit regularization and properties of $\btheta_\infty$. For instance, we will later see that $R$ will be close to an $\ell_1$-norm in a specific situation, which implies that gradient descent/flow may promote sparsity as implicit bias.

However, we \corr{do not} expect that implicit regularization can always be characterized by a simple variational problem as above. For a nonlinear neural network with many layers\corr{,} the situation may be much more complicated. Recent research even suggest that neural network generalization is only explainable by the balanced interaction of several different types of implicit regularization of gradient descent \cite{matveev2026conflicting}.

It is still believed that implicit regularization is a key to analyze the success of modern machine learning models, where $\mathcal{L}=0$ does not uniquely define a solution, and hence the algorithm and initialization must be taken into account.


\medskip

Let us illustrate the idea of implicit regularization with a simple example. Consider the least squares loss 
\[
\Lls(\btheta)=\frac{1}{2}\|\bA\btheta-\by\|_2^2
\]
with gradient flow $\btheta(t)$ and initialization $\btheta(0)=\btheta_0$, where $\bA \in \R^{m\times d}$ with $m<d$. 
Then there are infinitely many minimizers of $\Lls$. (If $\bA$ has full rank these are all solutions of 
$\bA \btheta = \by$.)
The following \corr{result} shows that gradient flow 
converges to a limit that can be easily characterized.
In order to state \corr{it,} we denote the Moore--Penrose \corr{pseudoinverse} of $\bA$ by $\bA^\dagger$.
\begin{proposition}
Consider the gradient flow
\begin{equation}\label{eq:lsq_pseudo}
    \dot{\btheta}(t)=-\nabla\Lls(\btheta), \qquad \btheta(0)=\btheta_0
\end{equation}
for $\Lls$.
Then $\btheta(t)$ converges as $t\to\infty$, and its limit is given by
\begin{equation*}
    \btheta_\infty=\bA^\dagger\by+(\bI-\bA^\dagger\bA)\btheta_0.
\end{equation*}
\end{proposition}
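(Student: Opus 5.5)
The plan is to write the gradient flow as a linear ODE and decompose $\R^d$ into $\ker(\bA)$ and its orthogonal complement $\operatorname{ran}(\bA^T)$. Since $\nabla\Lls(\btheta)=\bA^T(\bA\btheta-\by)$, the flow \eqref{eq:lsq_pseudo} reads $\dot\btheta=-\bA^T\bA\btheta+\bA^T\by$. Let $\bP=\bA^\dagger\bA$, the orthogonal projection onto $\operatorname{ran}(\bA^T)=\ker(\bA)^\perp$.

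\textbf{Step 1: the null-space component is frozen.} We have $\dot\btheta(t)\in\operatorname{ran}(\bA^T)$ for all $t$. Hence $(\bI-\bP)\dot\btheta=0$, and so $(\bI-\bP)\btheta(t)=(\bI-\bP)\btheta_0$ for all $t$.

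\textbf{Step 2: the range component converges.} Set $\bu(t)=\bP\btheta(t)-\bA^\dagger\by$. Note that $\bA^\dagger\by\in\operatorname{ran}(\bA^T)$. Using $\bA^T\bA\bA^\dagger=\bA^T(\bA\bA^\dagger)$ and $\bA^T\bA\bA^\dagger\by=\bA^T\bQ\by$ with $\bQ=\bA\bA^\dagger$ the projection onto $\operatorname{ran}(\bA)$, we get $\bA^T\bQ=\bA^T$, because $\bA^T$ annihilates $\operatorname{ran}(\bA)^\perp$. Hence $\bA^T\by=\bA^T\bA\bA^\dagger\by$. Together with $\bA^T\bA(\bI-\bP)=0$, this gives the homogeneous ODE
$$\dot\bu=-\bA^T\bA\,\bu,$$
with $\bu(0)\in\operatorname{ran}(\bA^T)$. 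Its solution is $\bu(t)=e^{-t\bA^T\bA}\bu(0)$. Diagonalizing the symmetric positive semidefinite matrix $\bA^T\bA$, the subspace $\operatorname{ran}(\bA^T)$ is spanned by eigenvectors with strictly positive eigenvalues, the smallest of which is $\sigma_{\min}^2>0$, the smallest nonzero singular value squared. Thus $\|\bu(t)\|_2\le e^{-\sigma_{\min}^2 t}\|\bu(0)\|_2\to 0$. (If $\bA=0$ the claim is trivial.)

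\textbf{Step 3: combine.} Writing $\btheta(t)=\bP\btheta(t)+(\bI-\bP)\btheta(t)$ gives $\btheta(t)\to\bA^\dagger\by+(\bI-\bA^\dagger\bA)\btheta_0$, which is the claimed limit.

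The only step needing care is Step 2. There one must verify, via standard Moore--Penrose identities, that the inhomogeneity $\bA^T\by$ is exactly absorbed by $\bA^\dagger\by$ even when $\by\notin\operatorname{ran}(\bA)$. One must also confine the decay argument to $\operatorname{ran}(\bA^T)$, where $\bA^T\bA$ is invertible. Working in an SVD basis $\bA=\bU\boldsymbol\Sigma\bV^T$ makes both points transparent: the coordinates $\bV^T\btheta$ decouple into scalar ODEs $\dot z_j=-\sigma_j^2z_j+\sigma_j(\bU^T\by)_j$, whose limits are $(\bU^T\by)_j/\sigma_j$ for $\sigma_j>0$ and $z_j(0)$ otherwise. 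This is exactly the stated formula.
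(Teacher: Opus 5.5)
Your proof is correct and is essentially the paper's argument. Both shift by $\bA^\dagger\by$, use the identity $\bA^T\bA\bA^\dagger\by=\bA^T\by$ to obtain the homogeneous flow $\exp(-t\bA^T\bA)$, and rely on the spectral structure of $\bA^T\bA$. The only difference is the order of bookkeeping: the paper passes to the limit $\exp(-t\bA^T\bA)\to\bI-\bA^\dagger\bA$ on the whole vector and then uses $(\bI-\bA^\dagger\bA)\bA^\dagger\by=0$, whereas you split off the frozen kernel component first, which also gives you the explicit rate $e^{-\sigma_{\min}^2 t}$.
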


\begin{proof}
Define $\bz(t)=\btheta(t)-\bA^\dagger\by$. Since $\bA^\top(\bA\bA^\dagger\by-\by)=0$, we have
\begin{align*}
    \frac{d}{dt}{\bz}(t)
    &=-\bA^\top(\bA\btheta(t)-\by)\\
    &=-\bA^\top(\bA\btheta(t)-\bA\bA^\dagger\by)
    =-\bA^\top\bA\bz(t).
\end{align*}
Therefore
\begin{equation*}
    \bz(t)=\exp(-t\bA^\top\bA)(\btheta_0-\bA^\dagger\by).
\end{equation*}
Since $\bA^\top\bA$ is symmetric and positive semidefinite,
\begin{equation*}
    \lim_{t\to\infty}\exp(-t\bA^\top\bA)=\bI-\bA^\dagger\bA
\end{equation*}
because $\bI-\bA^\dagger\bA$ is the orthogonal projector onto the kernel of $\bA$. Consequently,
\begin{equation*}
    \btheta_\infty=\bA^\dagger\by+(\bI-\bA^\dagger\bA)(\btheta_0-\bA^\dagger\by).
\end{equation*}
Since $\bA^\dagger\by\in\operatorname{range}(\bA^\top)$ and $\bI-\bA^\dagger\bA$ is the orthogonal projection onto $\ker(\bA)$,
\begin{equation*}
    (\bI-\bA^\dagger\bA)\bA^\dagger\by=0.
\end{equation*}
Hence
\begin{equation*}
    \btheta_\infty=\bA^\dagger\by+(\bI-\bA^\dagger\bA)\btheta_0. \tag*{\qed}
\end{equation*}
\end{proof}
In fact, \eqref{eq:lsq_pseudo} can be expressed as
\begin{equation}\label{eq:implicit_bias_onelayer}
    \btheta_\infty \in \argmin_{\bz\in\mathbb{R}^d}\|\bz-\bx_0\|_2^2,\quad
    \text{subject to}\quad \mathcal{L}(\bz)= 0.
\end{equation}

\medskip

Also a different type of implicit regularization may occur where the training dynamics $\btheta(t)$ remains close to certain solutions for longer time intervals before moving quickly to other solutions. In such a situation early stopping may promote implicit regularization and promote certain desirable properties. If one considers only a first phase of the training this phenomenon is sometimes referred to as early alignment \cite{boursier2024}. We will discuss a similar phenomenon in the very simplified setting of matrix regression below \cite{chou2024gradient}.

We note that below we will focus on the case of small initialization, where one observes so-called feature learning. For larger (random) initialization and for very wide networks the training dynamics is usually simpler in the sense that the weights do not move far from their initialization. In this situation, the dynamics can be approximated by one in a linear reproducing kernel Hilbert space generated by the so-called neural tangent kernel \cite{jacot2018ntk}. The latter is sometimes referred to as the lazy regime \cite{ch19}, while the former (to be discussed below in more detail) refers to the rich regime \cite{wo20}.

\medskip

It seems daunting to analyze implicit regularization for gradient descent applied to realistic deep neural networks directly. For a mathematical analysis of the basic principles underlying implicit regularization it is useful to first reduce to simplified minimization problems which still show the same characteristics, namely, (i) infinitely many global minimizers, and (ii) a factorization or compositional structure as in deep neural networks.

As a first example, we discuss an underdetermined linear inverse problem (as in compressive sensing) where the unknown is factorized as a linear neural network.
As a slightly more advanced model, we pass to fully connected linear networks, which, however, are already significantly more difficult to analyze and where basic problems are still open.

\subsection{Diagonal linear networks}

We consider the problem of minimizing the rather simple function
\begin{equation}\label{def:ls}
\mathcal{K}(\bx) = \frac{1}{2} \|\bA \bx - \by\|_2^2, \quad \bx \in \R^d,
\end{equation}
where $\bA \in \R^{m \times d}$ with $m < d$ and 
$\by \in \corr{\R^m}$.  Obviously, $\mathcal{K}$ has infinitely many global \corr{minimizers}, namely all solutions to the underdetermined system $\bA \bx = \by$.\footnote{Even when $\bA$ does not have full rank and this system does not have a solution, then \corr{$\mathcal{K}$} still has infinitely many global minimizers given by $\bx = \bA^\dagger \by + \bv$ with $\bv \in \operatorname{ker}(\bA)$ and $\bA^\dagger$ being the \corr{pseudoinverse} of $\bA$.} If \corr{$\by = \bA \bx_0$} for a sparse $\bx_0$\corr{,} the problem of finding a sparse global minimizer of $\mathcal{K}$ is a compressive sensing problem as discussed in \corr{Section}~\ref{sec:Basics-CS}.

In order to inject a neural network structure into this problem we factorize $\bx$ as 
\begin{equation}
\label{x-factorization}
\bx = \bx^{(L)}\odot\cdots\odot\bx^{(1)},
\end{equation}
where $(\bv \odot \bw)_j = \bv_j \bw_j$, $j=1,\hdots,d$, denotes the Hadamard product of two vectors $\bv,\bw \in \R^d$.
Note that the Hadamard product may be written in terms of matrix multiplication of diagonal matrices
\[
\diag(\bx) = \diag(\bx^{(L)}) \cdots \diag(\bx^{(1)}),
\]
which is the reason why the above Hadamard factorization is also referred to as linear diagonal neural network, i.e.,
a neural network of the form \eqref{eq:FNN} with identity as activation function $\sigma$, the weight matrices of the form $\bW^{(\ell)} = \diag(\bx^{(l)})$ and zero biases $\bb^{(l)}=0$. 

Combining all vectors \corr{$\bx^{(1)},\hdots,\bx^{(L)}$} into a parameter vector $\btheta$ and plugging the factorization \eqref{x-factorization} into the function \corr{$\mathcal{K}$,} we obtain
\begin{align*}
    \mathcal{L}(\btheta) &= \mathcal{L}(\bx^{(1)},\hdots,\bx^{(L)})
    = \mathcal{K}(\bx^{(L)} \odot \cdots \odot \bx^{(1)}) \\
    &= \frac{1}{2} \|\bA (\bx^{(L)}\odot \cdots \odot \bx^{(1)}) - \by\|_2^2.
\end{align*}
Note that, in the following, we will distinguish between a target loss $\mathcal K$ and its (overparameterized) realization $\mathcal L$ that is minimized via gradient descent/flow.
We consider minimization $\mathcal{L}$ via gradient flow \eqref{def:grad-flow}, i.e.,
\begin{equation}
\label{grad-flow-diag}
\frac{d}{dt} \bx^{(\ell)}(t) = - \nabla_{\bx^{(\ell)}} \mathcal{L}(\bx^{(1)}(t),\hdots,\bx^{(L)}(t)).
\end{equation}
For our analysis we assume that all components $\bx^{(\ell)}$ are initialized identically, i.e.,
\begin{equation}
\label{diag-init}
\bx^{(\ell)}(0) = \alpha \1, \quad \ell = 1,\hdots, L,
\end{equation}
where $\alpha > 0$ is a (small) scalar.
We are interested in the behaviour of the product
\begin{equation}
\label{prod-diag}
\xprod(t) = \bx^{(L)}(t)\odot\cdots\odot\bx^{(1)}(t)
\end{equation}
as $t$ tends to $\infty$. Perhaps surprisingly, 
the next result \cite[Theorem 2.1]{chou2023more} states that $\bx_\infty = \lim_{t \to \infty} \xprod(t)$ approximates an $\ell_1$-minimizing solution of $\bA \bx = \by$ provided the initialization is small enough. In other words, the implicit regularization
of gradient flow on $\mathcal{L}$ is towards sparse solutions.

\begin{theorem}\label{thm:dln_ell_1}
    Let $L\geq 2$, $\bA\in\mathbb{R}^{m\times d}$ and $\by\in\mathbb{R}^{m}$. Assume that $S_+ = \{\bz\geq \0:\bA\bz = \by\}$ is non-empty. Consider the gradient flow \eqref{grad-flow-diag} with initialization \eqref{diag-init} for some $\alpha > 0$, and the product flow $\xprod(t)$ in \eqref{prod-diag}.
Then the limit $\xprodinfty:= \lim_{t\to\infty}
    \xprod(t)$
    exists and $\xprodinfty \in S_+$. Moreover, if $Q:=\min_{\bz\in S_+}\|\bz\|_1>\|\xprod(0)\|_1$, then 
    \begin{equation}\label{eq:L1min_Q}
        \|\xprodinfty\|_1-Q\leq\epsilon Q
    \end{equation}
    where $\epsilon$ is defined as
    \begin{align}\label{eq:L1min_general}
        \epsilon:=\begin{cases}
            \frac{\log(d)}{\log\left(\frac{Q}{d\alpha^2}\right)} &\text{if }L=2,\\[6pt]
            \frac{L(d^\gamma-1)}{ 2((Q/\alpha^L)^\gamma-d^\gamma)}&\text{if }L>2,
        \end{cases}
    \end{align}
    where $\gamma=1-\frac{2}{L}$. Note that $\epsilon$ goes to zero as $\alpha$ goes to zero.
\end{theorem}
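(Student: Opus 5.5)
Because all factors start at $\alpha\1$ and the flow \eqref{grad-flow-diag} is symmetric in $\ell$, the plan is to reduce everything to a mirror-descent-type flow on the product $\xprod(t)$ and then read off an implicit variational principle. Uniqueness of ODE solutions gives $\bx^{(1)}(t)=\dots=\bx^{(L)}(t)=:\bu(t)$ for all $t$, since the identical-factor ansatz solves the system. With $\br(t):=\bA^T(\bA\xprod(t)-\by)$ we get $\dot\bu=-\br\odot\bu^{L-1}$ entrywise, hence
\[
\dot x_j=L u_j^{L-1}\dot u_j=-L\,r_j\,x_j^{2-2/L}.
\]
In particular each $x_j$ stays positive for all finite $t$ (a scalar ODE that cannot cross zero). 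Integrating gives the key identity $g(x_j(t))-g(\alpha^L)=-L\int_0^t r_j(s)\,ds$, where $g(x)=\log x$ for $L=2$ and $g(x)=-\frac{L}{L-2}x^{-(1-2/L)}$ for $L>2$ (so $g'(x)=x^{-(2-2/L)}$). Thus $\nabla F(\xprod(t))-\nabla F(\xprod(0))\in\operatorname{range}(\bA^T)$, where $F(\bx)=\sum_j G(x_j)$ with $G'=g$. This is a Bregman/mirror flow with strictly convex potential $F$ on the positive orthant.

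\textbf{Step 1 (convergence).} Since $\mathcal L$ is nonincreasing, $\|\bA\xprod-\by\|_2$ stays bounded. Using a point $\bz\in S_+$, I would show that the Bregman divergence $D_F(\bz,\xprod(t))$ is nonincreasing: its derivative equals $-L\langle\bA\xprod-\by,\bA\xprod-\by\rangle\le0$ after substituting the dynamics. This yields boundedness of $\xprod(t)$ and integrability of $\|\bA\xprod-\by\|^2$, so $\mathcal L\to0$. A standard Opial-type argument for Bregman-monotone trajectories (applied along a convergent subsequence, whose limit is in $S_+$ and can be taken as the new $\bz$) then gives convergence of $\xprodinfty\in S_+$. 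Some care is needed here because $F$ can blow up at the boundary where coordinates vanish in the limit. I would handle this by working with $D_F$ in a form that stays finite, such as $D_F(\bz,\cdot)$ for $\bz\ge0$, which is finite for $L\ge2$ when $\bx>0$.

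\textbf{Step 2 (variational characterization).} Passing to the limit in the identity, $\nabla F(\xprodinfty)-\nabla F(\alpha^L\1)\in\operatorname{range}(\bA^T)$ holds on coordinates where $\xprodinfty>0$, and with the appropriate one-sided inequality on vanishing coordinates. These are exactly the KKT conditions showing that $\xprodinfty$ minimizes $D_F(\bz,\alpha^L\1)$ over $S_+$. The main obstacle is justifying the KKT inequality on zero coordinates, where $g\to-\infty$. I would argue via the limit of $\int_0^t r_j$ diverging to $+\infty$, which is consistent with the inequality.

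\textbf{Step 3 (quantitative $\ell_1$ bound).} Let $\bz^\star$ be an $\ell_1$-minimizer in $S_+$. Optimality gives $\Phi(\xprodinfty)\le\Phi(\bz^\star)$, where $\Phi(\bz)=\sum_j[G(z_j)-g(\alpha^L)z_j]$ up to constants. For $L=2$ this is
\[
\sum_j\big(z_j\log(z_j/\alpha^2)-z_j\big),
\]
which on the simplex-like set $\{\|\bz\|_1=c\}$ is bounded above and below: $c\log(c/(d\alpha^2))-c\le\Phi\le c\log(c/\alpha^2)-c$. Comparing $c=\|\xprodinfty\|_1$ with $Q$ and using monotonicity in $c$ gives $\|\xprodinfty\|_1\log\big(\|\xprodinfty\|_1/(d\alpha^2)\big)\le Q\log(Q/\alpha^2)$, which rearranges to $\|\xprodinfty\|_1\le(1+\epsilon)Q$. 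The hypothesis $Q>\|\xprod(0)\|_1=d\alpha^L$ ensures positivity of the denominators. For $L>2$ the same comparison applies with power-type $G$, via the Jensen/Hölder bounds $c^{\,\cdot}d^{\,\cdot}$ for $\sum z_j^{2/L}$, and produces the stated $\epsilon$ with $\gamma=1-2/L$. I expect this algebra to be routine once the bounds are set up.
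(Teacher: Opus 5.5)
Your plan follows the paper's route. You use identical factors, the mirror-flow identity for $g(x_j(t))$, and a Bregman potential $F$ for which $\frac{d}{dt}D_F(\bz,\xprod(t))$ is the same for every $\bz\in S_+$ (Lemma~\ref{lemma:Bregman_nonincreasing}). You then identify $\xprodinfty$ as the Bregman projection of $\xprod(0)$ onto $S_+$ and finish with elementary inequalities. Your Opial-type Step~1 is sound, and more careful than the paper's sketch.

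Step~2, however, fails as formulated. Where $(\xprodinfty)_j=0$ you have $g(0^+)=-\infty$, so the one-sided partial derivative of $\Phi$ in coordinate $j$ at $\xprodinfty$ is $-\infty$. The objective is not differentiable there, so no KKT certificate with finite multipliers exists. Your candidate multiplier vectors $-L\int_0^t\bA^T(\bA\xprod(s)-\by)\,ds$ diverge on exactly these coordinates. Divergence that is ``consistent with the inequality'' therefore certifies nothing, and such coordinates do occur (e.g.\ whenever $S_+\subset\{\bz: z_j=0\}$).

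The missing step is already in your Step~1 computation, and it is the paper's argument. Because the derivative is $\bz$-independent,
\[
D_F(\bz,\xprod(t))-D_F(\xprodinfty,\xprod(t))=D_F(\bz,\xprod(0))-D_F(\xprodinfty,\xprod(0))\qquad\text{for all } t\ge 0,\ \bz\in S_+ .
\]
Consider the $j$-th summand of $D_F(\xprodinfty,\xprod(t))$. Where $(\xprodinfty)_j>0$ it tends to $0$ by continuity. Where $(\xprodinfty)_j=0$ it equals a constant multiple of $x_j(t)$ for $L=2$, resp.\ of $x_j(t)^{2/L}$ for $L>2$. Hence $D_F(\xprodinfty,\xprod(t))\to 0$; this is the same fact your Opial argument uses. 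Nonnegativity of $D_F(\bz,\xprod(t))$ then gives $D_F(\bz,\xprod(0))\ge D_F(\xprodinfty,\xprod(0))$ for all $\bz\in S_+$, with no KKT conditions needed. Equivalently, use convexity, $\Phi(\bz)\ge\Phi(\xprod(t))+\langle\nabla\Phi(\xprod(t)),\bz-\xprod(t)\rangle$, together with $\nabla\Phi(\xprod(t))\in\operatorname{range}(\bA^T)$ and $\bz-\xprodinfty\in\ker\bA$. The remainder vanishes because $x\,g(x)\to 0$ as $x\to 0^+$.

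In Step~3, let $c=\|\xprodinfty\|_1$. Your intermediate inequality $c\log(c/(d\alpha^2))\le Q\log(Q/\alpha^2)$ does not follow from your two bounds. They give only $c\log(c/(d\alpha^2))-c\le Q\log(Q/\alpha^2)-Q$, and since $c\ge Q$ you cannot drop the linear terms. Instead write $c=(1+e)Q$ and use $(1+e)\log(1+e)\ge e$. This yields $e\log(Q/(d\alpha^2))\le\log d$, which is the claim.

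For $L>2$, set $M=(Q/\alpha^L)^\gamma$. The same comparison gives $Me\le\frac{L}{2}\bigl(d^\gamma(1+e)^{2/L}-1\bigr)$. Bernoulli's inequality $(1+e)^{2/L}\le 1+\frac{2}{L}e$ then gives exactly $e\le\frac{L(d^\gamma-1)}{2(M-d^\gamma)}$. The hypothesis $Q>d\alpha^L=\|\xprod(0)\|_1$ is what ensures $M>d^\gamma$.
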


A more general result for arbitrary positive initialization was shown in \cite{chou2023more}. \corr{Since} this requires weighted $\ell_1$-\corr{minimization}, we decided to state the simpler version above. The restriction of all relevant quantities in Theorem \ref{thm:dln_ell_1} to the positive orthant is caused by the product structure in \eqref{x-factorization} together with the positive identical initialization of all factors in \eqref{diag-init}. We will discuss below how this can be lifted by slightly changing the overparametrization or the initialization.

\begin{figure}[b]
    \centering
    \includegraphics[width = 0.55\textwidth]{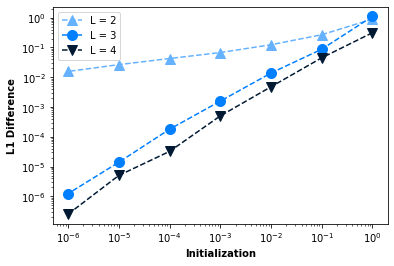}
    \caption{Smaller initialization leads to stronger bias/regularization. Shallow ($L=2$) and deep structure ($L>2$) exhibit different scalings.}
    \label{fig:alpha}
\end{figure}


Note that the remarkable phenomenon of convergence to $\ell_1$-minimizers only occurs when the network has more than one layer, while the \corr{one-layer} case corresponds to $\Lls$\corr{,} which has implicit regularization towards minimal $\ell_2$ rather than $\ell_1$ according to \eqref{eq:implicit_bias_onelayer}. Hence, given that $\bA$ satisfies the null space property, deep diagonal linear \corr{networks} have an implicit regularization towards sparse solutions even though such \corr{a} preference is not obvious from the optimization problem. 

Theorem \ref{thm:dln_ell_1} also explains why the depth of the neural networks plays an important factor, even when the function class represented by the models are identical for all depth (since they are all linear models). Indeed the scaling of $\epsilon$ in $\alpha$ is different for $L=2$ (the shallow case) 
and $L>2$ (corresponding to deep networks),
as demonstrated in 
Figure~\ref{fig:alpha}. 

The proof of Theorem \ref{thm:dln_ell_1} heavily relies on the observation that the trajectory of $\xprod$ approaches the solution set $S_+$ in a particular way, as stated in Lemma \ref{lemma:Bregman_nonincreasing} below. It makes use of the Bregman divergence which, for a strictly convex function $F$ on a subset of $\R^d$, is defined as
\[
D_F(\bz,\bx):=F(\bz)-F(\bx)-\langle\nabla F(\bx),\bz-\bx\rangle.
\]
It is a nonsymmetric generalization of the notion of distance.
Depending on the number of layers, we will use the function $F : \R^d_+ \to \R$, defined as
\begin{equation}\label{eq:Breman_positive}
    F(\bx) 
    = \begin{cases}
    \frac{1}{2} \langle \bx \odot \log(\bx) - \bx, \1 \rangle & \text{if }L = 2\\
    \frac{L}{2(2-L)} \langle \bx^{\odot \frac{2}{L}}, \1 \rangle & \text{if }L > 2.
    \end{cases} 
\end{equation}
The next result states that the rate of decrease of the Bregman divergence between $\xprod$ and any element $\bz \in S_+$ is independent of $\bz$. We will include the proof here because it highlights the crucial ideas that persistently appear also in results that we will present later.

\begin{lemma}[\cite{chou2023more}, Lemma 2.6]\label{lemma:Bregman_nonincreasing}
Let $\bx^{(\ell)}(t)$ follow the gradient flow \eqref{grad-flow-diag} initialized with $\bx^{(\ell)}(0) = \alpha\1$, $\ell=1,\hdots,L$, for $\alpha > 0$. Then, for any $\bz\in S_+$, 
\begin{equation}\label{eq:DFderivative}
    \frac{d}{dt} D_F(\bz,\xprod(t))
    =-2L\cdot\mathcal{L}(\bx^{(1)}(t),\cdots,\bx^{(L)}(t)).
\end{equation}
\end{lemma}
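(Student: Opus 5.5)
The plan is to exploit the symmetry of the initialization to collapse the $L$ factors into a single vector, and then to compute the time derivative of $D_F(\bz,\xprod(t))$ directly.

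\emph{Step 1: the factors stay identical and positive.} The gradient of $\mathcal{L}$ with respect to $\bx^{(\ell)}$ is
\[
\nabla_{\bx^{(\ell)}}\mathcal{L} = \Big(\textstyle\prod_{k\neq \ell}\bx^{(k)}\Big)\odot \bA^T(\bA\xprod-\by).
\]
This vector field is symmetric under permuting the factors. Since all factors start at $\alpha\1$, uniqueness of ODE solutions gives $\bx^{(\ell)}(t)=\bu(t)$ for every $\ell$, where
\[
\dot\bu = -\bu^{\odot(L-1)}\odot \br, \qquad \br:=\bA^T(\bA\xprod-\by).
\]
Each coordinate of $\bu$ solves a scalar ODE of the form $\dot u_j = -u_j^{L-1}r_j(t)$. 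This vanishes at $u_j=0$, so by uniqueness $\bu$ stays strictly positive.

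\emph{Step 2: the dynamics of the product.} Since $\xprod=\bu^{\odot L}$, we get
\[
\dot{\xprod}=L\bu^{\odot(L-1)}\odot\dot\bu=-L\,\xprod^{\odot(2-2/L)}\odot\br.
\]

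\emph{Step 3: differentiate the Bregman divergence.} Because $\bz$ is fixed,
\[
\frac{d}{dt}D_F(\bz,\xprod)=-\langle \nabla F(\xprod),\dot\xprod\rangle-\langle\nabla^2F(\xprod)\dot\xprod,\bz-\xprod\rangle+\langle\nabla F(\xprod),\dot\xprod\rangle=-\langle \nabla^2F(\xprod)\dot\xprod,\bz-\xprod\rangle.
\]
The key point is that $F$ in \eqref{eq:Breman_positive} is chosen so that its Hessian, which is diagonal, cancels the weight in Step 2:
\[
\nabla^2F(\bx)=\operatorname{diag}\Big(\tfrac12\,\bx^{\odot(2/L-2)}\Big).
\]
This holds for both cases. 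For $L=2$ the Hessian is $\operatorname{diag}(\frac{1}{2x_j})$. For $L>2$, differentiating $\frac{L}{2(2-L)}x^{2/L}$ twice gives
\[
\frac{L}{2(2-L)}\cdot\frac{2}{L}\Big(\frac2L-1\Big)x^{2/L-2}=\tfrac12 x^{2/L-2}.
\]
Consequently
\[
\nabla^2F(\xprod)\dot\xprod=-\frac L2\br,
\]
and therefore
\[
\frac{d}{dt}D_F=\frac L2\langle \bA^T(\bA\xprod-\by),\bz-\xprod\rangle=\frac L2\langle \bA\xprod-\by,\bA\bz-\bA\xprod\rangle.
\]
Using $\bA\bz=\by$, this equals
\[
-\frac L2\|\bA\xprod-\by\|_2^2=-L\,\mathcal{L}.
\]

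The main obstacle is matching the constant in \eqref{eq:DFderivative}. The computation above gives $-L\mathcal{L}$, whereas the statement claims $-2L\mathcal{L}$. I would therefore recheck the normalization of $\mathcal{L}$ and of $F$ carefully, for instance whether $F$ without the factor $\frac12$, or the loss without the factor $\frac12$, is what the statement intends. The structural point does not depend on this: $\nabla^2F$ exactly inverts the preconditioner $\xprod^{\odot(2-2/L)}$ up to a constant, which is what makes the rate of decrease independent of $\bz$. The only remaining task is to verify carefully where the factor of $2$ enters.
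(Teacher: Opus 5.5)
Your route is the same as the paper's: collapse the factors to one vector, differentiate $D_F(\bz,\xprod(t))$, and let the diagonal Hessian of $F$ cancel the weight $\xprod^{\odot(2-2/L)}$. Steps 1 and 2 are correct. Your coordinatewise argument even gives strict positivity of $\bu$, which is needed because $\nabla^2F$ blows up at $0$.

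The concrete error is the Hessian in Step 3. For $L>2$,
\[
\frac{L}{2(2-L)}\cdot\frac{2}{L}\cdot\Big(\frac{2}{L}-1\Big)=\frac{1}{2-L}\cdot\frac{2-L}{L}=\frac{1}{L},
\]
not $\tfrac12$. Hence $\nabla^2F(\bx)=\frac1L\diag(\bx^{\odot(2/L-2)})$ for every $L\ge 2$; for $L=2$ this is your $\frac{1}{2x_j}$, and it is the $\mathbf{H}_F$ used in the paper. With it, $\nabla^2F(\xprod)\dot{\xprod}=-\bA^T(\bA\xprod-\by)$ exactly, and your own chain of equalities gives
\[
\frac{d}{dt}D_F(\bz,\xprod(t))=\big\langle \bA^T(\bA\xprod-\by),\bz-\xprod\big\rangle=-\|\bA\xprod-\by\|_2^2=-2\mathcal{L}.
\]
As written, your proposal ends with the incorrect constant $-L\mathcal{L}$ and an unresolved mismatch, so it does not establish an identity with any definite constant.

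Once this is fixed, do not look for a further factor $L$. With $F$ as in \eqref{eq:Breman_positive} and $\mathcal{L}=\frac12\|\bA\xprod-\by\|_2^2$, the identity that actually holds is $-2\mathcal{L}$, not $-2L\mathcal{L}$. A sanity check with $d=m=1$, $\bA=1$, $y>0$, $L=2$, writing $x$ for the product:
\[
\dot x=-2x(x-y),\qquad D_F(y,x)=\tfrac12\big(y\log(y/x)-y+x\big),\qquad \frac{d}{dt}D_F(y,x)=-(x-y)^2=-2\mathcal{L}.
\]
The statement predicts $-4\mathcal{L}$ here.

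The paper's proof reaches $-2L\mathcal{L}$ at the step from $\frac1L\bx^{\odot(-2L+2)}\odot(L\bx^{\odot(L-1)}\odot\bx')$ to the next line. There it substitutes $\bx'=-L\,\bx^{\odot(L-1)}\odot\bA^T(\bA\xprod-\by)$, which is the gradient flow of $\bx\mapsto\mathcal{K}(\bx^{\odot L})$. The per-factor flow \eqref{grad-flow-diag} instead gives your $\dot\bu=-\bu^{\odot(L-1)}\odot\bA^T(\bA\xprod-\by)$. That agrees with Remark~\ref{rem:Reparameterization}, where the flow is that of $\frac1L\mathcal{K}(\bx^{\odot L})$.

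The constant $-2L$ would be correct only if $F$ were multiplied by $L$, or time rescaled by $L$. None of this affects the sketch of Theorem~\ref{thm:dln_ell_1}. That argument only uses that the rate is nonpositive and independent of $\bz\in S_+$, which is exactly the structural point you identified.
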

\begin{proof}
    If $\bx^{(1)} = \cdots = \bx^{(L)}$ then it holds
    \[
    \nabla_{\bx^{(1)}} \mathcal{L}(\bx^{(1)},\hdots,\bx^{(L)}) = \cdots = \nabla_{\bx^{(L)}} \mathcal{L}(\bx^{(1)},\hdots,\bx^{(L)}).
    \]
    Due to identical initialization $\bx^{(1)}(0) = \cdots \bx^{(L)}(0) = \alpha\1$ this implies that $\bx(t):=\bx^{(1)}(t) = \cdots = \bx^{(L)}(t)$,
     and consequentially $\xprod(t)=\bx^{(L)}(t)\odot\cdots\odot\bx^{(1)}(t)=\bx^{\odot L}(t)$. Due to continuity and since $\nabla_{\bx ^{(\ell)}} \mathcal{L}(0,\hdots,0)  = 0$, we have $\bx(\tcont) \geq 0$ for all $\tcont\geq 0$. For any $\bz\in S_+$ it holds
\begin{align*}
    & \partial_\tcont D_F(\bz,\xprod(\tcont))
    = \partial_\tcont \left[ F(\bz) - F(\xprod(\tcont)) - \langle \nabla F(\xprod(\tcont)), \bz-\xprod(\tcont) \rangle \right]\\
    &= - \langle \nabla F(\xprod(\tcont)), \xprod'(\tcont) \rangle 
    - \langle \partial_\tcont \nabla F(\xprod(\tcont)), \bz-\xprod(\tcont) \rangle
    + \langle \nabla F(\xprod(\tcont)), \xprod'(\tcont) \rangle\\
    &= - \langle \partial_\tcont \nabla F(\xprod(\tcont)), \bz-\xprod(\tcont) \rangle.
\end{align*}
By the chain rule and since the Hessian $\mathbf H_F = \frac{1}{L}\operatorname{\diag}(\bx^{\otimes(-2+\frac{2}{L}}))$ is diagonal, we obtain
\begin{align*}
    & \partial_t \nabla F(\xprod(\tcont))
    = \mathbf H_F (\xprod(\tcont)) \cdot \xprod'(\tcont)
    = \frac{1}{L} \xprod(\tcont)^{\odot(-2+\frac{2}{L})} \odot \xprod'(\tcont)\\
    &= \frac{1}{L} \bx(\tcont)^{\odot(-2L+2)} \odot \big( L\bx(\tcont)^{\odot(L-1)} \odot \bx'(\tcont) \big) \\
    &= \bx(\tcont)^{\odot(-2L+2)} \odot  \bx(\tcont)^{\odot(L-1)} \odot \big( - L\left[\bA^{T}( \bA\bx^{\odot L}(\tcont) -\by)\right] \odot \bx^{\odot L-1}(\tcont) \big) \\
    &= -L\left[\bA^{T}( \bA\xprod(\tcont) -\by)\right].
\end{align*}
Therefore,
\begin{align*}
    \partial_\tcont D_F(\bz,\xprod(\tcont))
    &= L\langle \bA^{T}( \bA\xprod(\tcont) -\by), \bz-\xprod(\tcont) \rangle
    = -L \langle \bA\xprod(\tcont) -\by, \bA\xprod(\tcont) - \bA\bz \rangle\\
    &= - L\|\bA\xprod(\tcont) -\by\|_2^2
     = -2L\cdot \mathcal{K}(\xprod(\tcont)).
\end{align*}
This completes the proof. \hfill \qed
\end{proof}

\begin{remark}
\label{rem:Reparameterization}
    The proof of Lemma \ref{lemma:Bregman_nonincreasing} reveals that, due to identical initialization, all factors $\bx^{(\ell)}$ evolve jointly as $\bx^{(\ell)}(t) = \xprod(t)^{\odot \frac{1}{L}}$. Consequently, the overparameterized gradient flow on $\mathcal L$ with identically initialized factors mimics a gradient flow on the reparameterized regression functional
    \begin{align*}
        \mathcal L_{rp} (\bx) := {\frac{1}{L}}\mathcal K (\bx^{\odot L}) = \frac{1}{2L} \| \bA\bx^{\odot L} - \by \|_2^2. 
    \end{align*}
    without overparameterization.
\end{remark}

\begin{proof}[Sketch of Theorem \ref{thm:dln_ell_1}]
    According to \eqref{eq:DFderivative}, $D_F(\bz,\xprod(t))$ will decrease at the same rate for all $\bz\in S_+$, and will only stop once $\xprod(t)$ arrives at $S_+$ when the right hand side of \eqref{eq:DFderivative} becomes zero. Since $D_F(\bz,\xprod(t))=0$ if and only if $\bz=\xprod(t)$, there exists $\bz^*\in S_+$ such that $D(\bz^*,\xprod(t))=0$. Because $D\geq0$, $\bz^*\in\argmin_{\bz\in S_+} D_F(\bz,\xprod(t))$. According to \eqref{eq:DFderivative}, $D_F(\bz,\xprod(t))-D_F(\bz,\xprod(0))$ is the same for all $\bz$, and hence one can use non-negativity of $D_F$ to argue that  \begin{equation}\label{eq:optimal_x_positive}
        \xprodinfty
        \in\argmin_{\bz\in S_+} D_F(\bz,\xprod(t))
        =\argmin_{\bz\in S_+} D_F(\bz,\xprod_0).
    \end{equation}
    Equation \eqref{eq:optimal_x_positive} gives an explicit characterization of the limit. The final expression then can be derived with a few basic inequalities. \hfill \qed
\end{proof}

Note that \eqref{eq:optimal_x_positive} is expressed in a general form, stating that the dynamics converges to the solution that is closest to the initialization in a suitable sense. 
While this statement may seem intuitive, the identification of the right notion of (Bregman) distance (i.e., of the function $F$) is non-trivial, and for other types of neural networks and loss functions it may be challenging to understand whether this is possible. 


\medskip

{\textbf{Recovery of negative entries.}}
The proof of Lemma~\ref{lemma:Bregman_nonincreasing} revealed that gradient flow on linear diagonal networks of the form 
\eqref{x-factorization} with positive identical initialization 
cannot recover vectors with negative entries. In order to circumvent this problem one may pass to a difference of two factorizations of the form \eqref{x-factorization}, i.e.,
\begin{equation}\label{diff-diag}
\bx = \bu^{(L)} \odot \cdots \odot \bu^{(1)} - \bv^{(L)} \odot \cdots \odot \bv^{(1)}
\end{equation}
and consider the loss function
\[
\mathcal{L}_{\pm}(\bu^{(1)}, \hdots, \bu^{(L)},\bv^{(1)},\hdots,\bv^{(L)}) = \mathcal{K}(\bu^{(L)} \odot \cdots \odot \bu^{(1)} - \bv^{(L)} \odot \cdots \odot \bv^{(1)}).
\]
For identical initialization
\corr{$\bu^{(1)}(0) = \cdots = \bu^{(L)}(0) = \bv^{(1)}(0) = \cdots = \bv^{(L)}(0) = \alpha \1$}, the gradient flow for $\mathcal{L}_{\pm}$ converges to a limit such that the $\ell_1$-norm of the product \corr{$\xprod_\infty = \lim_{t \to \infty}\bigl(\bu^{(L)}(t) \odot \cdots \odot \bu^{(1)}(t) - \bv^{(L)}(t) \odot \cdots \odot \bv^{(1)}(t)\bigr)$} is close to the $\ell_1$-norm of the $\ell_1$-minimizer with possibly \corr{negative} entries. In other words, Theorem~\ref{thm:dln_ell_1} basically extends to linear diagonal networks of the form \eqref{diff-diag} with $S_+$ replaced by $S = \{\bz \in \R^d : \bA \corr{\bz} = \by\}$\corr{;} see \cite[Theorem~1.1]{chou2023more} for details.

When assuming in addition that the matrix $\bA$ satisfies the stable null space property, i.e., the stable and robust null space property in Definition~\ref{def:srNSP} with constant $\rho \in (0,1)$ and arbitrary constant $\tau > 0$, then one can show also the sparse recovery result stated next \cite{chou2023more}. In other words, gradient flow on the factorization \eqref{diff-diag} can also be used as an algorithm for compressive sensing.
\begin{theorem}
Let $\bA \in \mathbb{R}^{m \times d}$ satisfy the stable null space property of order $s$ with constant $\rho\in(0,1)$.
Let $\bx_* \in \mathbb{R}^{d}$ and $\by = \bA\bx_*$.
Let 
\[
\xprod_\infty = \lim_{t \to \infty} \corr{\bigl(\bu^{(L)}(t) \odot \cdots \odot \bu^{(1)}(t) - \bv^{(L)}(t) \odot \cdots \odot \bv^{(1)}(t)\bigr)}
\]
be the limit vector of gradient flow for $\mathcal{L}_{\pm}$ with initialization $\bu^{(\ell)}(0) = \bv^{(\ell)}(0) = \alpha\1$, where $\alpha > 0$ for $\ell=1\hdots,L$.
   Then the reconstruction error satisfies
    \begin{align*}
        \|\xprod_\infty - \bx_*\|_1
        \leq  \frac{1+\rho}{1-\rho}\left( 2\sigma_{s}(\bx_*)_1 + \epsilon\right).
    \end{align*}
where $\epsilon$ is defined in terms of the initialization scale $\alpha$ as in
\eqref{eq:L1min_general}.
    Clearly, the right hand side equals $\epsilon$ if $\bx_*$ is $s$-sparse.
\end{theorem}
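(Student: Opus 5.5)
The plan is to combine two ingredients: the near-$\ell_1$-optimality of the gradient flow limit for the signed factorization \eqref{diff-diag}, and the standard stability consequence of the stable null space property for vectors that are only approximately $\ell_1$-minimal.

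\emph{Near-optimality of the limit.} By the extension of Theorem~\ref{thm:dln_ell_1} to $\mathcal{L}_{\pm}$ discussed above (with $S_+$ replaced by $S=\{\bz\in\R^d:\bA\bz=\by\}$), I take as given that $\xprod_\infty$ exists, that $\bA\xprod_\infty=\by=\bA\bx_*$, and that
\[
\|\xprod_\infty\|_1\le \|\bz\|_1+\epsilon
\quad\text{for every } \bz\in S,
\]
in particular for $\bz=\bx_*$. Here I use the additive form of the error. Theorem~\ref{thm:dln_ell_1} gives $\|\xprodinfty\|_1-Q\le\epsilon Q$, so the statement's $\epsilon$ should be read as the additive deficit, or else rescaled by $Q\le\|\bx_*\|_1$. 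I would follow the form in \cite[Theorem~1.1]{chou2023more}.

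\emph{Stable NSP argument.} Set $\bv=\xprod_\infty-\bx_*\in\ker(\bA)$ and let $S_0$ be the index set of the $s$ largest entries of $\bx_*$ in absolute value. The stable NSP (Definition~\ref{def:srNSP} with $\bA\bv=0$) gives $\|\bv_{S_0}\|_1\le\rho\|\bv_{S_0^c}\|_1$. Next I write
\[
\|\bx_*\|_1+\epsilon\ge\|\xprod_\infty\|_1=\|(\bx_*+\bv)_{S_0}\|_1+\|(\bx_*+\bv)_{S_0^c}\|_1
\]
and apply the triangle inequality to each part:
\[
\|\bx_*\|_1+\epsilon\ge\|\bx_{*,S_0}\|_1-\|\bv_{S_0}\|_1+\|\bv_{S_0^c}\|_1-\|\bx_{*,S_0^c}\|_1.
\]
Since $\|\bx_{*,S_0^c}\|_1=\sigma_s(\bx_*)_1$, this rearranges to
\[
\|\bv_{S_0^c}\|_1\le\|\bv_{S_0}\|_1+2\sigma_s(\bx_*)_1+\epsilon .
\]
Combining with the NSP bound yields $\|\bv_{S_0^c}\|_1\le\frac{1}{1-\rho}\bigl(2\sigma_s(\bx_*)_1+\epsilon\bigr)$. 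Then
\[
\|\bv\|_1\le(1+\rho)\|\bv_{S_0^c}\|_1\le\frac{1+\rho}{1-\rho}\bigl(2\sigma_s(\bx_*)_1+\epsilon\bigr),
\]
which is exactly the claimed bound (this is \cite[Theorem~4.14]{foucart2013mathematical} with an additive $\epsilon$).

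\emph{Main obstacle.} The second step is routine. The real work is the first step: proving the signed analogue of Lemma~\ref{lemma:Bregman_nonincreasing}. For this I would use that identical initialization forces $\bu^{(\ell)}=\bu$ and $\bv^{(\ell)}=\bv$, so $\xprod=\bu^{\odot L}-\bv^{\odot L}$. I would then look for a convex potential $F_\alpha$ on $\R^d$ whose gradient satisfies
\[
\partial_t\nabla F_\alpha(\xprod(t))=-L\,\bA^T(\bA\xprod(t)-\by).
\]
For $L=2$, $\nabla F_\alpha$ should be essentially an $\operatorname{arcsinh}$ of $\xprod/(2\alpha^2)$, using the conserved quantity $\bu\odot\bv=\alpha^2\1$. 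With such an $F_\alpha$, the computation of Lemma~\ref{lemma:Bregman_nonincreasing} gives $\xprod_\infty\in\argmin_{\bz\in S}D_{F_\alpha}(\bz,\xprod(0))$. It then remains to compare $D_{F_\alpha}(\cdot,\xprod(0))$ with the $\ell_1$-norm up to the error $\epsilon$ from \eqref{eq:L1min_general}, using that $\xprod(0)=\0$.
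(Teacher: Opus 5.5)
Your proposal is correct and is the intended argument. The notes give no proof and defer to \cite{chou2023more}, where the bound comes from combining the approximate $\ell_1$-minimality of $\xprod_\infty$ for $\mathcal{L}_{\pm}$ with the stable null space property estimate \cite[Theorem~4.14]{foucart2013mathematical} that you re-derive. Your reading of $\epsilon$ as the additive deficit $\epsilon Q\le\epsilon\|\bx_*\|_1$ is the right repair of the dimensionally inconsistent printed statement. The signed extension you single out as the main obstacle is quicker than a potential $F_\alpha$: view $(\bu^{(\ell)},\bv^{(\ell)})_{\ell}$ as one positive diagonal network in $\R^{2d}$ for the matrix $[\bA,\,-\bA]$, apply Theorem~\ref{thm:dln_ell_1} verbatim (with $d$ replaced by $2d$ and the same value of $Q$), and transfer the $\ell_1$-bound to $\xprod_\infty$ by the triangle inequality.
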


\textbf{Acceleration.} The above analysis for gradient flow on linear diagonal networks provides remarkable insights on implicit regularization. In practice, however, one cannot implement gradient flow, but rather a discrete variant such as gradient descent. 
Extensions of the above results for $L = 2$ layers to gradient descent and stochastic gradient descent, which also discuss conditions on the step sizes have been derived in \cite{even2023s}. The general case of $L\geq 2$ layers is studied with a different approach in \cite{bara26}.

\medskip

While it is remarkable that gradient methods exhibit implicit regularization towards sparse solutions, and hence can be seen as alternatives to existing compressed sensing methods, they may not necessarily be competitive in terms of speed and numerical stability.
One important observation is the following: although the theory states that implicit $\ell_1$-regularization becomes more pronounced the closer the initialization is to zero, smaller initialization leads to slower convergence because it takes longer time to escape the origin, which is actually a saddle point of $\mathcal{L}$.

Interestingly enough, one can mitigate this issue by weight normalization. For simplicity, let us directly consider the reparameterized loss $\mathcal L_{rp}$ from Remark \ref{rem:Reparameterization},
\begin{equation}\label{eq:factorized_loss}
    \mathcal{L}_{rp} (\bx) = \frac{1}{2L}\|\bA\bx^{\odot L} - \by\|_2^2,
\end{equation}
which induces the same gradient flow as the one for linear diagonal linear networks 
with identical initialization studied above. 

\emph{Weight normalization} refers to the additional overparameterization
\[
\bx = \frac{r}{\|\bu\|_2}\bu, \quad r \in \R, \bu \in \R^d.
\]
Plugging this into $\mathcal{L}$ we obtain
a loss function on $r$ and $\bu$ as
\begin{equation}\label{eq:factorized_loss_weight_normalization}
    \Lwn(r,\bu) = \mathcal{L}_{rp} \left(\frac{r}{\|\bu\|_2}\bu\right), \quad r \in \R, \bu \in \R^d.
\end{equation}
We consider a slightly modified version of
gradient flow for finding a minimizer of $\Lwn$. Let $(r(\tcont),\bu(\tcont))$ be a solution of the ODE
\begin{align}
    \frac{d}{dt}r(t) & = - \lrr \,\nabla_{r}\Lwn(r(t),\bu(t)),\quad r(0) = r_0,\label{eq:dgdt} \\
    \frac{d}{dt}\bu(t) & = -\nabla_{\bu}\Lwn(r(t),\bu(t)),\quad\bu(0) = \bu_0, \label{eq:dwdt}
\end{align}
where $\lrr > 0$ is called the learning rate ratio. The choice of $\lrr$ is important as revealed by the next result.

\begin{theorem}[\cite{chou2024robust}, Theorem 2.2]\label{theorem:optimality_constant_rate}
Let $L\in\mathbb{N}$, $L\geq 2$, $\bA\in\mathbb{R}^{m\times d}$, $\by\in\mathbb{R}^{m}$ and $\lrr>0$. Suppose $(r(\tcont),\bu(\tcont))$ follow the dynamics in \eqref{eq:dgdt} and \eqref{eq:dwdt} with $r_0,\bu_0>0$ satisfying $r_0\bu_0=\alpha\1$, $\|\bu_0\|_2 = 1$, and 
\[
\lrr\leq r_0^2\leq\|\bA^\dagger\by\|_{2}^{2/L}.
\]
Denote $\bx(\tcont)=\frac{r(\tcont)}{\|\bu(\tcont)\|_2}\bu(\tcont)$. Suppose $S_+ = \{\bz\geq \0:\bA\bz = \by\}$ is non-empty, and that the limit $\xprodinfty:= \lim_{\tcont\to\infty}\bx(\tcont)^{\odot L}$ exists. Define the so-called magnification factor as
\begin{equation}
\label{eq:rho}
    \rho:=\frac{r_0}{\|\bA^\dagger\by\|_{2}^{1/L}}\exp\left(\frac{\|\bA^\dagger\by\|_{2}^{2/L} - r_0^2}{2\lrr}\right).
\end{equation}
Then
\begin{enumerate}
    \item The loss defined in \eqref{eq:factorized_loss} decreases exponentially in time, i.e., for all $\tcont\geq 0$
    \begin{equation}\label{WN:conv:rate}
        \mathcal{L}_{rp}(\bx(\tcont)) \leq \mathcal{L}_{rp}(\bx_0)e^{-c\tcont}    
    \end{equation}
    for some constant $c>0$. In addition, the limit $\xprodinfty$ lies \corr{in} $S_+$.
    \item It holds that $\rho\geq 1$, and the limit $\xprodinfty$ satisfies $\|\xprodinfty\|_1-Q\leq\epsilon Q$ with $\epsilon$ as defined in \eqref{eq:L1min_general} but $\alpha$ replaced with $\rho^{-1}\alpha$, i.e., for $\gamma = 1 - \frac{2}{L}$
    \[
\epsilon=\begin{cases}
            \frac{\log(d)}{\log\left(\frac{Q}{d(\alpha/\rho)^2}\right)} &\text{if }L=2,\\[10pt]
            \frac{L(d^\gamma-1)}{ 2\left(\left(\frac{Q}{(\alpha/\rho)^L}\right)^\gamma-d^\gamma\right)}&\text{if }L>2.
            \end{cases}
    \]
    \end{enumerate}
\end{theorem}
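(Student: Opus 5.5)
Our plan is to reduce the weight-normalized dynamics to a time-reparameterization of the plain gradient flow on $\mathcal{L}_{rp}$, but started from a rescaled initialization. Then we apply the Bregman-divergence argument of Lemma~\ref{lemma:Bregman_nonincreasing} and Theorem~\ref{thm:dln_ell_1}, with the effective initialization $\alpha/\rho$ in place of $\alpha$.

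\textbf{Step 1: separate the radial and angular parts.} Write $\bw=\bu/\|\bu\|_2$. Since $\Lwn$ is invariant under rescaling of $\bu$, the gradient $\nabla_{\bu}\Lwn$ is orthogonal to $\bu$. Hence $\|\bu(t)\|_2\equiv 1$, so $\bx=r\bu$. Set $\bg=\nabla\mathcal L_{rp}(\bx)$. A direct computation gives
\[
\dot r=-\lrr\langle \bg,\bu\rangle, \qquad \dot\bu=-r(\bI-\bu\bu^T)\bg .
\]
Therefore
\[
\dot\bx=-\bigl(r^2\bI+(\lrr-r^2)\bu\bu^T\bigr)\bg .
\]
Since $r\bu=\bx$, this becomes
\[
\dot\bx=-r^2\bg-\tfrac{\lrr-r^2}{r^2}\langle \bg,\bx\rangle\bx .
\]
So the velocity is a rescaled gradient plus a radial correction along $\bx$.

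\textbf{Step 2: turn the dynamics into a gradient flow with a moving initialization.} We compute the evolution of the mirror variable $\nabla F(\bx^{\odot L})$, with $F$ from \eqref{eq:Breman_positive}. As in the proof of Lemma~\ref{lemma:Bregman_nonincreasing}, the $r^2\bg$ term produces $-L r^2\bA^T(\bA\bx^{\odot L}-\by)$. The radial term $\beta\bx$ produces a multiple of $\nabla F$-type terms: $\mathbf 1$ when $L=2$, and $\bx^{\odot(2-L)}$ when $L>2$. This corresponds to multiplicative rescaling of $\bx$.

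We therefore introduce $\bx(t)=\kappa(t)\tilde\bx(\tau(t))$. Here $\tilde\bx$ follows the unnormalized flow for $\mathcal L_{rp}$, $\tau$ is a time change, and $\kappa$ solves a scalar ODE. Matching terms should give
\[
\frac{d}{dt}\log\kappa=-\frac{\lrr-r^2}{r^2}\langle\bg,\bx\rangle/\|\bx\|^2 .
\]
Using $\dot r/r$, the left side integrates to
\[
\log\kappa=\int\Bigl(\tfrac{1}{\lrr}-\tfrac{1}{r^2}\Bigr)r\dot r\,dt=\frac{r^2-r_0^2}{2\lrr}-\log\frac{r}{r_0}.
\]
This produces exactly the exponential in \eqref{eq:rho}. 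Consequently, the limit is the Bregman projection onto $S_+$ of an effective initialization $\rho_\infty^{-1}\alpha\1$, with
\[
\rho_\infty=\frac{r_0}{r_\infty}\exp\Bigl(\frac{r_\infty^2-r_0^2}{2\lrr}\Bigr).
\]
Equivalently, the Bregman projection identity \eqref{eq:optimal_x_positive} holds with $\xprod_0$ replaced by $(\alpha/\rho_\infty)^{L}\1$.

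\textbf{Step 3: identify $\rho$ and finish.} We have $\|\bx_\infty\|_2=r_\infty$ with $\bx_\infty^{\odot L}\in S_+$, and $\|\bz\|_2\ge\|\bA^\dagger\by\|_2$ for every $\bz\in S_+$. The function $s\mapsto s^{-1}e^{s^2/(2\lrr)}$ is increasing for $s^2\ge\lrr$. Together with $\lrr\le r_0^2$, this gives the comparison $\rho_\infty\ge\rho\ge1$, once we bound $r_\infty\ge\|\bA^\dagger\by\|_2^{1/L}$ by a norm comparison between the $\ell_2$ and $\ell_{2L}$ norms. This comparison direction needs care, and we expect it to be the most delicate inequality. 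A larger $\rho$ means a smaller effective initialization, so we can plug it into the "few basic inequalities" part of the proof of Theorem~\ref{thm:dln_ell_1}, which is monotone in $\alpha$, to get item 2.

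For item 1 we use $\dot{\mathcal L}_{rp}=-\langle\bg,(r^2\bI+(\lrr-r^2)\bu\bu^T)\bg\rangle\le-\lrr\|\bg\|^2$. This is combined with a Polyak–Łojasiewicz-type bound $\|\bg\|^2\ge c\,\mathcal L_{rp}$. That bound needs a uniform lower bound on the entries of $\bx(t)$, away from $0$, along the trajectory. We would derive it from the positivity preserved by the multiplicative structure of Step 2 together with the fact that $r(t)$ stays bounded below by $\min(r_0,\|\bA^\dagger\by\|^{1/L}_2)$.

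\textbf{Main obstacle.} We expect the main obstacle to be Step 2. The radial correction must be shown to be absorbed exactly into a scalar rescaling of the unnormalized flow; for $L>2$ the non-homogeneity of $F$ makes this matching less clean. The uniform PL constant in item 1 is the secondary difficulty.
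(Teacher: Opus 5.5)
Your mechanism is the right one. In the mirror variable $\nabla F(\bx^{\odot L})$ the radial correction only rescales the initialization, which is what the paper calls the effective initialization $\rho^{-1}\alpha$. Step~3 is sound: the inequality you flag as delicate is just $r_\infty=\|\bx_\infty\|_2\ge\|\bx_\infty\|_{2L}=\|\xprodinfty\|_2^{1/L}\ge\|\bA^\dagger\by\|_2^{1/L}$.

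The ansatz $\bx(t)=\kappa(t)\,\hat{\bx}(\tau(t))$, with $\hat{\bx}$ an unnormalized $\mathcal{L}_{rp}$-trajectory, is however false. $\mathcal{L}_{rp}$ is not homogeneous, so $\nabla\mathcal{L}_{rp}(\bx/\kappa)$ involves the residual $\kappa^{-L}\bA\bx^{\odot L}-\by$ instead of $\bA\bx^{\odot L}-\by$. In general no scalar rescaling plus time change reproduces the weight-normalized velocity. Your formula for $\frac{d}{dt}\log\kappa$ also carries a spurious factor $1/\|\bx\|_2^2$ and the opposite sign to the integrand you then use.

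What is true, and suffices, is that only the reference point moves. Set $\xprod=\bx^{\odot L}$, $\boldsymbol{g}=\nabla\mathcal{L}_{rp}(\bx)$, $\beta=\frac{\lrr-r^2}{r^2}\langle\boldsymbol{g},\bx\rangle=\frac{d}{dt}\bigl(\frac{r^2}{2\lrr}-\log r\bigr)$ and $\rho(t):=\frac{r_0}{r(t)}\exp\bigl(\frac{r(t)^2-r_0^2}{2\lrr}\bigr)$. Then $\frac{d}{dt}\nabla F(\xprod)=-r^2\bA^T(\bA\xprod-\by)-\beta\,\xprod^{\odot(\frac{2}{L}-1)}$, and hence
\[
\nabla F(\xprod(t))=\nabla F\bigl((\alpha/\rho(t))^L\1\bigr)-\bA^T\boldsymbol{\nu}(t)
\]
for every $L\ge 2$ and some $\boldsymbol{\nu}(t)\in\R^m$. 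For $L>2$ this follows at once from the $(\frac{2}{L}-1)$-homogeneity of $\nabla F$, since $\xprod^{\odot(\frac{2}{L}-1)}=(2-L)\nabla F(\xprod)$; your concern there is unfounded. Consequently $D_F(\bz,\xprod(t))-D_F\bigl(\bz,(\alpha/\rho(t))^L\1\bigr)$ is the same for all $\bz\in S_+$. Letting $t\to\infty$ yields the Bregman projection from $(\alpha/\rho_\infty)^L\1$, but only once $\xprodinfty\in S_+$ is known.

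That is where the genuine gap sits. Your only route to $\xprodinfty\in S_+$ is item~1, and your argument for item~1 does not work.

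First, $\dot r=-\frac{\lrr}{r}\langle\bA\xprod-\by,\bA\xprod\rangle$ has no sign. So $r$ can fall below $r_0=\min\{r_0,\|\bA^\dagger\by\|_2^{1/L}\}$, e.g.\ if $\|\bA\xprod(0)\|_2^2>\langle\by,\bA\xprod(0)\rangle$. The valid estimate is $\frac{d}{dt}\mathcal{L}_{rp}\le-\min\{r^2,\lrr\}\|\boldsymbol{g}\|_2^2$.

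Second, and decisively, positivity of each $x_i(t)$ plus a lower bound on $\|\bx\|_2$ gives no uniform-in-time lower bound on individual entries, and none can be derived in general. We have $\frac{d}{dt}\log x_i=-r^2x_i^{L-2}\bigl(\bA^T(\bA\xprod-\by)\bigr)_i-\beta$ with $\int_0^t\beta\,ds=\log\rho(t)$ convergent. Exponential decay of $\mathcal{L}_{rp}$ would make the residual time-integrable and hence every $\log x_i(t)$ bounded, i.e.\ it forces $\xprodinfty>\0$. So a rate $e^{-ct}$ is available only in this nondegenerate case (equivalently, given the Bregman characterization, when $S_+$ meets the open orthant). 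In that case the assumed convergence gives $\inf_t\min_i x_i(t)>0$. Then $\|\boldsymbol{g}\|_2^2\ge(\inf_t\min_i x_i(t))^{2L-2}\|\bA^T(\bA\xprod-\by)\|_2^2$, together with $\bA\xprod-\by\in\operatorname{range}(\bA)$, gives a Polyak--{\L}ojasiewicz bound. Its constant depends on the trajectory, which is exactly the degeneration the paper mentions after the theorem.

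Hence $\xprodinfty\in S_+$ must be proved independently of the rate, e.g.\ from the identity above:
\begin{itemize}
\item $\boldsymbol{\nu}(t)$ is a weighted time integral of the residual with weights tending to $r_\infty^2$, so $\boldsymbol{\nu}(t)/t\to r_\infty^2\boldsymbol{e}_\infty$ with $\boldsymbol{e}_\infty=\bA\xprodinfty-\by$.
\item Convergence of $\nabla F(\xprod(t))_i$ for $i\in\supp(\xprodinfty)$ forces $(\bA^T\boldsymbol{e}_\infty)_i=0$ there.
\item Hence $\langle\bA^T\boldsymbol{e}_\infty,\bz\rangle=-\|\boldsymbol{e}_\infty\|_2^2$ for $\bz\in S_+$.
\item If $\boldsymbol{e}_\infty\neq\0$, some coordinate of $\nabla F(\xprod(t))$ would tend to $+\infty$, which is impossible along a convergent trajectory.
\end{itemize}
This argument uses $r_\infty>0$, i.e.\ $\xprodinfty\neq\0$, which must be excluded separately.
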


Theorem \ref{theorem:optimality_constant_rate} basically states that with weight normalization, the ``effective initialization'' that controls the implicit regularization becomes $\rho^{-1}\alpha$ instead of $\alpha$. Hence if $\rho$ is sufficiently large, the effective initialization can be small without requiring the true initialization $\alpha$ to be small.

Note that by choosing even moderately small $\lrr$, say $\lrr=0.1$, $\rho$ is already quite large due to the exponential term in \eqref{eq:rho}.
In fact, choosing very small $\lrr$ will cause numerical instabilities and hence is not advised. Empirically, there \corr{is a range} of $\lrr$ such that the algorithm works efficiently without requiring small (true) initialization.


Although Theorem~\ref{theorem:optimality_constant_rate} establishes exponential decay of the loss, the rate constant $c$ may become arbitrarily small in degenerate regimes, for example when some coordinates of the trajectory approach zero. The following result provides a more general convergence guarantee that does not require the trajectory to remain uniformly bounded away from the boundary of the nonnegative orthant, although it yields only a sublinear $O(1/t)$ rate.

\begin{theorem}[\cite{chou2025nnls}, Theorem~2.1]\label{thm:non-Accelerated}
Let $L\geq 2$, $\bA\in\mathbb{R}^{m\times d}$, $\by\in\mathbb{R}^{m}$, and $\bx_0>\0$. Let $\bx^{(\ell)}$ follow the gradient flow \eqref{grad-flow-diag} with initialization $\bx^{(\ell)}(0)=\bx_0$, $\ell=1,\hdots,L$. Define the product $\xprod(t)$ as in \eqref{prod-diag}. Then, for any $\bx_+\in\argmin_{\bz\geq\0}\mathcal{K}(\bz)$, there exists a constant $C>0$, depending only on the distance between $\xprod(0)$ and $\bx_+$, such that
\begin{equation*}
\mathcal{K}(\xprod(\tcont))-\mathcal{K}(\bx_+)\leq\frac{C}{\tcont} \quad \mbox{ for all } \tcont > 0.
\end{equation*}
\end{theorem}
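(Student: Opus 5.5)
Using the identical-initialization reduction from Lemma~\ref{lemma:Bregman_nonincreasing} and Remark~\ref{rem:Reparameterization}, we may work with a single vector flow $\bx(\tcont)$ and $\xprod=\bx^{\odot L}$, where $\dot\bx = -\bA^T(\bA\bx^{\odot L}-\by)\odot\bx^{\odot(L-1)}$ and $\bx(\tcont)>\0$ for all $\tcont$. For the positive initialization $\bx_0$ this holds coordinatewise, since each coordinate satisfies a linear ODE in itself, namely $\dot x_j = g_j(\tcont)\, x_j^{L-1}\cdot(\ldots)$, which cannot reach zero in finite time. The plan is the standard convex Lyapunov argument for mirror-descent type flows, with the Bregman divergence $D_F$ from \eqref{eq:Breman_positive} playing the role of the squared distance. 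For a general positive (not necessarily constant) initialization the same $F$ works, up to a weighted version whose only effect is on the constants. Concretely, the computation in the proof of Lemma~\ref{lemma:Bregman_nonincreasing} gives $\partial_\tcont \nabla F(\xprod(\tcont)) = -L\,\nabla\mathcal{K}(\xprod(\tcont))$ with no use of $\bA\bz=\by$. Hence for any $\bz\ge\0$,
\[
\partial_\tcont D_F(\bz,\xprod(\tcont)) = L\langle \nabla\mathcal{K}(\xprod(\tcont)),\bz-\xprod(\tcont)\rangle \le L\bigl(\mathcal{K}(\bz)-\mathcal{K}(\xprod(\tcont))\bigr)
\]
by convexity of $\mathcal{K}$.

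Choosing $\bz=\bx_+$ and integrating from $0$ to $\tcont$ yields
\[
L\int_0^\tcont \bigl(\mathcal{K}(\xprod(s))-\mathcal{K}(\bx_+)\bigr)\,ds \le D_F(\bx_+,\xprod(0)) - D_F(\bx_+,\xprod(\tcont)) \le D_F(\bx_+,\xprod(0)),
\]
using $D_F\ge 0$ by strict convexity of $F$ on the positive orthant. One also checks that $D_F(\bx_+,\cdot)$ remains finite when $\bx_+$ has zero entries, since $F$ is continuous up to the boundary for both choices in \eqref{eq:Breman_positive}.

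Next, $\tcont\mapsto\mathcal{K}(\xprod(\tcont))=L\,\mathcal{L}_{rp}(\bx(\tcont))$ is nonincreasing, because it is (a multiple of) the objective along its own gradient flow: $\frac{d}{d\tcont}\mathcal{L}_{rp}(\bx) = -\|\nabla\mathcal{L}_{rp}(\bx)\|_2^2\le 0$. Hence the integrand is bounded below by its value at time $\tcont$, giving
\[
\mathcal{K}(\xprod(\tcont))-\mathcal{K}(\bx_+)\le \frac{D_F(\bx_+,\xprod(0))}{L\,\tcont},
\]
that is, $C = D_F(\bx_+,\xprod(0))/L$, which is a Bregman "distance" between $\xprod(0)$ and $\bx_+$, as claimed.

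The main obstacle is not the convexity argument but making the Bregman setup rigorous when $\bx_+$ lies on the boundary of the orthant. One must confirm that $\langle\nabla F(\xprod(\tcont)),\bz-\xprod(\tcont)\rangle$ stays finite, which holds because $\xprod(\tcont)>\0$ for all finite $\tcont$. One must also confirm that the chain rule computation $\mathbf H_F(\xprod)\,\dot\xprod = -L\nabla\mathcal{K}(\xprod)$ holds for a general positive $\bx_0$. If the initialization is not uniform, I would replace $F$ by a weighted version with weights depending on $\bx_0$; only $D_F(\bx_+,\xprod(0))$ changes.
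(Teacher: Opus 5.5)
Your proof is correct. It is the Bregman-divergence Lyapunov argument built on Lemma~\ref{lemma:Bregman_nonincreasing}, which is the route the paper indicates: it states this theorem without proof, citing \cite{chou2025nnls}, and describes the accelerated version as resting on a higher-order Lyapunov function. Only cosmetic fixes are needed. First, for the factor-wise flow \eqref{grad-flow-diag} one has $\dot\bx=-\nabla\mathcal{L}_{rp}(\bx)$, hence $\partial_\tcont\nabla F(\xprod(\tcont))=-\nabla\mathcal{K}(\xprod(\tcont))$ with no factor $L$ (the $L$ in the lemma as printed amounts to a time rescaling), so the constant is $C=D_F(\bx_+,\xprod(0))$ rather than $D_F(\bx_+,\xprod(0))/L$. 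Second, no weighted $F$ is needed for non-constant $\bx_0>\0$, since the chain-rule computation only uses that all factors coincide. Third, for $L>2$ the coordinate equation $\dot x_j=-[\nabla\mathcal{K}(\xprod)]_j\,x_j^{L-1}$ is not linear; positivity still holds because $x_j(\tcont)^{2-L}=x_j(0)^{2-L}+(L-2)\int_0^\tcont[\nabla\mathcal{K}(\xprod(s))]_j\,ds$ has a bounded integrand, so $x_j$ cannot reach $0$ in finite time.
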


This convergence rate can be improved using momentum-type acceleration. Since the loss induced by the diagonal reparameterization is nonconvex, however, standard acceleration results for convex optimization do not apply directly. Nevertheless, a suitable modification of the continuous-time momentum dynamics accelerates the convergence of the product $\xprod(\tcont)$ from $O(1/t)$ to $O(1/t^2)$. The proof is based on a higher-order Lyapunov function.

\begin{theorem}[\cite{chou2025nnls}, Theorem~2.4]\label{thm:Accelerated}
Let $L\geq 2$, $\bA\in\mathbb{R}^{m\times d}$, $\by\in\mathbb{R}^{m}$, and $\bx_0>\0$. Let $\pmb{\xi}(\tcont)\in\mathbb{R}^d$ be such that $(\pmb{\xi}(\tcont),\xprod(\tcont))$ satisfies
\begin{equation*}
\begin{aligned}
\frac{d}{d\tcont}\pmb{\xi}(\tcont)&=-\frac{\tcont}{2}\pmb{\xi}(\tcont)^{\odot q}\odot\nabla\mathcal{K}(\xprod(\tcont)),\\
\frac{d}{d\tcont}\xprod(\tcont)&=\frac{2}{\tcont}\bigl(\pmb{\xi}(\tcont)-\xprod(\tcont)\bigr),
\end{aligned}
\end{equation*}
with $\pmb{\xi}(0)=\xprod(0)=\xprod_0>\0$, where $q=2-\frac{2}{L}\in[1,2)$ and $\mathcal{K}$ is the least-squares loss defined in \eqref{def:ls}. Then, for any $\bx_+\in\argmin_{\bz\geq\0}\mathcal{K}(\bz)$, there exists a constant $C>0$, depending only on $q$ and the distance between $\xprod_0$ and $\bx_+$, such that
\begin{equation*}
\mathcal{K}(\xprod(\tcont))-\mathcal{K}(\bx_+)\leq\frac{C}{\tcont^2} \mbox{ for all } \tcont > 0.
\end{equation*}
\end{theorem}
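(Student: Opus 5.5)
The plan is a Lyapunov argument for the accelerated dynamics, modelled on continuous-time Nesterov acceleration (Su--Boyd--Candès), with the Euclidean distance replaced by a Bregman divergence adapted to the factor $\bxi^{\odot q}$. Set $q=2-\frac{2}{L}$ and choose a strictly convex $G$ on $\R^d_{+}$ with Hessian $\nabla^2 G(\bxi)=\diag(\bxi^{\odot(-q)})$. For $q=1$ this is the entropy $G(\bxi)=\langle \bxi\odot\log\bxi-\bxi,\1\rangle$. For $q\in(1,2)$ it is $G(\bxi)=\frac{1}{(1-q)(2-q)}\langle\bxi^{\odot(2-q)},\1\rangle$, up to affine terms. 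This is, up to a factor, the same $F$ as in \eqref{eq:Breman_positive}. The point of this choice is that the first equation can be rewritten as a mirror step,
\[
\frac{d}{dt}\nabla G(\bxi(t))=\bxi^{\odot(-q)}\odot\dot{\bxi}=-\frac{t}{2}\nabla\mathcal K(\xprod(t)).
\]
This mirrors the computation in Lemma~\ref{lemma:Bregman_nonincreasing}.

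As a preliminary step I would show that $\bxi(t)>\0$ and $\xprod(t)>\0$ for all $t$, so that $G$ and $D_G$ are well defined along the trajectory. Positivity of $\bxi$ holds because $\dot\bxi_j$ carries the factor $\xi_j^q$ with $q\geq 1$: the ODE $\dot\xi_j=-\frac t2\xi_j^q g_j$ with $g_j$ bounded on compact time intervals cannot reach zero in finite time. Positivity of $\xprod$ then follows from the variation-of-constants formula $t^2\xprod(t)=\int_0^t 2s\,\bxi(s)\,ds$. This formula also takes care of the singular initial time $t=0$: it gives existence, continuity and $\xprod(0)=\bxi(0)$.

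Next, fix $\bx_+\in\argmin_{\bz\geq\0}\mathcal K(\bz)$. The function $\mathcal K$ is convex, and the KKT conditions for nonnegative least squares give $\langle\nabla\mathcal K(\bx_+),\bz-\bx_+\rangle\geq 0$ for all $\bz\geq\0$. Consider
\[
\mathcal E(t)=\frac{t^2}{4}\bigl(\mathcal K(\xprod(t))-\mathcal K(\bx_+)\bigr)+D_G(\bx_+,\bxi(t)).
\]
Differentiating the divergence gives $\frac{d}{dt}D_G(\bx_+,\bxi)=-\langle\frac{d}{dt}\nabla G(\bxi),\bx_+-\bxi\rangle=\frac t2\langle\nabla\mathcal K(\xprod),\bx_+-\bxi\rangle$. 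Substituting $\bxi=\xprod+\frac t2\dot\xprod$ splits this into $\frac t2\langle\nabla\mathcal K(\xprod),\bx_+-\xprod\rangle-\frac{t^2}{4}\langle\nabla\mathcal K(\xprod),\dot\xprod\rangle$. The second piece cancels exactly the term $\frac{t^2}{4}\frac{d}{dt}\mathcal K(\xprod)$ coming from the first part of $\mathcal E$. What remains is
\[
\dot{\mathcal E}=\frac t2\bigl(\mathcal K(\xprod)-\mathcal K(\bx_+)\bigr)+\frac t2\langle\nabla\mathcal K(\xprod),\bx_+-\xprod\rangle\leq 0,
\]
where the inequality is convexity of $\mathcal K$. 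No constrained optimality condition is needed here, since $\xprod$ stays in the orthant automatically. Hence $\mathcal E(t)\leq\mathcal E(0)=D_G(\bx_+,\xprod_0)$. Because $D_G\geq0$, this yields $\mathcal K(\xprod(t))-\mathcal K(\bx_+)\leq 4D_G(\bx_+,\xprod_0)/t^2$, with $C=4D_G(\bx_+,\xprod_0)$ depending only on $q$ and on the (Bregman) distance between $\xprod_0$ and $\bx_+$.

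I expect the main obstacle to be justifying this calculus rigorously rather than the algebra itself. Two issues need care. First, global existence and positivity near $t=0$, where the coefficient $2/t$ is singular; the integral formula above handles this. Second, $\bx_+$ may have zero entries, so $D_G(\bx_+,\cdot)$ involves the boundary of the orthant. For $q\in[1,2)$, however, $G$ extends continuously to $\bz\geq\0$, so $D_G(\bx_+,\bxi)$ remains finite and the derivative formula still holds, because only $\bxi$, not $\bx_+$, is differentiated.
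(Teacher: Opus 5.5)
Your argument is correct and follows the route the paper indicates (the paper defers the proof to \cite{chou2025nnls}, remarking only that it rests on a higher-order Lyapunov function). The energy $\frac{t^2}{4}\bigl(\mathcal{K}(\xprod(t))-\mathcal{K}(\bx_+)\bigr)+D_G(\bx_+,\pmb{\xi}(t))$ is the standard accelerated mirror-descent Lyapunov function for this system, and your derivative computation, the appeal to convexity, and the constant $C=4D_G(\bx_+,\xprod_0)$ are all correct. One addition is worth making: since $\xprod(t)\geq\0$ and $\bx_+$ minimizes $\mathcal{K}$ over the orthant, the first term is nonnegative, so the same monotonicity gives $D_G(\bx_+,\pmb{\xi}(t))\leq D_G(\bx_+,\xprod_0)$. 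This bounds $\pmb{\xi}$ from above and rules out finite-time blow-up, so global existence of the trajectory follows rather than needing to be assumed.
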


\medskip


{\textbf{Networks with nonlinearity.}}
The results on implicit regularization have been extended \cite{chou2024induce} to learning certain neural networks involving nonlinearities, also called generalized linear functions. For a matrix $\bA \in \R^{m \times d}$ and two differentiable activation functions $\sigmaout,\sigmainn : \R \to \R$ acting componentwise, consider functions of the form
\[
g : \R^d \to \R^m, \quad g(\bx) = \sigmaout(\bA(\sigmainn(\bx))).
\]
Given a loss function $\ell: \R^m \times \R^m \to \R$ that is continuously differentiable and convex in the first argument, and a vector $\by \in \R^m$ we consider
\[
\Lgl(\bx):= \ell\Big( \sigmaout \big( \bA\sigmainn(\bx) \big), \by \Big),
\]
which relates to \emph{generalized linear models (GLMs)} \cite{dobson2018introduction} when $\ell$ is chosen to be the square loss.

Let $\bx'(t)$ follow the gradient flow for $\Lgl$ initialized at $\bx_0 \in \mathbb{R}^d$ i.e.,
    \begin{align}\label{eq:gd_IRERM}
       \bx'(t) = -\nabla\Lgl(\bx(t)), \quad \bx(0)=\bx_0.
    \end{align}
We consider then the reparameterized flow 
\begin{equation}\label{reparam-x}
\widetilde{\bx} (t) = \sigmainn(\bx (t))
\end{equation}
and study the implicit regularization of \corr{its} limit (provided it exists). Note that $\sigmainn(t) = t^L$, $\sigmaout(t) = t$\corr{,} and $\ell$ being the square loss reduce to the setting of Theorem~\ref{thm:dln_ell_1} studied above.
We provide here an informal version of the two main theorems of \cite{chou2024induce} with adopted notation, omitting several technical details.

\begin{theorem}[\cite{chou2024induce}, Theorems 2.4 \& 2.7]\label{theorem:vector_IRERM}
    Assume that $\ell$ is continuously differentiable and convex in the first argument, with $\min\ell = 0$,
    and $\ell(\bz,\bz') = 0$ if and only if $\bz = \bz'$. Consider the gradient flow $\bx(t)$ in \eqref{eq:gd_IRERM} and its reparameterization $\widetilde{\bx}(t) =  \sigmainn(\bx (t))$.
Define the reparametrized flow $\widetilde{\bx} (t) := \sigmainn(\bx (t))$. 
Assume that $\sigmainn: \R \to \R$ is invertible and that there exist a continuous antiderivative $h$ of
    \begin{align}\label{eq:hprimedefinitionWithoutPositivity}
        h'(z) 
        := \big( [\sigmainn^{-1}]'(z) \big)^2,
    \end{align}
    and 
    a continuous antiderivative $H$ of $h$. Under certain technical assumptions, $\widetilde{\bx}$ converges to
    \begin{equation}\label{eq:implicit_regularization_Bregman}
        \widetilde{\bx}_\infty \in \argmin_{\Lgl(\sigmainn^{-1}(\bz)) = 0 %
        } D_{F}(\bz,\widetilde{\bx}_0)
    \end{equation}
    where $D_{F}(\bz,\widetilde{\bx}_0)$ is the Bregman divergence with respect to $F(\bz) := \langle \1, H(\bz)\rangle$.    
\end{theorem}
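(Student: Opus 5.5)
The plan is to mirror the proof of Lemma~\ref{lemma:Bregman_nonincreasing} and the sketch of Theorem~\ref{thm:dln_ell_1}, now for a general inner activation. The central step is to show that, for every $\bz$ in the solution set $\calS:=\{\bz : \Lgl(\sigmainn^{-1}(\bz))=0\}$, the time derivative of $D_F(\bz,\widetilde{\bx}(t))$ does not depend on $\bz$ and is nonpositive. Exactly as in the earlier proof, the terms involving $F(\widetilde{\bx}(t))$ cancel and leave
\[
\frac{d}{dt} D_F(\bz,\widetilde{\bx}(t)) = -\big\langle \partial_t \nabla F(\widetilde{\bx}(t)), \bz-\widetilde{\bx}(t)\big\rangle .
\]
Since $F(\bz)=\langle \1, H(\bz)\rangle$, we have $\nabla F = h$ componentwise, so $\partial_t \nabla F(\widetilde{\bx}) = h'(\widetilde{\bx})\odot \widetilde{\bx}'$.

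Next we compute $\widetilde{\bx}'$ by the chain rule, $\widetilde{\bx}' = \sigmainn'(\bx)\odot \bx'$. Setting $\bw:=\sigmaout(\bA\sigmainn(\bx))$, the gradient is
\[
\nabla\Lgl(\bx) = \sigmainn'(\bx)\odot \bA^T\big(\sigmaout'(\bA\sigmainn(\bx))\odot \nabla_1\ell(\bw,\by)\big).
\]
By definition $h'(\widetilde{\bx}) = ([\sigmainn^{-1}]'(\widetilde{\bx}))^2 = \sigmainn'(\bx)^{-2}$, and this is precisely what cancels the two factors $\sigmainn'(\bx)$. Hence
\[
\partial_t\nabla F(\widetilde{\bx}) = -\bA^T\big(\sigmaout'(\bA\widetilde{\bx})\odot\nabla_1\ell(\bw,\by)\big).
\]
This is the analogue of the identity $\partial_t\nabla F(\xprod) = -L\bA^T(\bA\xprod-\by)$. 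It follows that the derivative equals $\langle \sigmaout'(\bA\widetilde{\bx})\odot\nabla_1\ell(\bw,\by),\, \bA\bz-\bA\widetilde{\bx}\rangle$.

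Here is the first real difference from the diagonal case: with a nonlinear $\sigmaout$ we can no longer simply replace $\bA\bz$ by $\by$. For $\bz\in\calS$, the hypotheses ($\ell(\cdot,\by)=0$ only at $\by$) give $\sigmaout(\bA\bz)=\by$. So $\bA\bz$ is determined by $\by$ as long as $\sigmaout$ is injective, which is one of the technical assumptions I would impose. As a result, $\bA\bz =: \bv$ is the same vector for all $\bz\in\calS$, and the derivative $\phi(t)$ is independent of $\bz$. Nonpositivity of $\phi$ would follow from convexity of $\ell(\cdot,\by)$ combined with a monotonicity assumption on $\sigmaout$. Integrating over $t$ then shows that
\[
D_F(\bz,\widetilde{\bx}(t)) - D_F(\bz,\widetilde{\bx}_0) = \int_0^t\phi
\]
is the same for all $\bz\in\calS$.

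To conclude, I follow the sketch of Theorem~\ref{thm:dln_ell_1}. Assume, as part of the technical conditions, that $\widetilde{\bx}(t)\to\widetilde{\bx}_\infty$ and that the loss tends to zero, so that $\widetilde{\bx}_\infty\in\calS$ by continuity and the zero-set characterization of $\ell$. Passing to the limit, $D_F(\bz,\widetilde{\bx}_\infty) - D_F(\bz,\widetilde{\bx}_0) = c$ for a constant $c$ independent of $\bz\in\calS$. Taking $\bz=\widetilde{\bx}_\infty$ gives $c = -D_F(\widetilde{\bx}_\infty,\widetilde{\bx}_0)$. For any other $\bz\in\calS$, nonnegativity of $D_F$ (convexity of $F$, since $h'\ge 0$) then yields
\[
D_F(\bz,\widetilde{\bx}_0) = D_F(\bz,\widetilde{\bx}_\infty) + D_F(\widetilde{\bx}_\infty,\widetilde{\bx}_0) \ge D_F(\widetilde{\bx}_\infty,\widetilde{\bx}_0),
\]
which is exactly \eqref{eq:implicit_regularization_Bregman}.

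The main obstacle is the convergence step: proving that the flow actually reaches the zero set and that the limit exists, and guaranteeing along the way that the integral $\int_0^\infty\phi$ is finite. I would attempt this by using $D_F(\bz,\widetilde{\bx}(t))$ as a Lyapunov function. Its boundedness gives boundedness of the trajectory, provided $F$ has bounded sublevel sets in its second argument. An LaSalle-type argument then gives $\phi(t)\to0$ and hence vanishing loss. Uniqueness of the limit point would follow because $D_F(\bz,\cdot)$ is monotone for every $\bz$ in the solution set.
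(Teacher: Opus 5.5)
Your proposal is correct and follows essentially the route the paper points to: show that $\frac{d}{dt}D_F(\bz,\widetilde{\bx}(t))$ is independent of $\bz$ on the zero set, as in Lemma~\ref{lemma:Bregman_nonincreasing}, then apply the Bregman-projection argument from the sketch of Theorem~\ref{thm:dln_ell_1}. In the first step, $h'=([\sigmainn^{-1}]')^2$ cancels the two factors $\sigmainn'(\bx)$, and injectivity of $\sigmaout$ makes $\bA\bz$ the same for every $\bz$ in the zero set. One correction, which affects only your convergence sketch: convexity of $\ell(\cdot,\by)$ plus monotonicity of $\sigmaout$ gives $\phi\le 0$ for coordinatewise separable losses such as the square loss, but not for convex losses that couple coordinates, where $\phi$ can be positive. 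The natural assumption is instead convexity of $\bu\mapsto\ell(\sigmaout(\bu),\by)$, which yields $\phi(t)\le-\Lgl(\bx(t))\le 0$.
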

In other words, 
the gradient flow $\Lgl$ exhibits an implicit regularization towards a solution minimizing $D_{F}(\widetilde\bz,\widetilde{\bx}_0)$, where the Bregman divergence $D_F$ is generated by $F(\bz) := \langle \1, H(\bz)\rangle$.

\medskip

{\textbf{Implicit regularization in classification problems.}} Diagonal linear networks have also been investigated as a model problem for training via gradient flow and gradient descent in the context of classification problems. Here implicit regularization towards sparse solutions can be observed as well. We refer to \cite{soudry2018implicit,gunasekar2018characterizing,gunasekar2018convolutional,moroshko2020implicit,yun2021unifying,lyu2020margin,wo20} for details.


\subsection{Fully connected linear networks}

In this section, we pass to the more advanced, but still simplified model of fully connected linear networks. These are neural networks \eqref{eq:FNN}, where the activation function is the identity and the bias terms $\bb^{(\ell)}$ are zero, and the weight matrices $\bW^{(\ell)} \in \R^{d_\ell \times d_{\ell}-1}$ are fully populated -- in contrast to the previous section where they were diagonal. In other words, a fully connected linear network is of the form
\[
h(\bx) = \bW^{(L)} \bW^{(L-1)} \cdots \bW^{(1)} \bx = \bW \bx,
\]
where 
\begin{equation}\label{lin-network}
\bW  = \bW^{(L)} \cdots \bW^{(1)} \in \R^{d_L \times d_0}
\end{equation}
is the factorized matrix. Therefore, a linear neural network parameterizes a linear function so that such networks are of limited use for general machine learning problems. But understanding training algorithms for such networks is still challenging which is the reason why they are studied intensively in order to understand key principles and, in particular, the implicit regularization of gradient flow/descent for learning them. Compared to linear diagonal networks the training dynamics for learning linear fully connected networks is more complicated due to the noncommutativity of matrix multiplication, and since not only the diagonal, i.e., the singular values will move in general, but also the singular vectors.

\medskip



We start with a rather simple problem, which however, reveals some intuition about the dynamics of learning deep networks. For a fixed symmetric matrix 
$\bW_\star = \bW_\star^T \in \R^{d \times d}$
we consider the square loss
\[
\mathcal{K}(\bW) = \frac{1}{2} \|\bW - \bW_\star\|_2^2. 
\]
Similarly to before, we plug the network \eqref{lin-network}, where all $\bW^{(l)} \in \R^{d \times d}$ have the same dimension, into $\mathcal{K}$ in order to obtain the loss function
\begin{align}\label{eq:matrix_fullrank}
\mathcal{L}_{ln}(\bW^{(1)},\cdots,\bW^{(L)}) &= \mathcal{K}(\bW^{(L)} \cdots \bW^{(1)}) \notag \\
&= 
\|\bW^{(L)}\cdots\bW^{(1)}-\bW_\star\|_F^2,
\end{align}
which can be interpreted as training a fully connected linear network on a square loss, given that the training data spans the input space, see \cite{chou2024gradient}.
We consider gradient flow for minimizing $\mathcal{L}_{ln}$, i.e.,
\begin{equation}\label{grad-flow-E}
\frac{d}{dt} \bW^{(\ell)}(t) = - \nabla_{\bW^{(\ell)}} \mathcal{L}_{ln} (\bW^{(1)}(t),\hdots, \bW^{(L)}(t)),
\end{equation}
as well as gradient descent
\begin{equation}\label{grad-descent-W}
\bW^{(j)}(k+1) = \bW^{(j)} (k) - \eta \nabla_{\bW^{(\ell)}} \mathcal{L}_{ln} (\bW^{(1)}(k),\hdots, \bW^{(L)}(k)).
\end{equation}
We are interested in the dynamics of the product matrix
\begin{equation}\label{prod-matrix-k}
\bW(k) = \bW^{(L)}(k)\cdots\bW^{(1)}(k).
\end{equation}

Note that $\mathcal{K}(\bW)$ has the unique minimizer $\bW = \bW_\star$ so that we formally are not in an overparameterized regime. (The minimizers of $\mathcal{L}_{ln}$ are all tuples $(\bW^{(L)}, \hdots, \bW^{(1)})$ such $\bW^{(L)} \cdots \bW^{(1)} = \bW_\star$.) Nevertheless, as we will see, early stopping of the learning dynamics may have a regularizing effect. One may think of $\bW_\star$ as a noisy version of a low rank ground truth matrix so that regularization towards low rank (see below) may be able to remove noise.

A common type of initialization in theoretical works on linear training set-ups are scaled identity matrices since they facilitate the analysis. However, identically initializing $\bW^{(j)}(0) = \alpha I$, for some $\alpha > 0$ and all $j \in [L]$, induces a spectral cut-off of negative eigenvalues of the product $\widetilde\bW(t) := \bW^{(L)}(t)\cdots\bW^{(1)}(t)$ similar to the orthant restriction in Theorem \ref{thm:dln_ell_1} \cite[Theorem 1.1]{chou2024gradient}. This restriction can be lifted by considering a slightly perturbed identical initialization of the form
\begin{equation}\label{GD:init:W}
\bW^{(1)}(0) = (\alpha- \beta) I, \quad \bW^{(j)}(0) = \alpha I, \quad j = 2,\hdots,L,
\end{equation}
where $0 < \beta < \alpha$ are suitable (small) scalars.
In contrast to identical initialization this allows the dynamics also to reach matrices with negative eigenvalues.

\begin{theorem}[\cite{chou2024gradient}, Theorem 1.3]\label{theorem:PerturbedInitialization_NonAsymptoticConvergence}
Let $L\geq 2$ and let $\Wstar= \Wstar^T \in \R^{d \times d}$ with eigendecomposition $\Wstar = \bV\Lambda \bV^\top \in \mathbb{R}^{d \times d}$, 
where $\Lambda = \diag(\lambda_i : i \in [d])$. Consider the gradient descent iterations $\bW^{(1)}(\tdisc), \hdots, \bW^{(L)}(\tdisc) \in \mathbb{R}^{d\times d}$ in \eqref{grad-descent-W} 
initialized as in \eqref{GD:init:W} with 
parameters satisfying $0 < \frac{\beta}{c-1} < \alpha $ and $c=\max\{c'\in\mathbb{R}:1=(c'-1)(c')^{N-1}\}\in(1,2)$. Let $\bW(\tdisc)$ be the product in \eqref{prod-matrix-k}.
If the step size $\eta$ satisfies
\begin{equation}\label{cond:stepsize:thm}
    0 < \eta < 
    \frac{1}{9L (c\cdot\max(\alpha,\|\Wstar\|^{\frac{1}{L}}))^{2L-2}},
\end{equation}
then $\bWprod(\tdisc)$ converges to $\Wstar$ as $\tdisc \to \infty$. Moreover, the error matrix $E(\tdisc) = \bV^T \bW(\tdisc) \bV - \Lambda$ is diagonal and satisfies
\[
|E_{ii}(\tdisc)| \leq \varepsilon L |\lambda_{i}|^{1-1/N}
\]
for $\tdisc \geq T(\lambda_i,\epsilon,\alpha,\beta,\eta)$ defined in \cite{chou2024gradient} if $|\lambda_i| \geq \alpha^L$.
For a fixed threshold $\epsilon>0$,
$T(\lambda_i,\epsilon,\alpha,\beta,\eta)$
is proportional to 
$\log(\lambda_i)^{-1}$ for $L=2$ and proportional to $\lambda_i^{\frac{2}{L}-2}$ for $L>2$.
\end{theorem}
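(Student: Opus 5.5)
The plan is to reduce the matrix gradient descent to $d$ decoupled scalar recursions, one per eigenvalue of $\Wstar$, and then analyse one scalar recursion carefully.

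\textbf{Step 1: reduction to the eigenbasis.} Write $\bW^{(\ell)}(k) = \bV \bD^{(\ell)}(k) \bV^\top$ and show by induction on $k$ that every $\bD^{(\ell)}(k)$ is diagonal. At $k=0$ this holds because all factors in \eqref{GD:init:W} are multiples of $I$. Suppose all $\bW^{(\ell)}(k)$ are diagonal in the basis $\bV$. Then the gradient
\[
\nabla_{\bW^{(\ell)}}\mathcal{L}_{ln}
= (\bW^{(L)}\cdots\bW^{(\ell+1)})^\top (\bW-\Wstar)(\bW^{(\ell-1)}\cdots\bW^{(1)})^\top
\]
is a product of matrices diagonal in $\bV$, because $\Wstar=\bV\Lambda\bV^\top$. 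Hence the update \eqref{grad-descent-W} keeps every factor diagonal in $\bV$. This gives diagonality of $E(k)=\bV^\top\bW(k)\bV-\Lambda$ directly.

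The iteration now decouples into $d$ independent scalar systems. Each coordinate $i$ has $L$ scalars. Factors $2,\dots,L$ start equal at $\alpha$ and receive identical updates, so they stay equal; call their common value $a(k)$. The first factor is $b(k)$, with $b(0)=\alpha-\beta$. The dynamics is
\begin{align*}
a(k+1) &= a(k) - \eta\, (p(k)-\lambda_i)\, a(k)^{L-2} b(k),\\
b(k+1) &= b(k) - \eta\, (p(k)-\lambda_i)\, a(k)^{L-1},
\end{align*}
where $p = a^{L-1}b$ and constant factors from the loss normalisation are absorbed into $\eta$.

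\textbf{Step 2: a conserved quantity.} For gradient flow, $a^2-b^2$ is exactly conserved. For gradient descent it changes only by $O(\eta^2)$ terms. I would show that under \eqref{cond:stepsize:thm} the gap $a-b$ stays positive and bounded by $\beta$ times a constant.

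The same step needs an a priori bound
\[
|a|,|b|\le c\max(\alpha,|\lambda_i|^{1/L}).
\]
Here the constant $c$, defined by $(c-1)c^{L-1}=1$, is exactly what keeps the product from overshooting past $c^L\max(\alpha^L,|\lambda_i|)$. Given that bound, the step size condition makes the local Lipschitz constant of the map at most $1/9$ of the critical value. This prevents oscillation and forces $p(k)-\lambda_i$ to keep a fixed sign, so that $p(k)$ is monotone towards $\lambda_i$.

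\textbf{Step 3: time to convergence.} I would split into the two sign cases.

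\emph{Case $\lambda_i>0$.} Here $a$ and $b$ both grow, and $p$ increases monotonically to $\lambda_i$. Compare $p(k)$ with the scalar ODE $\dot p = L p^{2-2/L}(\lambda_i-p)$. For $L=2$ its escape time from $\alpha^2$ scales like $\log(\lambda_i/\alpha^2)/\lambda_i$. For $L>2$ the escape time scales like $\alpha^{2-L}$, and the error bound $|p-\lambda_i|\le\varepsilon L\lambda_i^{1-1/L}$ is reached after about $\lambda_i^{2/L-2}/\eta$ further steps.

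\emph{Case $\lambda_i<0$.} Here the small perturbation $\beta$ is essential. The factor $b$ must cross zero first, which it does because $a>b$. After that, $p$ becomes negative and the same comparison argument applies with $|\lambda_i|$ in place of $\lambda_i$.

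For both cases, convergence $\bW(k)\to\Wstar$ follows from the monotone bounded convergence of each $p(k)$ together with the fact that the only fixed point consistent with the sign invariant is $p=\lambda_i$.

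\textbf{Main obstacle.} The hardest part is Step 2: controlling the discrete-time errors. The goal is to prove, uniformly in $k$, the boundedness and no-overshoot properties and the near-conservation of $a^2-b^2$, with only the explicit constant $9L$ in the step size. I would first prove invariance of the box $|a|,|b|\le c\max(\alpha,|\lambda_i|^{1/L})$ together with the sign condition on $p-\lambda_i$, jointly by induction. At each step I would use a one-step Taylor expansion of $p(k+1)-\lambda_i$ in $\eta$ and bound the second-order remainder by $\eta L(c\,M)^{2L-2}\le 1/9$, where $M=\max(\alpha,|\lambda_i|^{1/L})$.
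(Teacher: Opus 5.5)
Your Step 1 is correct, and it is the reduction the cited result rests on; it is what makes $E(k)$ diagonal. All factors stay diagonal in $\bV$, factors $2,\dots,L$ stay equal, and everything reduces to the recursion for $(a,b)$.

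\textbf{The gap is in Step 2.} You rightly call it the crux, but the quantity you propose to control is the wrong one. It fails exactly in the regime the perturbation $\beta$ exists for. With $r_k=p(k)-\lambda_i$, your recursion gives the exact identities
\[
a(k+1)-b(k+1)=\bigl(a(k)-b(k)\bigr)\bigl(1+\eta r_k a(k)^{L-2}\bigr),\qquad
a(k+1)+b(k+1)=\bigl(a(k)+b(k)\bigr)\bigl(1-\eta r_k a(k)^{L-2}\bigr).
\]
\begin{itemize}
\item For $\lambda_i<0$ one has $r_k>0$ throughout, so $a-b$ \emph{grows} from $\beta$. Once $b<0$ it equals $a+|b|$, and in the limit $a-b\ge a\ge|\lambda_i|^{1/L}$. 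For $|\lambda_i|\ge\alpha^L$ this exceeds $\beta/(c-1)$.
\item A bound $a-b\le C\beta$ would force $|p|\le (C\beta)^L$, which contradicts your own Step 3.
\item The same growth occurs for $\lambda_i=0$ and for $0<\lambda_i<\alpha^{L-1}(\alpha-\beta)$. Your case split omits this last regime, yet $\bW(k)\to\Wstar$ needs every eigenvalue.
\item Controlling $a^2-b^2$ only up to an additive $O(\eta^2)$ per step is also too weak. For $\lambda_i<0$ the quantity $a+b=(a^2-b^2)/(a-b)$ becomes tiny, and its sign must be preserved over unboundedly many steps.
\end{itemize}

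\textbf{The repair} is to multiply the two identities. At each step $a^2-b^2$ is multiplied by $1-\eta^2r_k^2a(k)^{2L-4}\in(0,1]$ whenever $\eta|r_k|a(k)^{L-2}<1$, and the step-size condition guarantees this inside the box. Hence $0<a^2-b^2\le 2\alpha\beta-\beta^2$ for all $k$. This is what closes your induction:
\begin{itemize}
\item It gives $a>|b|$, so $a$ never changes sign and only $b$ crosses zero.
\item In the phases where $a$ grows, no-overshoot gives $|b|^L\le a^{L-1}|b|=|p|\le|\lambda_i|$. With $M=\max(\alpha,|\lambda_i|^{1/L})$ this yields $a^2\le|\lambda_i|^{2/L}+2(c-1)\alpha^2\le c^2M^2$.
\item The sign condition on $p-\lambda_i$ alone cannot bound $a$ when $|b|$ is small. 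Note also that this uses only $\beta<(c-1)\alpha$, not the equation defining $c$ as you suggest.
\end{itemize}

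\textbf{The same bookkeeping fixes your convergence argument.} Your claim that $p=\lambda_i$ is the only fixed point compatible with the sign invariant is false. Every point with $a=0$ (the origin if $L=2$) is a fixed point of the scalar map, with $p=0$. To rule out stalling there you need lower bounds: $a\ge\alpha$ when $p$ increases, and $a\ge(a-b)/2\ge\beta/2$ when $p$ decreases. Then use summability of the monotone increments of $p$.

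\textbf{The negative-case timing also needs its own argument.} After $b$ crosses zero it is not ``the same comparison.'' There $a^2\approx 2\alpha\beta-\beta^2$ and $b\approx 0$, so the factors are far from balanced. The escape is governed by the effective scale $\sqrt{2\alpha\beta}$ rather than $\alpha$, which is why $T$ depends on $\beta$. The time for $b$ to reach zero must also be bounded separately.
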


\begin{figure}
\begin{subfigure}{.45\textwidth}
  \centering
  \includegraphics[width=1\linewidth]{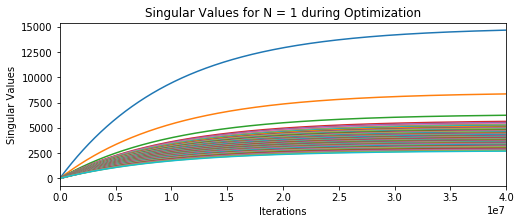}
  \caption{Singular Value Dynamics for $L=1$.}
  \label{fig: MNIST: Noise: SingularValues: N=1}
\end{subfigure}
\begin{subfigure}{.47\textwidth}
  \centering
  \includegraphics[width=1\linewidth]{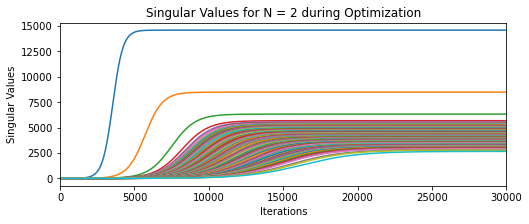}
  \caption{Singular Value Dynamics for $L=2$.}
  \label{fig: MNIST: Noise: SingularValues: N=2}
\end{subfigure}
\begin{subfigure}{.45\textwidth}
  \centering
  \includegraphics[width=1\linewidth]{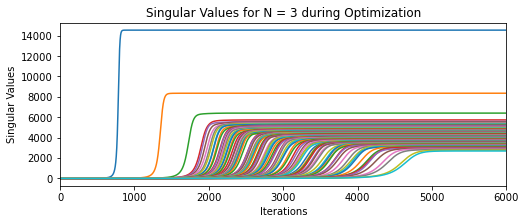}
  \caption{Singular Value Dynamics for $L=3$.}
  \label{fig: MNIST: Noise: SingularValues: N=3}
\end{subfigure}
\hspace{10mm}
\begin{subfigure}{.47\textwidth}
  \centering
  \includegraphics[width=1\linewidth]{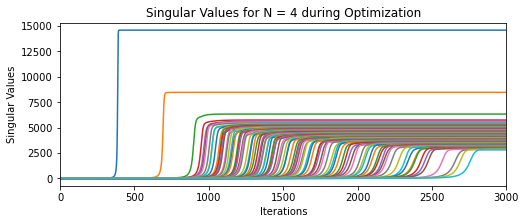}
  \caption{Singular Value Dynamics for $L=4$.}
  \label{fig: MNIST: Noise: SingularValues: N=4}
\end{subfigure}
\caption{Illustration of singular values of $\bWprod(\tdisc)$ during optimization on MNIST. Note that deeper factorization makes the convergence of each eigenvalue sharper, resulting in more distinguishable dynamics between eigenvalues.}
\label{fig: MNIST: Noise: SingularValues}
\end{figure}

Theorem \ref{theorem:PerturbedInitialization_NonAsymptoticConvergence} shows that the convergence speed of the eigenvalues depends on the magnitude of the ground truth eigenvalues. The larger the eigenvalue, the faster it is approximated by the corresponding eigenvalue of $\bW(\tdisc)$. Such a separation of eigenvalues by convergence rate becomes stronger the deeper the network is, as shown in Figure \ref{fig: MNIST: Noise: SingularValues}. An implication of this is that early stopping
of gradient descent leads to a \corr{low-rank} approximation of \corr{$\bW_\star$}, where the rank depends on the stopping time.




Let us also point out a fundamental difference between Theorem~\ref{theorem:PerturbedInitialization_NonAsymptoticConvergence} and the other results such as Theorem~\ref{thm:dln_ell_1} and Theorem~\ref{thm:matrix_gf_min_norm} stated below. The iterates are able to converge to not necessarily positive (definite) solutions. This is due to the fact that the initialization 
in \eqref{GD:init:W} is not identical for all layers.

Theorem~\ref{theorem:PerturbedInitialization_NonAsymptoticConvergence}
also suggests that early stopping of gradient descent on a factorization can be interpreted as a version of implicit bias towards low rank, where we now do not consider the limit of the dynamics but rather a certain time interval where the dynamics stays close to a matrix of a certain rank.
In order to make this statement precise,
we introduce the effective rank of a matrix $\bW$ as
\begin{equation}
    r(\bW) = \frac{\|\bW\|_*}{\|\bW\|_{2 \to 2}},
\end{equation}
where $\|\bW\|_*, \|\bW\|_{2 \to 2}$ are the nuclear and spectral norm of $\bW$, respectively, see Section~\ref{sec:notation}.
Moreover, for simplicity, we return to identical initialization, i.e., $\beta=0$ in \eqref{GD:init:W}, and restrict ourselves to a positive semi-definite ground-truth $\bW_\star$ for which the consequent spectral cut-off has no effect. Both gradient descent and gradient flow are analyzed in \cite[Theorems~3.1 and 3.5]{chou2024gradient}. For simplicity we only state the result for gradient flow here.


\begin{theorem}[\cite{chou2024gradient}, Theorem 3.1] 
\label{thm:dmf_implicit}
Let $\Wstar \in \mathbb{R}^{d\times d}$ be a symmetric ground truth with eigenvalues $\lambda_1 \geq \dots \geq \lambda_n \geq 0$ and denote by $(\Wstar)_s$ the best rank-$s$ approximation of $\Wstar$ (in the Frobenius norm). Consider the gradient flow \eqref{grad-flow-E} with initialization \eqref{GD:init:W}, where $\alpha > 0$ and $\beta=0$. Let $\bW(t)$ be the corresponding product flow \eqref{prod-matrix-k}.
    Set $s' = \max\{s:\lambda_s > \alpha^L \}$. If $L=2$ and $\alpha^2 < \lambda_s$, then
    \begin{equation}\label{eq:EffectiverRankContinuousPSD}
		|r((\Wstar)_s) - r(\bWprod(t))| \leq 2 \epsilon \; r((\Wstar)_s) + C\frac{d-s'}{\|\Wstar\|} \alpha^2
    \end{equation}
    for all $t\in T_{\alpha,\epsilon,\lambda_s}$, where $T_{\alpha,\epsilon,\lambda_s} = [a_{\alpha,\epsilon,\lambda_s},b_{\alpha,\epsilon,\lambda_s}]$ is an interval defined in \cite{chou2024gradient} that may be empty if the spectral gaps between different eigenvalues of $\bW_\star$ are too small.
\end{theorem}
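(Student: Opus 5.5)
The plan is to reduce the matrix flow to $d$ decoupled scalar flows, solve these explicitly for $L=2$, and then estimate the nuclear and spectral norms of $\bW(t)$ eigenvalue by eigenvalue on a suitable time window.

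\textbf{Step 1: diagonalization.} With $\beta=0$, all factors start at $\alpha I$, which commutes with $\Wstar=\bV\Lambda\bV^\top$. The gradient of $\mathcal{L}_{ln}$ with respect to $\bW^{(\ell)}$ is $\prod_{j>\ell}\bW^{(j)\top}(\bW-\Wstar)\prod_{j<\ell}\bW^{(j)\top}$. If all factors equal the same matrix $\bV\,\diag(\bw)\,\bV^\top$, this gradient is again of that form, and it is the same for every $\ell$. By uniqueness of ODE solutions, all factors therefore remain equal to $\bV\,\diag(\bw(t))\,\bV^\top$, and $\bW(t)=\bV\,\diag(\bw(t)^{\odot L})\,\bV^\top$. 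Each eigenvalue $\mu_i=w_i^L$ then evolves through a scalar ODE of the type in Remark~\ref{rem:Reparameterization}. For $L=2$ this is $\dot\mu_i=c\,\mu_i(\lambda_i-\mu_i)$ with $\mu_i(0)=\alpha^2$, for a constant $c$ fixed by the normalization of the loss. This logistic equation has the explicit solution
\[
\mu_i(t)=\frac{\lambda_i\alpha^2}{\alpha^2+(\lambda_i-\alpha^2)e^{-c\lambda_i t}}
\]
when $\lambda_i>0$, and $\mu_i(t)=\alpha^2/(1+c\alpha^2 t)\le\alpha^2$ when $\lambda_i=0$. The solution is monotone, moving from $\alpha^2$ toward $\lambda_i$. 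In particular, for every $i>s'$ (so $\lambda_i\le\alpha^2$) we have $\mu_i(t)\le\alpha^2$ for all $t$.

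\textbf{Step 2: the time window.} I would define $a_{\alpha,\epsilon,\lambda_s}$ as the first time at which $\mu_s(t)\ge(1-\epsilon)\lambda_s$. By the formula above this is roughly $\frac{1}{c\lambda_s}\log\frac{\lambda_s}{\epsilon\alpha^2}$. Monotonicity in $\lambda$ of the logistic time scale then gives $\mu_i\ge(1-\epsilon)\lambda_i$ for all $i\le s$ once $t\ge a$. I would define $b_{\alpha,\epsilon,\lambda_s}$ as the last time at which $\mu_{s+1}(t),\dots,\mu_{s'}(t)$ are all at most $\epsilon\lambda_{s+1}$, or, better suited to the bound, at most $\epsilon\|\Wstar\|/(d-s)$ in aggregate; this is roughly $\frac{1}{c\lambda_{s+1}}\log(\cdot)$. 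The interval $[a,b]$ is nonempty exactly when $\lambda_s/\lambda_{s+1}$ is large relative to the logarithmic factors, which matches the caveat in the statement.

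\textbf{Step 3: effective rank.} On $[a,b]$ the spectral norm satisfies $\|\bW(t)\|=\mu_1(t)\in[(1-\epsilon)\lambda_1,\lambda_1]$. The nuclear norm splits into three parts. The indices $i\le s$ contribute between $(1-\epsilon)\|(\Wstar)_s\|_*$ and $\|(\Wstar)_s\|_*$. The indices $s<i\le s'$ contribute at most $\epsilon\|(\Wstar)_s\|_*$, using the choice of $b$. The indices $i>s'$ contribute at most $(d-s')\alpha^2$. Dividing by $\mu_1\approx\lambda_1=\|\Wstar\|$ and comparing with $r((\Wstar)_s)=\|(\Wstar)_s\|_*/\lambda_1$, a first-order expansion of the ratio gives the relative error $2\epsilon\,r((\Wstar)_s)$ plus the additive term $C(d-s')\alpha^2/\|\Wstar\|$.

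\textbf{Main obstacle.} The delicate part is choosing $a$ and $b$ so that the ratio error is exactly $2\epsilon$ rather than something like $3\epsilon$. This requires a careful expansion of $\frac{X+\delta}{Y(1-\epsilon')}$ for small perturbations $\delta,\epsilon'$. It may be necessary to replace $\epsilon$ by $\epsilon/2$ in the definitions of $a$ and $b$, or to absorb the second-order terms into $C$.
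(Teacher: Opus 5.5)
The survey quotes this theorem from \cite{chou2024gradient} without proof. Your strategy is the natural one and the mechanism behind the cited result (cf.\ the diagonal error matrix in Theorem~\ref{theorem:PerturbedInitialization_NonAsymptoticConvergence}): the factors stay equal to $\bV\diag(\bw(t))\bV^\top$, the eigenvalues $\mu_i$ of $\bW(t)$ solve decoupled logistic equations, and one picks a window where the top $s$ have converged while the rest are still small. Step~1 is correct. The step that fails is the monotonicity claim in Step~2. The time at which $\mu_i$ reaches $(1-\epsilon)\lambda_i$ is $\tau(\lambda_i)=\max\{0,\frac{1}{c\lambda_i}\log\frac{(1-\epsilon)(\lambda_i-\alpha^2)}{\epsilon\alpha^2}\}$. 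This is not decreasing on $(\alpha^2,\infty)$: it vanishes up to $\alpha^2/(1-\epsilon)$, then increases, and decreases only further out (e.g.\ on $[2\alpha^2,\infty)$ when $\epsilon<1/(1+e^2)$). Under the stated hypothesis $\alpha^2<\lambda_s$ alone, your lower endpoint therefore breaks. For $\lambda_s=\alpha^2/(1-\epsilon)$ you get $a=0$, but if $\lambda_1=2\lambda_s$ then $\mu_1(0)=\alpha^2<(1-\epsilon)\lambda_1$, so $\|\bW(a)\|\ge(1-\epsilon)\lambda_1$ is false. Define instead $a=\max\{0,\frac{1}{c\lambda_s}\log\frac{(1-\epsilon)(\lambda_1-\alpha^2)}{\epsilon\alpha^2}\}$, or $\max_{i\le s}\tau(\lambda_i)$. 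Since $\mu_i(t)\ge(1-\epsilon)\lambda_i$ is equivalent to $(1-\epsilon)(\lambda_i-\alpha^2)e^{-c\lambda_i t}\le\epsilon\alpha^2$, and $(\lambda_i-\alpha^2)e^{-c\lambda_i t}\le(\lambda_1-\alpha^2)e^{-c\lambda_s t}$ for $i\le s$, this handles all $i\le s$ at once.

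Second, your first option for $b$ (each middle $\mu_i\le\epsilon\lambda_{s+1}$) only gives $\sum_{s<i\le s'}\mu_i\le(s'-s)\epsilon\lambda_{s+1}$. This exceeds $\epsilon\|(\Wstar)_s\|_*$ whenever $(s'-s)\lambda_{s+1}>\sum_{i\le s}\lambda_i$, which happens with many moderately small eigenvalues even if there is a clear gap at $s$. Use your second option: let $b$ be the time at which $\mu_{s+1}$ reaches $\epsilon\lambda_1/(d-s)$. You also need to state the comparison $\mu_i(t)\le\mu_{s+1}(t)$ for $i>s$, and $\mu_i(t)\le\mu_1(t)$, which you tacitly use for $\|\bW(t)\|=\mu_1(t)$. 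Both follow from the comparison principle, since $c\mu(\lambda-\mu)$ is increasing in $\lambda$ for $\mu\ge0$. With these choices no first-order expansion is needed. On $[a,b]$ one has exactly $(1-\epsilon)\,r((\Wstar)_s)\le r(\bW(t))\le\frac{1+\epsilon}{1-\epsilon}\,r((\Wstar)_s)+\frac{(d-s')\alpha^2}{(1-\epsilon)\|\Wstar\|}$. Running the construction with $\epsilon/2$ then gives the stated bound with $2\epsilon$ and $C=2$ for $\epsilon\le1$. Absorbing the second-order terms into $C$ instead does not work, because the additive term vanishes when $s'=d$.
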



\medskip

{\textbf{Interpretation of dynamics as Riemannian gradient flow.}}
The representation of a matrix as a fully connected linear network, i.e., a product of several matrices, in a loss function and the corresponding gradient flow on the individual matrices, may effectively change the underlying (Riemannian) geometry \cite{Bah2021learning}. Although the connection to implicit regularization is not fully understood, this point of view may give some intuition and may be useful for the further development of the theory. Therefore, we will briefly delve into this topic before continuing the discussion on implicit regularization in matrix recovery problem.

For a general continuously differentiable loss function  $\mathcal{K} : \R^{d_0 \times d_L} \to \R$ on matrices, we consider
the loss on the factorization $\bW = \bW^{(L)} \cdots \bW^{(1)}$ with $\bW^{(\ell)} \in \R^{d_\ell \times d_{\ell-1}}$ defined as
\[
\mathcal{L}(\bW^{(1)}, \hdots, \bW^{(L)}) = \mathcal{K}(\bW^{(L)} \cdots \bW^{(1)}).
\]
We aim at minimizing $\mathcal{L}$ (and thereby $\mathcal{K}$) via the gradient flow
\begin{equation}\label{grad-flow-general}
\frac{d}{dt} \bW^{(\ell)}(t) = - \nabla_{\bW^{(\ell)}} \mathcal{K}(\bW^{(1)}(t),\hdots,\bW^{(L)}(t)), \quad \ell=1,\hdots,L.
\end{equation}
A useful observation is that the terms
\[
(\bW^{(\ell+1)})^\top(t)\bW^{(\ell+1)}(t) - \bW^{(\ell)}(t)(\bW^{(\ell)})^\top(t) = \Delta^{(\ell)}, \quad \ell=1,\hdots, L-1,
\]
are constant in time, i.e., they are invariants of the ODE system \eqref{grad-flow-general} \cite{Bah2021learning,arcoha18,arora2018convergence,chlico23}. In particular, for so-called balanced initialization
\begin{equation}\label{balancedness}
(\bW^{(\ell+1)}(0))^\top\bW^{(\ell+1)}(0) = \bW^{(\ell)}(0)(\bW^{(\ell)}(0))^\top, \quad \ell=1,\hdots,L-1
\end{equation}
it holds $(\bW^{(\ell+1)}(t))^\top \bW^{(\ell+1)}(t) = \bW^{(\ell)}(t)(\bW^{(\ell)}(t))^\top$ for all $t \geq 0$.
In this case, one can show \cite{arcoha18,Bah2021learning} that the product
\begin{equation}\label{prod-flow}
\bW(t) = \bW^{(L)}(t) \cdots \bW^{(1)}(t)
\end{equation}
satisfies the equation
\begin{equation}\label{flow-product}
\frac{d}{dt} \bW(t) = - \mathcal{B}_{\bW(t)}(\nabla \mathcal{K}(\bW(t))).
\end{equation}
where the operator $\mathcal{B}_{\bW}$ is defined as
\[
\mathcal{B}_{\bW}(\bZ) = \sum_{j=1}^N (\bW \bW^T)^{\frac{N-j}{N}} \bZ  (\bW^T \bW)^{\frac{j-1}{N}}.
\]
It is shown in \cite{Bah2021learning} that if $\bW$ is of rank $k$, then the restriction of $\mathcal{B}_{\bW}$
to the tangent space of the manifold $\mathcal{M}_k$ of rank-$k$-matrices in $\R^{d_0 \times d_L}$ is self-adjoint and positive definite, hence invertible. The inverse $\overline{\mathcal{B}}_{\bW}^{-1}$ of this restriction defines a Riemannian metric on $\mathcal{M}_k$
via
\[
g_{\bW}(\bZ_1,\bZ_2) = \langle \overline{\mathcal{B}}_{\bW}^{-1}(\bZ_1),\bZ_2\rangle_F
\]
for all $\bZ_1,\bZ_2$ in the tangent space of $\mathcal{M}_k$ at $\bW$. The corresponding Riemannian gradient can be expressed
\[
\nabla^g \mathcal{K}(\bW) = \mathcal{B}_{\bW}(\nabla \mathcal{K}(\bW)),
\]
so that the flow equation \eqref{flow-product} for the product can be written as the Riemannian gradient flow
\[
\frac{d}{dt} \bW(t) = - 
\nabla^g \mathcal{K}(\bW(t)).
\]
In this sense, the reparameterization of matrix as a product of matrices may change the underlying geometry.

The article \cite{comeve23} contains more information on this Riemannian metric and numerical experiments concerning its influence on implicit regularization. These suggests that Riemannian gradient flow on the factorization favors minimizers where the Riemannian volume is maximized. However, a clear understanding of this phenomenon is still missing.

In \cite{Bah2021learning}, the gradient flow for the particular loss functions
\begin{align*}
\mathcal{K}(\bW) &= \frac{1}{2}\|\bW \bX - \bY\|_F^2, \\
\mathcal{L}(\bW^{(1)},\hdots,\bW^{(L)}) & = \mathcal{K}(\bW^{(L)}\cdots \bW^{(1)})
\end{align*}
is studied, where $\bX \in \R^{m \times d_0}$ and $\bY \in \R^{m \times d_L}$ are matrices of input and output data.
It is basically shown that the product flow \eqref{prod-flow} related to the gradient flow \eqref{grad-flow-general} converges to a global minimizer of $\mathcal{K}$ restricted to a manifold of rank-$k$ matrices for a certain $k$ (for almost all balanced and unbalanced initializations), see \cite{Bah2021learning} for details. This is remarkable because $\mathcal{L}$ is nonconvex.


\medskip

{\textbf{Implicit bias in underdetermined matrix recovery problems.}} We now consider implicit regularization in the context of low rank matrix recovery, see also Section \ref{sec:LowRank}. For a general linear map $\mathcal{A} : \R^{d \to d} \to \R^m$ with $m < d^2$ which can be described in the form \eqref{matrix-meas} with measurement matrices $\bA_1,\hdots,\bA_m \in \R^{d \times d}$, we consider the loss function $\mathcal{K} : \R^{d \times d} \to \R$ defined as
\[
\mathcal{K}(\bW) = \frac{1}{2} \|\mathcal{A}(\bW)- \by\|_2^2
\]
and plug in a factorization $\bW = \bW^{(L)} \cdots \bW^{(1)}$ with $\bW^{(\ell)} \in \R^{d\times d}$ (for simplicity we consider only square matrices $\bW$ and $\bW^{(\ell)}$) resulting in
\begin{equation}\label{L-factorized}
\mathcal{L}(\bW^{(1)}, \hdots, \bW^{(L)}) = \mathcal{K}(\bW^{(L)}\cdots \bW^{(1)}) = \frac{1}{2} \|\mathcal{A}(\bW^{(L)}\cdots \bW^{(1)}) - \by\|_2^2.
\end{equation}
We consider the corresponding gradient flow \eqref{grad-flow-general} with the associated product flow $\bW(t)$ in \eqref{prod-flow}. We choose identical initialization, i.e,
\begin{equation}\label{init-alpha}
\bW^{(\ell)}(0) = \alpha I, \quad \ell= 1,\hdots, L,
\end{equation}
for some $\alpha > 0$ and $I$ being the $d \times d$ identity matrix, resulting in $\bW(0) = \alpha^L I$.
Note that this initialization is in particular balanced, i.e., satisfies \eqref{balancedness}.

Taking intuition from the vector case discussed in the previous section, we expect that, for $L\geq 2$ and small initialization scale $\alpha$, gradient flow exhibits an implicit bias towards low rank in the limit as $t \to \infty$. Numerical experiments confirm this intuition. In Figure~\ref{Fig:LowRankRecovery}, the success probability for recovering a matrix in $\R^{20 \times 20}$ of rank $2$ via gradient flow applied to linear networks of depth $L=2,3,4$ and small initialization $\alpha$. The $x$-axis corresponds to the used number of random measurements.
While the case $L=2$ requires comparably many measurements (but still allows \corr{one} to solve an underdetermined system), the cases \corr{$L=3$} and $L=4$ reach success at almost the optimal number of measurements. In fact, the number of degrees of freedom to describe a $d \times d$ matrix of rank $r$ is $r(2d-r)$\corr{,} which for $d=20$ and $r=2$ gives $76$.

\begin{figure}
{\centering{
\includegraphics[width=0.9\textwidth]{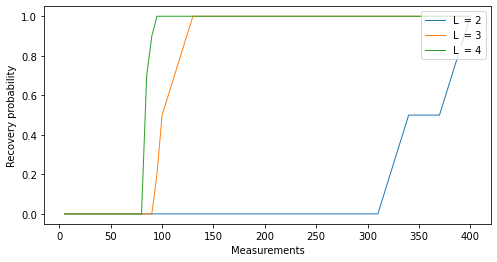}
\caption{Success probability for the recovery of $20 \times 20$ matrices of rank $2$ from Gaussian measurements with gradient flow on deep linear networks of depth $L=2,3,4$.}
\label{Fig:LowRankRecovery}}}
\end{figure}

The theory cannot yet fully explain this observation, but partial results are available.
The following result for the special case that the measurement matrices $\bA_1,\hdots,\bA_m$ in \eqref{matrix-meas} commute has been shown in 
\cite[Theorem 2, Proposition 2]{arora2019implicit}.
 A similar result for symmetric factorizations in the two-layer case
 $\bW = \bW^{(1)} (\bW^{(1)})^T$ 
 has been derived in \cite[Theorem 1]{gunasekar2017implicit}. 




\begin{theorem}[\cite{arora2019implicit}]\label{thm:matrix_gf_min_norm}
Let $L \geq 2$ and suppose that the matrices $A_1,\ldots,A_m$ commute, i.e., $\bA_i \bA_j = \bA_j \bA_i$ for all $i,j = 1,\hdots,m$. Consider the product flow $\bW(t)=\bW_{\alpha}(t)$ in \eqref{flow-product} 
$\mathcal{L}$ in \eqref{L-factorized} with initialization \eqref{init-alpha} with $\alpha > 0$. Assume that $\bW_{\infty,0}:=\lim_{\alpha\to 0} \lim_{t \to \infty} \bW_{\alpha}(t)$ exists and satisfies $\mathcal{K}(\bW_{\infty,0}) = 0$.
 Then it holds that
    \begin{equation}\label{eq:min_nuc}
        \bW_{\infty,0}\in\argmin_{\bW\succeq 0,\mathcal{K}(\bW)=0}\|\bW\|_*.
    \end{equation}
\end{theorem}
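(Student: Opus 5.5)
The plan is to reduce the matrix problem to the diagonal (vector) case of Theorem~\ref{thm:dln_ell_1} by simultaneous diagonalization, and then pass to the limit $\alpha\to 0$. We may assume the $\bA_i$ are symmetric; replacing $\bA_i$ by $(\bA_i+\bA_i^T)/2$ does not change $\mathcal{A}$ on symmetric arguments, and the flow will stay symmetric. Since the measurement matrices commute, we fix an orthogonal $\bV$ with $\bA_i=\bV\diag(\ba_i)\bV^T$ for all $i$.

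\textbf{Step 1: the product flow is diagonal in the basis $\bV$.} The initialization \eqref{init-alpha} is balanced, so the product obeys \eqref{flow-product} with $\nabla\mathcal{K}(\bW)=\sum_i(\langle \bW,\bA_i\rangle-y_i)\bA_i$. Make the ansatz $\bW(t)=\bV\diag(\bx(t))\bV^T$ with $\bx(t)\ge 0$. Then $\bW\bW^T=\bW^T\bW$ is diagonal in $\bV$, and $\nabla\mathcal{K}(\bW)$ is diagonal in $\bV$ as well. Consequently,
\[
\mathcal{B}_{\bW}(\nabla\mathcal{K}(\bW)) = L\,\bV\diag\bigl(\bx^{\odot(2-2/L)}\odot \bM^T(\bM\bx-\by)\bigr)\bV^T,
\]
where $\bM\in\R^{m\times d}$ has rows $\ba_i^T$. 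This ODE for $\bx$ is exactly the product flow of the diagonal network from Remark~\ref{rem:Reparameterization}, namely gradient flow on $\mathcal{L}_{rp}$ expressed through $\xprod=\bx^{\odot L}$ (up to a harmless constant time rescaling). Its initial value is $\xprod(0)=\alpha^L\1$, which corresponds to $\bW(0)=\alpha^L I$. By uniqueness of ODE solutions (the right-hand side is locally Lipschitz on $\bx>0$, and positivity is preserved as in Lemma~\ref{lemma:Bregman_nonincreasing}), the ansatz gives the true solution. Therefore $\bW_\alpha(t)=\bV\diag(\xprod_\alpha(t))\bV^T$, with $\xprod_\alpha$ the diagonal flow for the system $\bM\bz=\by$.

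\textbf{Step 2: apply the vector result.} The hypothesis says that $\bW_{\infty,0}$ exists and $\mathcal{K}(\bW_{\infty,0})=0$. Hence $S_+=\{\bz\ge0:\bM\bz=\by\}$ is nonempty, because $\bV^T\bW_{\infty,0}\bV$ is a nonnegative diagonal limit that solves the system. Theorem~\ref{thm:dln_ell_1} then gives that $\xprod_{\alpha,\infty}\in S_+$ and $\|\xprod_{\alpha,\infty}\|_1\le(1+\epsilon(\alpha))Q$, with $\epsilon(\alpha)\to0$. Letting $\alpha\to0$, the limit $\bz_0=\lim_\alpha\xprod_{\alpha,\infty}$ lies in the closed set $S_+$ and satisfies $\|\bz_0\|_1\le Q$. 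So $\bz_0$ is an $\ell_1$-minimizer over $S_+$, and $\bW_{\infty,0}=\bV\diag(\bz_0)\bV^T$.

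\textbf{Step 3: translate back to the nuclear norm.} We need to show that any $\bW\succeq0$ with $\mathcal{A}(\bW)=\by$ satisfies $\|\bW\|_*\ge Q$. For such $\bW$ we have $\|\bW\|_*=\operatorname{tr}\bW=\operatorname{tr}(\bV^T\bW\bV)=\|\bz\|_1$, where $\bz=\operatorname{diag}(\bV^T\bW\bV)\ge0$. Moreover $y_i=\langle\bW,\bA_i\rangle=\langle\bV^T\bW\bV,\diag(\ba_i)\rangle=\ba_i^T\bz$, so $\bz\in S_+$. Hence $\|\bW\|_*\ge Q=\|\bz_0\|_1=\|\bW_{\infty,0}\|_*$. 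Together with $\bW_{\infty,0}\succeq0$, this proves \eqref{eq:min_nuc}.

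The main obstacle is Step 1: verifying rigorously that the product flow stays simultaneously diagonalizable with $\bV$. This requires handling the fractional powers in $\mathcal{B}_{\bW}$ carefully and invoking uniqueness. A secondary point is the symmetrization of non-symmetric commuting $\bA_i$. If this turns out to be delicate, I would state the symmetric case and remark that the general commuting case reduces to it.
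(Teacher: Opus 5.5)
Your Steps 1--3 follow the reduction the paper sketches: a fixed common eigenbasis, eigenvalues following the diagonal-network flow, then Theorem~\ref{thm:dln_ell_1} and $\|\bW\|_*=\operatorname{tr}\bW$ on the positive semidefinite cone. For \emph{symmetric} commuting $\bA_i$ they are correct.

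The genuine error is your opening reduction ``we may assume the $\bA_i$ are symmetric'', and both claims supporting it are false. If some $\bA_i$ is nonsymmetric, then $\nabla\mathcal{K}(\bW)=\sum_i(\langle\bW,\bA_i\rangle-y_i)\bA_i$ is in general nonsymmetric. Already $\frac{d}{dt}\bW(0)=-L\alpha^{2L-2}\nabla\mathcal{K}(\alpha^L I)$ leaves the symmetric matrices, so the original flow is not the flow of the symmetrized problem. Symmetrizing can also destroy commutativity: for the $3\times 3$ shift $\boldsymbol{J}$, the matrices $\boldsymbol{J}$ and $\boldsymbol{J}^2$ commute, but their symmetric parts do not.

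In fact no repair is possible, because the statement as literally written is false without symmetry. Take $d=2$, $m=1$, $L=2$, $y=1$, and let $\bA_1$ be the matrix with a single entry $1$ in position $(1,2)$. Pairs of upper triangular factors are invariant under the factor flow. One computes that $\bW_\alpha(t)$ has both diagonal entries equal to $\alpha^2\cosh\tau(t)$, $(1,2)$-entry $\alpha^2\sinh(2\tau(t))$ and $(2,1)$-entry $0$, where $\dot\tau=1-\alpha^2\sinh(2\tau)$ and $\tau(0)=0$. Hence the loss tends to $0$ and $\bW_{\infty,0}=\bA_1$, which is not positive semidefinite. So drop the reduction and add $\bA_i=\bA_i^T$ as a hypothesis, as the original source does. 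The paper's sketch (``the eigenspaces of $\bW(t)$ remain constant'') tacitly requires this as well.

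Two smaller points.

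First, the bound of Theorem~\ref{thm:dln_ell_1} requires $Q>d\alpha^L$, so the case $\by=\0$ needs a separate line. There $Q=0$, and you must show $\bW_{\infty,0}=\0$. Use Lemma~\ref{lemma:Bregman_nonincreasing} with $\bz=\0$: the quantity $D_F(\0,\xprod(t))$ equals $\frac12\|\xprod(t)\|_1$ for $L=2$ and $\frac12\langle\xprod(t)^{\odot 2/L},\1\rangle$ for $L>2$. It is nonincreasing and starts at $\frac{d\alpha^2}{2}$, which forces $\xprod_{\alpha,\infty}\to\0$.

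Second, the ``main obstacle'' in Step 1 disappears if you check the ansatz on the factors instead of on \eqref{flow-product}. Setting $\bW^{(\ell)}(t)=\bV\diag(\bx(t))\bV^T$ for all $\ell$ gives $\nabla_{\bW^{(\ell)}}\mathcal{L}=\bV\diag\bigl(\bx^{\odot(L-1)}\odot\boldsymbol{M}^T(\boldsymbol{M}\bx^{\odot L}-\by)\bigr)\bV^T$, which is exactly the diagonal-network factor flow. That ODE has a polynomial right-hand side, so uniqueness is immediate and no fractional powers of $\bW\bW^T$ ever appear.
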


The proof is based on the fact that due to the commutativity of the $\bA_j$, the eigenspaces of $\bW(t)$ remain constant with respect to $t$ so that it is enough to consider the dynamics of the diagonal matrices containing the eigenvalues. This essentially reduces the problem  to one of analyzing linear diagonal networks discussed in the previous section. 
Furthermore, Theorem \ref{thm:matrix_gf_min_norm} explicitly assumes convergence of gradient flow, whereas the previous results proved convergence in addition to characterizing implicit regularization.

Note that the assumption that the matrices $\bA_1,\hdots,\bA_m$ commute is quite strong. In particular, one can only find at most $d$ commuting matrices in $\R^{d \times d}$ that are linearly independent. However, counting degrees of freedom it is not possible to have unique recovery of even a rank one matrix $\bX \in \R^{d \times d}$ from fewer than $2d-1$ matrices. Therefore, the assumption of commuting measurements is not realistic for most settings of practical relevance.

For the case of $L=2$ layers, this assumption could be removed in \cite{sost25,stoger2021small,jin2023incremental}, where it is instead assumed that the measurement map $\mathcal{A}$ satisfies the matrix version of the RIP \eqref{matrix-RIP}. 
The authors use random initialization $\bW^{(1)} = \alpha \bU$, $\bW^{(2)} = \alpha \bV$ for independent standard Gaussian random matrices $\bU, \bV \in \R^{d \times d}$. For the ground truth matrix $\bX \in \R^{d \times d}$ of rank $r$ they introduce the condition number $\kappa = \|\bX\|_{2 \to 2}/\sigma_r$, where $\sigma_r$ denotes the $r$-th singular value of $\bX$. They assume that the measurement map $\mathcal{A}$ satisfies the RIP \eqref{matrix-RIP} with constant
\begin{equation}\label{cond-deltar}
\delta_{2r+1} \leq \frac{c}{\kappa^3 \sqrt{r}}
\end{equation}
and that $\by = \mathcal{A}(\bX)$.
They show, for small enough step size $\eta$ and small enough initialization
$\alpha$, that after a certain number $K$ of iterations of gradient descent,
\[
\bW^{(\ell)}(k+1) = \bW^{(\ell)}(k) - \eta \nabla_{\bW^{(\ell)}}\mathcal{L}(\bW^{(1)}(k),\bW^{(2)}(k)), \quad \ell=1,2, \quad k=1,2,\hdots,
\]
the product $\bW(K) = \bW^{(2)}(K)\bW^{(1)}(K)$ of the iterates  satisfies, with high probability,
\[
\|\bW(K) - \bX\|_{2 \to 2} \leq \alpha^{3/5}\|\bX\|_{2 \to 2}^{7/10}.
\]
We refer to \cite[Theorem 3.3]{sost25} for details and a slightly more general version. Note that
\corr{plugging} the condition \eqref{cond-deltar} into the RIP bound
\eqref{cond:m:RIP-matrix} for Gaussian measurement maps leads to the condition
\[
m \geq C \kappa^6 r^{2} d.
\]
It is currently not clear whether the dependence on the condition number $\kappa$ of $\bX$ and the quadratic scaling in the rank $r$ is needed.

Extensions to deeper networks with $L\geq 3$ layers are currently open.


\subsection{Early alignment for two-layer ReLU networks}

Compared to the simplified linear settings discussed in the previous chapters, the analysis of implicit regularization for gradient flow/descent for learning neural networks becomes significantly more challenging when the activation $\sigma$ in \eqref{eq:FNN} is nonlinear. Some results are available at least for $L=2$ layers,
see e.g.\ \cite{boursier2022gradient,boursier2024,chistikov2023learning,chizat2020implicit,maennel2018quantizes,min2024early}. 
Two-layer neural networks $h : \R^d \to \R$ can be written in the form
\[
h_{\btheta}(\bx) = h_{\ba,\bW}(\bx) = \sum_{j=1}^d a_j \sigma(\bw_j^T \bx), \quad \bx \in \R^d, 
\]
where $a_1,\hdots,a_d \in \R$ and $\bw_1,\hdots,\bw_d \in \R^d$ (corresponding to the rows of the weight matrix $\bW^{(1)}$ in \eqref{eq:FNN}). The parameters are assembled in the vector $\btheta = (a_j,\bw_j)_{j = 1}^d$. Given training data $(\bx_i,\by_i) \in \R^{d} \times \R$, $i=1,\hdots,n$, we consider learning $h_{\btheta}$ by minimizing the empirical risk \eqref{emp:risk}, i.e.,
\[
\mathcal{L}(\btheta) = \frac{1}{n}\sum_{i=1}^n \ell(h_{\btheta}(\bx_i),y_i)
\]
via gradient flow \eqref{def:grad-flow} or gradient descent \eqref{def:grad-descent}.

The ReLU activation function $\sigma(z) = \max\{0,z\}$ has been given particular attention in several works \cite{boursier2022gradient,boursier2024,maennel2018quantizes} due to its popularity in applications. (Note that the gradient flow has to be replaced by a subgradient flow due to the non-differentiability of $\sigma$ in $0$.)

For small initialization, again implicit regularization can be observed in the sense that in the first phase of the training, many neurons align with a few directions \cite{boursier2024,maennel2018quantizes}. This phenomenon is called early alignment. 
Figure~\ref{Fig:EarlyAlign} illustrates this for the case $d=2$ by displaying all the inner weight vectors $\bw_j(k) \in \R^2$ at iteration $k=3000$. The neurons point in only $3$ possible directions.

\begin{figure}
{\centering{
\includegraphics[width=0.7\textwidth]{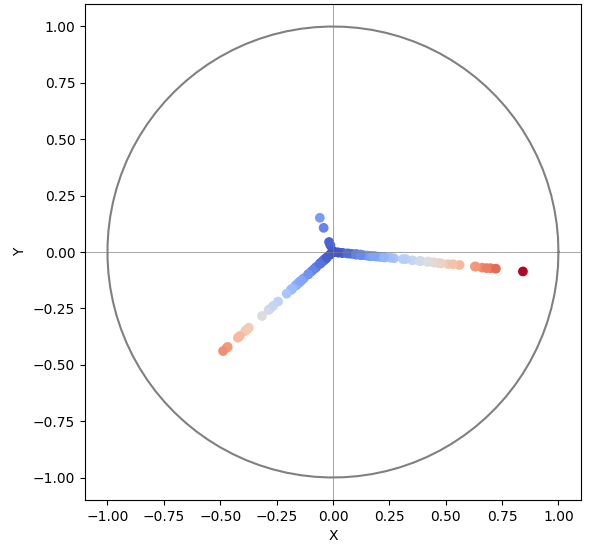}
\caption{Early alignment of inner neurons of two-layer ReLU network after $3000$ iteration of gradient descent with small initialization.}
\label{Fig:EarlyAlign}}}
\end{figure}

The key idea of the analysis is that the flow can be approximated by the one, where the (small) outputs of the neural network $h_{\btheta(t)}(\bx_i)$ can be replaced by $0$. In this situation, the dynamics for the single neurons decouples and can be described by
\[
\frac{d \bw_j(t)}{dt} \in a_j(t) \mathcal{D}(\bw_j(t)),  \qquad \frac{d a_j(t)}{dt} \in \mathcal{D}(\bw_j(t))^T \bw_j(t), 
\]
where
\begin{align*}
\mathcal{D}(\bw) & = \left\{
-\frac{1}{m} \sum_{j=1}^m \eta_j \bx_j \, \partial_1 \ell(0, \by_j)^T 
\;\middle|\;
\eta_j =\left\{
\begin{array}{ll}
1 & \langle \bw, \bx_j \rangle > 0 \\
{[0,1]} & \langle \bw, \bx_j \rangle = 0 \\
0 & \langle \bw, \bx \rangle < 0
\end{array}\right\}
\right\}.
\end{align*}
Define $D(\bw) = \argmin_{D \in \mathcal{D}(\bw)} \|D\|_F$. This dynamics of the directions of the neurons $\bw_j(t)/\|\bw_j(t)\|_2$ can approximately be described by a gradient ascent flow for the function
\[
G : S^{d-1} \to \R, \quad G(\bw) = \langle D(\bw), \bw \rangle.
\]
Using these observations, it is basically shown in \cite{boursier2025early,maennel2018quantizes} that the directions $\frac{\bw_{j}(t)}{\|\bw_j(t)\|_2}$ of the neurons $\bw_{j}(t)$
approach critical points of the function $G$ above\corr{;} see \cite{boursier2025early,maennel2018quantizes} for the \corr{detailed technical} statements.
As a consequence, if the vectors $\bv_{1},\hdots,\bv_{s} \in S^{d-1}$ are the critical points of $G$, then the neural network $h_{\btheta(t)}$ approaches a function of the form
\[
g(\bx) = \sum_{j=1}^s b_j \sigma(\bv_j^T \bx)
\]
in the first phase of the training. If the number $s$ of critical points of $G$ is small -- which often seems to be the case -- this shows that also in this situation gradient flow induces an implicit bias towards sparse solutions in this sense. (Note that this is different notion of sparsity than observed for diagonal linear networks.)

We remark that the approximation of the dynamics by the one where the outputs of the neural network $h_{\btheta}(\bx_i)$ are replaced by $0$ is only valid at the initial phase when these outputs are small. When they become large, it seems hard to predict the dynamics in general and in particular, to say something about the limit. Nevertheless, the dynamics often proceeds in phases with almost constant output empirically, and a similar analysis seems possible also for these almost stagnant phases.

Generalizations of these findings to (ReLU) networks with more than two layers seem to be widely open.


\subsection{Discussion and Future Research Directions}

Understanding the implicit regularization of gradient descent and its variations is key to understanding generalization in deep learning in the overparameterized scenario. As the previous sections illustrated, implicit regularization of neural network learning seems to be towards solutions of lower complexity than the mere number of parameters used for representing the neural network -- at least for suitable (small) initialization.
Following Occam's razor principle that among several models nature chooses the simplest one, the ground truth models describing reality should be of rather low complexity. If this is true for a particular application, then implicit regularization towards low complexity solutions gives an explanation of why learned neural networks can lead to good generalization despite overparameterization.

It is a major research challenge to understand this phenomenon in more detail and for more realistic deep neural network models as well as for commonly used variants of gradient descent such as preconditioned methods like Adam. While in the examples above, we have seen that sparsity and low rank structures can arise, it may be that the low complexity structures arising in deep neural network learning are more complicated and it is also a matter of finding out the nature of such low complexity models, i.e., ways of representing large scale neural networks with much fewer parameters.

We end this article with a general question. Observing that the implicit regularization in training largely overparameterized neural networks may lead to networks of much lower complexity, one may question whether there are ways to learn such low complexity models more efficiently with few parameters from the beginning without the need of first taking a detour to networks with huge number of parameters. This is somewhat similar to the original idea of compressive sensing that it suffices to take much fewer measurements than previously believed necessary in order to efficiently recover a sparse vector in high dimensions. However, in contrast to compressive sensing, the deep learning problem is highly nonlinear, and at this point it is not  
not clear whether passing to higher dimension is necessary for computational reasons.
It seems very interesting but also challenging to attack such research questions.

\newpage
\bibliography{bibliography}
\bibliographystyle{plain}

%
%







\end{document}

%% file: Shortcuts.tex
\newcommand{\ba}{\boldsymbol{a}}
\newcommand{\bA}{\boldsymbol{A}}
\newcommand{\bb}{\boldsymbol{b}}
\newcommand{\bB}{\boldsymbol{B}}

\newcommand{\bI}{\boldsymbol{I}}

\newcommand{\bt}{\boldsymbol{t}}

\newcommand{\bu}{\boldsymbol{u}}
\newcommand{\bU}{\boldsymbol{U}}
\newcommand{\bv}{\boldsymbol{v}}
\newcommand{\bV}{\boldsymbol{V}}
\newcommand{\bw}{\boldsymbol{w}}
\newcommand{\bW}{\boldsymbol{W}}
\newcommand{\bx}{\boldsymbol{x}}
\newcommand{\bX}{\boldsymbol{X}}
\newcommand{\by}{\boldsymbol{y}}
\newcommand{\bY}{\boldsymbol{Y}}
\newcommand{\bz}{\boldsymbol{z}}
\newcommand{\bZ}{\boldsymbol{Z}}
\newcommand{\Eta}{\boldsymbol{\eta}}

\newcommand{\btheta}{{\boldsymbol{\theta}}}

\newcommand{\bPhi}{\boldsymbol{\Phi}}

\newcommand{\0}{\boldsymbol{0}}
\newcommand{\1}{\boldsymbol{1}}

\newcommand{\dout}{d_{\text{out}}}

\newcommand{\calA}{\mathcal{A}}

\newcommand{\calD}{\mathcal{D}}

\newcommand{\calH}{\mathcal{H}}

\newcommand{\calR}{\mathcal{R}}

\newcommand{\calN}{\mathcal{N}}

\DeclareMathOperator*{\argmin}{arg\,min}

\renewcommand{\subset}{\subseteq}
\renewcommand{\hat}{\widehat}
\renewcommand{\tilde}{\widetilde}
\renewcommand{\epsilon}{\varepsilon}

\def\supp{\mathrm{supp}}

\def\<{\big\langle}
\def\>{\big\rangle}
\def\({\Big(}
\def\){\Big)}
\def\calA{\mathcal{A}}

\def\C{\mathbb{C}}
\def\calD{\mathcal{D}}
\def\E{\mathbb{E}}

\def\calH{\mathcal{H}}

\def\N{\mathbb{N}}
\def\calN{\mathcal{N}}

\def\R{\mathbb{R}}
\def\calR{\mathcal{R}}

\def\calX{\mathcal{X}}
\def\calY{\mathcal{Y}}

\newcommand{\St}{\mathbb{S}}

\def\Lls{\mathcal{L}_\text{ls}}

\def\Lgl{\mathcal{L}_\text{glm}}
\def\Lwn{\mathcal{L}_\text{wn}}

\def\xprod{\tilde{\bf x}}

\def\xprodinfty{\xprod_\infty}

\def\sigmainn{\rho}
\def\sigmaout{\sigma}
\def\Wstar{\bW_\star}
\def\tcont{{t}}
\def\tdisc{{k}}
\def\bWprod{\tilde{\bW}}

\def\lrr{{\tilde{\eta}}}

\newcommand{\diag}{\mathrm{diag}}

\newcommand{\pnorm}[2]{\left\| #1 \right\|_{#2}}